\documentclass[a4paper,11pt]{article}
\usepackage{soul}
\usepackage[usenames,dvipsnames]{color}
\usepackage{jheppub}
\usepackage{esvect}
\usepackage{relsize}
\usepackage{comment}
\usepackage{amsmath, amssymb, slashed, epsf, color, graphicx, latexsym, bm}
\usepackage{mathrsfs}   
\usepackage{tensor}
\usepackage{slashed}
\usepackage{fixme}
\usepackage{subcaption}
\fxsetup{status=draft}
\usepackage{physics}
\usepackage{tikz}
\usepackage{quiver}
\usepackage{tikz-cd}
\usetikzlibrary{positioning, calc}
\usepackage{dsfont}
\usepackage{pgfplots}
\pgfplotsset{compat=1.18}
\usepackage{epsfig}
\usepackage{graphics}
\usepackage[T1]{fontenc}
\usepackage{mathtools}
\usepackage{fixme}
\fxsetup{status=draft}
\usepackage{caption}
\usepackage{float}
\usepackage{bbm}
\usepackage{subcaption}
\usepackage{enumerate}
\usepackage{verbatim}
\usepackage{blindtext}
\makeatletter
\renewcommand*\env@matrix[1][*\c@MaxMatrixCols c]{%
  \hskip -\arraycolsep
  \let\@ifnextchar\new@ifnextchar
  \array{#1}}
\makeatother
\DeclareFontEncoding{LS1}{}{}
\DeclareFontSubstitution{LS1}{stix}{m}{n}
\DeclareSymbolFont{symbols2}{LS1}{stixfrak}{m}{n}
\DeclareMathSymbol{\typecolon}{\mathbin}{symbols2}{"25}
\usepackage[mathlines]{lineno}

\newcommand{\be}{\begin{equation}}
\newcommand{\ee}{\end{equation}}
\usepackage{environ}

\NewEnviron{eqsp}{%
  \begin{equation}
    \begin{split}
      \BODY
    \end{split}
  \end{equation}
}
\def\bM{\bm{M}}
\def\bN{\bm{N}}

\def\CB{{\cal B}}
\def\CC {{\cal C}}

\def\CF {{\cal F}}

\def\CL {{\cal L}}
\def\CM {{\cal M}}
\def\CN {{\cal N}}
\def\CO {{\cal O}}

\def\CR {{\cal R}}

\def\CO {{\cal O}}

\def\CB {{\cal B}}

\def\CS {{\cal S}}

\def\IC{\mathbb{C}}

\def\IH{\mathbb{H}}

\def\IN{\mathbb{N}}

\def\IQ{\mathbb{Q}}
\def\IR{{\mathbb{R}}}

\def\IX{{\mathbb{X}}}
\def\IZ{{\mathbb{Z}}}

\def\F\IX{\mathfrak{x}}

\def\F\IX{\mathfrak{X}}

\def\r0{\lceil r_0\rceil}
\def\ben{\begin{eqnarray}}
\def\een{\end{eqnarray}}
\def\eps{\epsilon}

\newcommand{\refb}[1]{(\ref{#1})}

\newcommand{\mr}[1]{\mathrm{#1}}

\newcommand{\wt}[1]{\widetilde{#1}}
\newcommand{\wh}[1]{\widehat{#1}}

\begin{document}
\title{Generating Function of single-centered Black Hole Index in CHL Models}
\author{Ranveer Kumar Singh}
\affiliation[]{NHETC and Department of Physics and Astronomy, Rutgers University, 
126 Frelinghuysen Rd., Piscataway NJ 08855, USA}
\emailAdd{ranveersfl@gmail.com}

\abstract{We present the construction of the generating function of single-centered black hole index in general $\mathbb{Z}_N$ CHL models. This is done by subtracting from the index of quarter BPS dyons, described by a meromorphic Siegel modular form, the generating function for the index of two-centered black holes. We use black hole bound state metamorphosis in CHL models for the construction of the generating function of two-centered black hole index. We prove the convergence of the generating function for the cases $N=2,3$. } 
\def\rem{$\clubsuit$}
\newcommand{\cas}[1]{\marginpar{\raggedright \tiny \rem AS: #1 \rem}} % comments by Ashoke
\newcommand{\caj}[1]{\marginpar{\raggedright \tiny \textcolor{blue}{\rem AB: #1 \rem}}}    % comments by Ajit
\newcommand{\cran}[1]{\marginpar{\raggedright \tiny \textcolor{OliveGreen}{\rem RKS: #1 \rem}}}    % comments by Ranveer

\maketitle

\section{Introduction and summary}
Black holes in string theory provide a perfect playground to test the consistency of quantum gravity. One of the most prominent examples is the matching of classical black hole entropy by counting microstates in string theory \cite{Strominger:1996sh,Sen:1995in}. A well-studied instance of this matching is in the 4d $\CN=4$ supersymmetric compactification of string theory \cite{Dijkgraaf:1996it,Dijkgraaf:1996xw,Verlinde:1997fm,LopesCardoso:2004law,Shih:2005uc}. More precisely, one considers the compactification of type II string theory on $K3\times T^2$. The resulting low energy effective theory is a 4d $\CN=4$ supergravity. The entropy of quarter BPS black holes in this $\CN=4$ supergravity can then be reproduced by counting configurations of non-perturbative objects in type II string theory on $K3\times T^2$, see \cite{Sen:2007qy} for a comprehensive review. 
\\\\
In this paper, we will be concerned with a general class of $\CN=4$ compactifications of type II string theory called CHL compactifications \cite{Chaudhuri:1995fk,Chaudhuri:1995bf}. CHL models are obtained by compactifying type II string theory on $\mathcal{M}\times T^2$, with $\mathcal{M}=K3, T^4$, with a $\IZ_N$ orbifold of the worldsheet CFT. The S-duality group of this theory is the following subgroup of $\mr{SL}(2,\IZ)$:
\begin{eqsp}
    \Gamma_1(N):=\left\{\begin{pmatrix}
        a&b\\c&d
    \end{pmatrix}\in\mr{SL}(2,\IZ):c\equiv 0\bmod N, a\equiv d\equiv 1\bmod N\right\}~.
\end{eqsp}
The low energy $\CN=4$ supergravity action up to two derivatives is completely characterized by the number of U(1) gauge fields. At generic points in the moduli space, there are $22+r$ U(1) gauge fields in the supergravity, where $r$ depends on $N$. The theory admits single-centered and two-centered BPS black hole solutions characterized by a $(22+r)$ dimensional electric charge vector $Q$ and a $(22+r)$ dimensional magnetic charge vector $P$. These charges satisfy the following quantization condition:
\begin{eqsp}
    \frac{Q^2}{2}\in\frac{1}{N}\IZ~,\quad \frac{P^2}{2}\in\IZ~,\quad Q\cdot P\in \IZ~.
\end{eqsp}
The moduli space of the theory is parametrized by a complex parameter $\tau_\infty$, called the axio-dilaton moduli and an $r\times r$ matrix valued scalar. For a given charge $(Q,P)$ and for fixed value of the scalar moduli, the moduli space parametrized by $\tau_\infty$ is divided into various chambers bounded by walls of marginal stability \cite{Sen:2007vb}. While single-centered black holes are insensitive to the walls of marginal stability, two-centered black holes may appear or disappear when we cross a wall of marginal stability \cite{Sen:2007pg,Cheng:2007ch}. Thus, the natural choice for testing quantum gravity is in matching the entropy of single-centered black holes by counting microstates. 
\\\\
The relevant object to compute in string theory is the index of quarter BPS dyons. The index of quarter BPS states carrying charges $Q,P$ is defined as \cite{Bachas:1996bp,Gregori:1997hi}
\begin{eqsp}
B_6(Q,P)=\frac{1}{6!} \mr{Tr}_{\mathscr{H}_{Q,P}}(-1)^F(2h)^{6}~,   
\end{eqsp}
where the trace is taken over the space $\mathscr{H}_{Q,P}$ of all states carrying
charges $Q,P$, $F$ is the fermion number operator and $h$ is the third component of the angular
momentum carried by the state. One can show that this  index gets contributions only from quarter BPS states. It was argued using $\mr{AdS_2/CFT_1}$ correspondence in \cite{Sen:2009vz} that the index is equal to negative of the degeneracy of the ensemble represented by the horizon of the single-centered black hole with charge $Q,P$. Hence the index can be used to compute the entropy of black holes.   
This index has been computed for a large class of quarter BPS dyons in CHL models. 
A dyon with charge $(Q,P)$ is called torsion 1 if the charge satisfies 
\be
\gcd\{ Q_i P_j - Q_j P_i\} = 1\, ,
\ee
in some basis for the charge lattice. 
The BPS index of torsion 1 quarter BPS dyons is known to be given by the Fourier
coefficients of the inverse of a Siegel modular form $\Phi_{k}$ of weight $k$ for a subgroup $G$ of 
$\mr{Sp}(2,\mathbb{Z})$ \cite{Jatkar:2005bh,David:2006ji,David:2006ru,David:2006ud,David:2006yn}. More precisely, $\Phi_k$ is a holomorphic function on the Siegel upper half space $\IH_2$ defined by 
\begin{eqsp}
    \IH_2:=\left\{\Omega:=\begin{pmatrix}
        \tau&z\\z&\sigma
    \end{pmatrix}:\tau,\sigma\in\IH,z\in\IC,\mr{Im}\,\Omega>0\right\}~,
\end{eqsp}
where $\IH:=\{x+iy:y>0\}$ is the upper half plane. The positive definiteness condition $\mr{Im}\,\Omega>0$ can be written more simply as follows: let us define 
\be
(\tau_2,\sigma_2,z_2)= {\rm Im}(\tau,\sigma,z), \qquad (\tau_1,\sigma_1,z_1)
= {\rm Re}(\tau,\sigma,z)\, ,
\ee
Then we can write 
\begin{eqsp}
    \IH_2=\left\{\Omega:=\begin{pmatrix}
        \tau&z\\z&\sigma
    \end{pmatrix}:\tau_2,\sigma_2>0,\tau_2\sigma_2-z_2^2>0\right\}~.
\end{eqsp}
$\Phi_k$ satisfies the modularity property
\begin{eqsp}
    \Phi_k((A\Omega+B)(C\Omega+D)^{-1})=\mr{det}(C\Omega+D)^{k}\Phi_k(\Omega)~,\quad \begin{pmatrix}
        A&B\\C&D
    \end{pmatrix}\in G\subset\mr{Sp}(2,\IZ)~.
\end{eqsp}
The weight $k$ and the subgroup $G$ depends on $N$ and the 4-manifold $\mathcal{M}$. 
The BPS index can then be expressed as appropriate Fourier integrals
of $1/\Phi_{k}$.
It turns out that the index of torsion 1 dyons 
is a function of only the quadratic 
combinations $Q^2,P^2$ and $Q\cdot P$ \cite{Dabholkar:2007vk,Banerjee:2007sr}. Let us introduce the notation
\be
m:= Q^2N / 2, \qquad n := P^2 / 2, \qquad \ell :=Q\cdot P\, , \qquad m, n,\ell\in \mathbb{Z}\,.
\ee
The index, which from now on we will denote\footnote{Note that $B_6(Q,P)=-d(m,n,\ell)$. We will abuse terminology and call $d(m,n,\ell)$ the index.} by $d(m,n,\ell)$, is then given by 
\be \label{edmnl}
d(m,n,\ell) = \frac{(-1)^{\ell+1}}{N} \int_{\mathcal{C}}
d\tau d\sigma dz\, e^{-2 \pi i\left(m\tau+n\sigma/N+\ell z\right)}
\frac{1}{\Phi_{k}(\Omega)}\, .
\ee
The integration contour $\CC$ is given by 
\begin{eqsp}
0\leq \tau_1,z_1<1~,\quad 0\leq \sigma_1<N~,    
\end{eqsp}
with a fixed $\tau_2,\sigma_2,z_2$.
\\\\
The Siegel modular form $\Phi_k^{-1}$ has double poles on the hypersurface 
\begin{eqsp}\label{eq:polesn2n0}
    n_2(\tau\sigma-z^2)-m_1\tau+n_1\sigma+jz+m_2~,\quad &m_1/N,m_2,n_1,n_2,j\in\IZ~,\\&m_1n_1+m_2n_2+\frac{j^2}{4}=\frac{1}{4}~.
\end{eqsp}
This causes the index to jump whenever we deform the contour of integration $\CC$ across a pole. This is related to the fact that for a fixed charge, the index generically counts microstates of both single and two-centered black holes and when we deform the contour $\CC$ across a pole of $\Phi_{k}^{-1}$, it corresponds to crossing a wall of marginal stability in the moduli space \cite{Sen:2007vb,Dabholkar:2007vk}. Thus, the $(\tau_2,\sigma_2,z_2)$ space labeling the integration contour 
can also be divided into chambers bounded by the hypersurfaces \eqref{eq:polesn2n0} and the
contour integrals performed in different chambers give different results. In fact, for a given set of charges,
there is a one-to-one correspondence between the
chambers in the moduli space and the chambers in the $(\tau_2,\sigma_2, z_2)$ space
in the region where
\be
\det \rm Im\, \Omega>{1\over 4} \, .
\ee
In this region the walls separating the different chambers  in the $(\tau_2,\sigma_2, z_2)$ space
lie along
\be \label{ewall}
-m_1\tau_2 + n_1\sigma_2 + j\, z_2 = 0, \qquad m_1/N,n_1,j\in\mathbb{Z}, \qquad
m_1n_1+{j^2\over 4}
={1\over 4}\, .
\ee   
The Fourier coefficient $d(m,n,\ell)$ of $\Phi_{k}^{-1}$
in a given chamber in the $(\tau_2,\sigma_2,z_2)$ space, defined by
\refb{edmnl} with $\tau_2,\sigma_2,z_2$ lying inside that chamber,  
gives the index in the
corresponding chamber in the moduli space. 
\\\\
As discussed before, the index typically includes contribution
from both single-centered black holes and two-centered black holes. However,
for a given $(m,n,\ell)$ satisfying
\be \label{erangepositive}
m\ge 0, \qquad n\ge 0, \qquad 4mn-\ell^2N\ge 0\, ,
\ee
there is a special chamber in the $(\tau_2,\sigma_2,z_2)$ space, known as the
attractor chamber, 
where the contribution from two-centered black holes vanish and we get single-centered
black hole index $d^*(m,n,\ell)$ carrying charges $(m,n,\ell)$. 
The precise expression for $d^*(m,n,\ell)$ is \cite{Cheng:2007ch}
\begin{equation}\label{edstarTint}
 d^{*}(m,n,\ell) =\left\{ \begin{split}
&\frac{(-1)^{\ell+1}}{N} \int_{\mathcal{C}_{m,n,\ell}} d\tau d\sigma dz\, e^{-2 \pi i\left(m\tau+n\sigma/N+\ell z\right)} \frac{1}{\Phi_{k}(\Omega)}~, \ \  \begin{array}{cc}
 m\ge 0, n\ge 0~,\\ 4mn-\ell^2N\ge 0~,     
\end{array} \\
& 0~, \qquad \hbox{\rm otherwise}~, \,
    \end{split}\right. 
\end{equation}
where the contour $\mathcal{C}_{m,n,\ell}$ is given as 
\begin{equation}
\begin{split}
\mathcal{C}_{m,n,\ell}:\quad 
\tau_2=\frac{2n}{N\varepsilon}, \quad \sigma_2=\frac{2m}{\varepsilon}, \quad z_2=-\frac{\ell}{\varepsilon},\qquad
0 \leq \tau_1,z_1<1~,\quad 0\leq \sigma_1<N~.
\end{split}
\label{eq 3.3}
\end{equation}
Here $\varepsilon>0$ is a real positive number, sufficiently small so that  
$\det \rm Im \, \Omega$ is 
larger than $1/4$.  
Following the same arguments as in \cite{Bhand:2025ghn},  for zero discriminant states satisfying $4mn-\ell^2N=0$, $m,n\ge 0$,
one can deform the contour to
\be \label{edeformedcontour}
\tau_2=\frac{2n+\varepsilon_1}{N\varepsilon}, \quad \sigma_2=\frac{2m+\varepsilon_2}{\varepsilon}, \quad 
z_2=-\frac{\ell+\varepsilon_3}{\varepsilon}, \qquad 0<|\varepsilon_1|,|\varepsilon_2|,|\varepsilon_3|\ll 1\, ,
\ee
for appropriately chosen small parameters $\varepsilon,\varepsilon_1,\varepsilon_2,\varepsilon_3$ to define $d^*(m,n,\ell)$ using \refb{edstarTint} unambiguously. 
Let us introduce the notation
\be\label{eq:T_def}
T  := \begin{pmatrix}
    m & \ell/2\\ \ell/2 & n/N
\end{pmatrix}\, .
\ee
Equation \refb{edstarTint} can be written as
\begin{equation}\label{edstarTintOmega}
d^{*}(T) =\left\{\begin{split}
    &\frac{(-1)^{2T_{12}+1}}{N} \int_{\CC_T} 
d^3( {\rm Re}\,\Omega)\, e^{-2 \pi i{\rm Tr} (\Omega T)} \frac{1}{\Phi_{k}(\Omega)}~,\quad 
\hbox{for $T\ge 0$}~, \\
&0~, \qquad \hbox{otherwise}\, ,
    \end{split}\right. 
\end{equation}
where $\CC_T$ is defined in \eqref{eq 3.3} with the identification \eqref{eq:T_def} and $T\ge 0$ means that $T$ has non-negative eigenvalues. Using the symmetries of $\Phi_k$, one can show that 
% It follows from \refb{edstarTintOmega} and the identities
% \be
% \Phi_{k} (\gamma\Omega \gamma^t) = \Phi_{k}(\Omega), \quad
% \gamma \gamma_0 \gamma^t=\gamma_0, \quad \gamma^t \gamma_0 \gamma=\gamma_0~,\quad d^3(\mr{Re}\,\Omega)=d^3(\mr{Re}\,\gamma\Omega\gamma^t)~, 
% \quad \hbox{for} \ \gamma\in \Gamma_1(N)\, ,
% \ee
$d^*$ is invariant under an $\Gamma_1(N)$ transformation \cite{Jatkar:2005bh}: 
\be\label{esl2zT}
d^*(\gamma^t T\gamma) = d^*(T), \qquad \gamma\in \Gamma_1(N)\, .
\ee
For $\gamma=\begin{pmatrix} a & b\cr c & d\end{pmatrix}$ the transformation
$T\mapsto \gamma^t T \gamma$ takes the form:
\be\label{e2}
\begin{pmatrix}
    m\\ n\\ \ell
\end{pmatrix}\mapsto \begin{pmatrix}
    a^2 m + c^2 n/N + ac \ell\cr b^2 m + d^2 n + bdN\ell\cr
2 ab m + 2 cd n/N + (ad + bc)\ell
\end{pmatrix}\, .
\ee
Since the index $d^*(m,n,\ell)$ is counting states, it must be a positive integer \cite{Sen:2011ktd}. This precise mathematical statement about the Fourier coefficients of $\Phi_k^{-1}$ in appropriate chamber is called the Sen's conjecture. 
To study such properties, it is desirable to construct the generating function of single-centered degeneracies. The first such attempt was made in \cite{Dabholkar:2012nd}, and later generalized in \cite{Chattopadhyaya:2018xvg,Bhand:2023rhm,Banerjee:2025bqi}. It was shown that the degeneracy of single-centered black holes with a given set of charges (with some restrictions discussed in \cite{Bhand:2023rhm,Banerjee:2025bqi}) is the Fourier coefficient of a holomorphic function which special transformation properties defining what are called mock Jacobi forms \cite{Dabholkar:2012nd,bringmann2017harmonic}. The basic idea was to decompose the coefficient $\phi_m(\sigma,z)$ of $e^{2\pi im\tau}$ in the expansion of $\Phi_k^{-1}$ in a given chamber into  holomorphic and polar part $\phi_m=\phi^F_m+\phi^P_m$ and starting from the Fourier integral \eqref{edstarTint},  relate the degeneracy $d^*(m,n,\ell)$ to the appropriate Fourier coefficient of $\phi^F_m$. In physical terms, the polar part $\phi^P$ contains the contributions to the index from two-centered black holes in the chamber where $\Phi_k^{-1}$ was expanded to obtain $\phi_m$. The holomorphic part $\phi_m^F$ is the thus the natural candidate for the generating function of single-centered degeneracies for fixed electric charge $m$. The generating function in terms of mock Jacobi form was used to prove Sen's conjecture for some cases \cite{Bringmann:2012zr,Rossello:2024qyi}. On the other hand, there are also some shortcomings of the generating function in terms of mock Jacobi forms:
\begin{enumerate}
    \item The holomorphic part $\phi_m^F$  contains Fourier coefficients corresponding to negative discriminant charges, i.e., charge $m,n,\ell$ with $4mn-\ell^2N<0$, for which no single-centered black holes exist.
    \item The range of charges for which one can determine the degeneracy of single-centered black holes from $\phi_m^F$ is rather restricted \cite{Bhand:2023rhm,Banerjee:2025bqi}.
    \item The S-duality invariance \eqref{esl2zT} of single-centered degeneracy is obscure from mock Jacobi forms\footnote{Part of S-duality invariance corresponding to $\begin{pmatrix}
        1&r\\0&1
    \end{pmatrix}\in \mr{SL}(2,\IZ)$, called the spectral flow invariance follows from the elliptic transformation of mock Jacobi forms.}. 
\end{enumerate}  
All three of these issues were resolved in \cite{Bhand:2025ghn} for the CHL model corresponding to $\CM=K3,N=1$. The main idea was to subtract the contribution of \textit{all} two-centered black holes from the index in a given chamber. The main goal of this paper is to generalize the results of \cite{Bhand:2025ghn} to other CHL models. To this end,  
following \cite{Bhand:2025ghn}, we define the generating function of single-centered
black hole index by:
\be\label{edefFO}
F_k(\Omega) = N\sum_{m,n,\ell\in {\mathbb{Z}}} d^*(m,n,\ell)\, (-1)^{\ell+1}\, 
e^{2 \pi i\left(m\tau+n\sigma/N+\ell z\right)}=N\sum_T (-1)^{2 T_{12}+1}\, d^*(T) e^{2\pi i {\rm Tr}(\Omega T)}\, .
\ee
If we can prove the absolute convergence of $F_k$, then it follows from \refb{edefFO} and \refb{esl2zT} that 
$F_k(\Omega)$ is also $\Gamma_1(N)$ invariant:
\be
F_k(\gamma\Omega\gamma^t)=F_k(\Omega)\, ,\quad \gamma\in\Gamma_1(N)~.
\ee
The convergence of a subseries of $F_k(\Omega)$ where $m,n,\ell$ grow together is identical to the proof in \cite{Bhand:2025ghn} using the known growth of $d^*(m,n,\ell)$ \cite{Jatkar:2005bh,Sen:2007qy}. 
\par 
As discussed above, we can also define the single-centered index by subtracting the index of two-centered black holes from the total index computed from \refb{edmnl}. Let $d(m,n,\ell)$ be the total index in a given chamber in the moduli space and $d^{\text{two}}(m,n,\ell)$ denote the contribution to the index from
two-centered black holes in the same
chamber of the moduli space. Then we can define the single-centered index by:
\begin{eqsp}\label{eq:deg_multi_single}
\wt d^*(m,n,\ell)=d(m,n,\ell)-d^{\text{two}}(m,n,\ell)~.    
\end{eqsp}
Let us write the generating function for $\wt d^*$ as
\be\label{edeftildeF}
\wt F_k(\Omega) = N\sum_T \, (-1)^{2 T_{12}+1} \, \wt d^*(T) \, e^{2\pi i \mr{Tr}(\Omega T)}\, .
\ee
By analyzing the bound state metamorphosis \cite{Dabholkar:2007vk,Sen:2007pg,Cheng:2007ch,1104.1498,1210.4385,
Chowdhury:2019mnb,LopesCardoso:2020pmp} of two-centered black holes, we derive the generating function $\wt F_k(\Omega)$. We get
the following expression for $\wt F_k(\Omega)$:
\begin{eqsp}
\label{eguessfinintro}
\wt F_k(\Omega) &= {1\over \Phi_{k}(\Omega)}
-\varepsilon_N \sum_{\big{(}\begin{smallmatrix} a & b\cr c & d\end{smallmatrix}\big{)}\in 
\mr{P}\Gamma_1(N)}
\left(e^{\pi i \{ac\tau + bd\sigma + (ad+bc)z\}} - e^{-\pi i \{ac\tau + bd\sigma + (ad+bc)z\}}\right)^{-2}  \\ & 
\hskip 1in 
\times\ f_+(a^2\tau +b^2\sigma +2abz) \ g_+(c^2\tau+d^2\sigma+2cd z) \\ 
&-\varepsilon_N 
\sum_{\big{(}\begin{smallmatrix} a & b\cr c & d\end{smallmatrix}\big{)}\in \mr{P}\Gamma_1(N)}
\sum_{r>0}  g_{-1} r  H(ac\tau_2 + bd\sigma_2 + (ad+bc)z_2) \\ &\hskip 1in \times
H\left(-ac\tau_2 - bd\sigma_2 - (ad+bc)z_2 + rNa^2\tau_2 + rNb^2\sigma_2+2rNab z_2
\right) \,
\\&\hskip 1in  \times 
e^{2\pi i \{-(c^2\tau+d^2\sigma+2cdz)/N+r( ac\tau+bd\sigma+(ad+bc)z )\}}f_+(a^2\tau+b^2\sigma+2abz) 
\\ &-\varepsilon_N 
\sum_{\big{(}\begin{smallmatrix} a & b\cr c & d\end{smallmatrix}\big{)}\in \mr{P}\Gamma_1(N)}
\sum_{r>0}  f_{-1} r  H(ac\tau_2 + bd\sigma_2 + (ad+bc)z_2) \\ &\hskip 1in \times\
H\left(-ac\tau_2 - bd\sigma_2 - (ad+bc)z_2 - rc^2\tau_2 + rd^2\sigma_2+2rcd z_2
\right) \,
\\&\hskip 1in  \times 
e^{2\pi i \{-(a^2\tau+b^2\sigma+2abz)+r( ac\tau+bd\sigma+(ad+bc)z )\}}g_+(c^2\tau+d^2\sigma+2cdz) \\ 
&-  f_{-1}g_{-1} \sum_{r>0} r\, 
\sum_{\big{(}\begin{smallmatrix} a & b\cr c & d\end{smallmatrix}\big{)}\in G^N_r\backslash \mr{P}\Gamma_1(N)^+}\hskip .1in 
 \bigg\{\prod_{n=-\infty}^\infty 
H(a_nc_n\tau_2 + b_nd_n\sigma_2 + (a_nd_n+b_nc_n)z_2)
\bigg\}
\\ & \hskip 1in \times \
e^{2\pi i \{(-a^2-c^2/N+r ac)\tau+(-b^2-d^2/N+r bd)\sigma+
(-2ab-2cd/N + r(ad+bc)) z\}}  \, .
\end{eqsp}
where
\begin{eqsp}
    \mr{P}\Gamma_1(N)=\begin{cases}
        (\Gamma_1(N)\sqcup-\Gamma_1(N))/\{\pm 1\},&N>2~,
        \\
        \Gamma_1(N)/\{\pm 1\},&N=1,2~,
    \end{cases}
\end{eqsp}
\begin{eqsp}
    \varepsilon_N=\begin{cases}
        1,&N>1~,
        \\
        \frac{1}{2},&N=1~,
    \end{cases}
\end{eqsp}
$H$ is the Heaviside function
\be
H(x) = \begin{cases} \hbox{1 for $x > 0$}\cr
\hbox{0 for $x\le 0$} \end{cases} \, ,
\ee
the $f_p,g_q$ are the Fourier coefficients of the inverse of $f^{(k)},g^{(k)}$ defined in \eqref{eq:fk-gk-def1}--\eqref{eq:fk-gk-def4}:
\begin{eqsp}\label{eq:fk-gk-exp}
   \frac{1}{f^{(k)}(\tau)}=:\sum_{p\geq -1}f_pe^{2\pi ip\tau}~,\quad  \frac{1}{g^{(k)}(\sigma)}=:\sum_{q\geq -1}g_qe^{2\pi iq\sigma/N}~,
\end{eqsp}
and the functions $f_+(\tau),g_+(\sigma)$ are defined as
\begin{eqsp}
f_+(\tau)=\sum_{p\geq 0}f_pe^{2\pi ip\tau}~,\quad  g_+(\sigma)=\sum_{q\geq 0}g_qe^{2\pi iq\sigma/N}~,    
\end{eqsp}
% \be \label{e122}
% \eta(\tau)^{-24} = \sum_{p=-1}^\infty f_p \, e^{2\pi ip \tau}, \qquad f_+(\tau) 
% = \sum_{p=0}^\infty f_p \, e^{2\pi i p\tau}\, ,
% \ee
% where $\eta$ is the Dedekind eta function, 
$G^N_r$ is the cyclic group generated by 
\be
\begin{pmatrix}0 & -1/\sqrt{N}\cr \sqrt{N} & -r\sqrt{N} \end{pmatrix}~,
\ee
$\mathrm{P}\Gamma_1(N)^+=\mathrm{P}\Gamma_1(N)\cup\gamma_N\mathrm{P}\Gamma_1(N)$, where 
\begin{eqsp}
    \gamma_N:=\begin{pmatrix}
        0&-\frac{1}{\sqrt{N}}\\\sqrt
        N&0
    \end{pmatrix}~,
\end{eqsp}
and 
\be\label{edefanbnintro}
\begin{pmatrix} a_n & b_n\cr c_n & d_n\end{pmatrix}:=\begin{pmatrix}0 & -1/\sqrt{N}\cr \sqrt{N} & -r\sqrt{N} \end{pmatrix}^n
\begin{pmatrix} a & b\cr c & d\end{pmatrix}\, .
\ee
The generating function simplifies slightly for CHL models which posses the addition S-duality symmetry corresponding to $\gamma_N$,
that is, the Siegel modular form satisfies 
\begin{eqsp}\label{eq:gamma_N_inv_Phik}
    \Phi_k(\gamma_N\Omega\gamma_N^t)=\Phi_k(\Omega)~.
\end{eqsp}
This holds for example for $\CM=K3$ \cite{David:2006yn,Persson:2015jka}. 
The generating function for such CHL model takes the form 
\begin{eqsp}
\label{eguessfinintroK3}
\wt F_k(\Omega) &= {1\over \Phi_{k}(\Omega)}
-\frac{1}{2} \sum_{\big{(}\begin{smallmatrix} a & b\cr c & d\end{smallmatrix}\big{)}\in 
\mr{P}\Gamma_1(N)^+}
\left(e^{\pi i \{ac\tau + bd\sigma + (ad+bc)z\}} - e^{-\pi i \{ac\tau + bd\sigma + (ad+bc)z\}}\right)^{-2}  \\ & 
\hskip 1in 
\times\ f_+(a^2\tau +b^2\sigma +2abz) \ g_+(c^2\tau+d^2\sigma+2cd z) \\ 
&-
\sum_{\big{(}\begin{smallmatrix} a & b\cr c & d\end{smallmatrix}\big{)}\in \mr{P}\Gamma_1(N)^+}
\sum_{r>0}  g_{-1} r  H(ac\tau_2 + bd\sigma_2 + (ad+bc)z_2) \\ &\hskip 1in \times
H\left(-ac\tau_2 - bd\sigma_2 - (ad+bc)z_2 + rNa^2\tau_2 + rNb^2\sigma_2+2rNab z_2
\right) \,
\\&\hskip 1in  \times 
e^{2\pi i \{-(c^2\tau+d^2\sigma+2cdz)/N+r( ac\tau+bd\sigma+(ad+bc)z )\}}f_+(a^2\tau+b^2\sigma+2abz) 
\\  
&-  f_{-1}g_{-1} \sum_{r>0} r\, 
\sum_{\big{(}\begin{smallmatrix} a & b\cr c & d\end{smallmatrix}\big{)}\in G^N_r\backslash \mr{P}\Gamma_1(N)^+}\hskip .1in 
 \bigg\{\prod_{n=-\infty}^\infty 
H(a_nc_n\tau_2 + b_nd_n\sigma_2 + (a_nd_n+b_nc_n)z_2)
\bigg\}
\\ & \hskip 1in \times \
e^{2\pi i \{(-a^2-c^2/N+r ac)\tau+(-b^2-d^2/N+r bd)\sigma+
(-2ab-2cd/N + r(ad+bc)) z\}}  \, .
\end{eqsp}
The main technical challenge is in proving the convergence of the series \eqref{eguessfinintro}. We present the steps in the proof of the convergence of \eqref{eguessfinintro} for general $N$ wherever possible but achieve this goal fully only for $N=2,3$. The proof for $N>3$ presents challenges which probably requires novel methods to solve. We now summarize the main results of this paper. 
\begin{enumerate}
\item In Section \ref{section3}, we provide a physical derivation of \eqref{eguessfinintro} based on bound state metamorphosis \cite{Dabholkar:2007vk,Sen:2007pg,Cheng:2007ch,1104.1498,1210.4385,Chowdhury:2019mnb}. Bound state metamorphosis has been reviewed in Appendix \ref{app:BSM}.
\item In Section \ref{stildeFconverge}, in Theorem \ref{thm:S_conv}, we show for $N=2,3$, that the sum over $a,b,c,d,r$ in each of the terms in
\refb{eguessfinintro} converges absolutely and uniformly on compact subsets of
$\IH_2$ except on the subspaces
\be\label{epoleint}
-m_1\tau + n_1\sigma + m_2 + j\, z = 0, \qquad
m_1/N,n_1,m_2,j\in \mathbb{Z}, \qquad m_1n_1  +
{j^2\over 4} = {1\over 4}\, .
\ee
\item In Section \ref{ssection5}, we show that on the subspaces \refb{epoleint} 
the sums in \refb{eguessfinintro}  have double poles and these poles precisely cancel the poles
of $1/\Phi_{k}(\Omega)$ at \refb{epoleint}.
We show that despite the presence of the Heaviside functions in its definition,
$\wt F_k(\Omega)$ admits a meromorphic continuation to $\IH_2$
with double poles at 
\begin{eqnarray}
&& n_2 (\tau\sigma - z^2) - m_1\tau + n_1\sigma + m_2 + j\, z = 0~, \nonumber \\
&&
m_1/N,n_1,m_2,n_2,j\in \mathbb{Z}, \qquad n_2\ge 1, \qquad m_1n_1 + m_2 n_2 +
{j^2\over 4} = {1\over 4}\, .
\end{eqnarray}
The behavior at these poles coincide with those of $1/\Phi_{k}(\Omega)$. 
However $1/\Phi_{k}(\Omega)$ has additional poles for 
$n_2=0$ which are absent in $\wt F_k(\Omega)$.
The proof of the fact that that the two different definitions of single-centered index and their generating functions
coincide:
\be\label{etildeFFequality}
d^*(T)=\wt d^*(T)\quad \ \forall \ T, \qquad F_k(\Omega)=\wt F_k(\Omega)\, ,
\ee
including contributions from states where $T$ has a negative eigenvalue is similar to \cite{Bhand:2025ghn}.
Since both $F_k(\Omega)$ and $\wt F_k(\Omega)$ are defined by analytic continuation from their
domains of convergence, a more precise statement will be that the analytic continuations
of $F_k(\Omega)$ and $\wt F_k(\Omega)$ agree.
\item In Section \ref{sec:conc}, we list some future directions. 
\end{enumerate}

\section{Construction of the generating function  $\wt F_k(\Omega)$} \label{section3}

%The analysis in Section \ref{sgenerating} focussed on the construction of thegenerating function of single-centered black holes directly from the coefficients$d^*(m,n,\ell)$. For different $(m,n,\ell)$, these coefficients have to be calculated by working in different chambers in the $(\tau_2,\sigma_2,z_2)$ space and then wehave to use them to build the generating function. 
Our goal in this section is
to construct the generating function of the index of single-centered black holes by taking the difference between the complete generating function for black holes 
and the generating function for two-centered black holes. We
begin with the partition function $\Phi_{k}^{-1}$ of the CHL model obtained by the $\IZ_N$ orbifold of the worldsheet CFT of type II string theory compactified on $\CM\times T^2$, where $\CM=K3,T^4$. We will work in a fixed chamber, and then subtract from it, the 
generating function of two-centered black holes in the same chamber.

To introduce $\Phi_{k}^{-1}$, we need to introduce the elliptic genus of the $\IZ_N$-orbifold of the sigma model with target $\CM$: suppose $\widetilde{g}$ generates the $\IZ_N$ action on $\CM$ which we are orbifolding, then the elliptic genus is given by 
\begin{equation}
F^{(r,s)}(\tau, z) :=\frac{1}{N} \mr{Tr}_{\mr{RR};\widetilde{g}^r} \left( \widetilde{g}^s (-1)^{F_L+F_R} e^{2\pi i \tau (L_0-\frac{c_L}{24})} e^{-2\pi i \bar{\tau} (\bar{L}_0-\frac{c_R}{24})} e^{2\pi i F_L z} \right), \quad 0 \leq r,s \leq N-1~,
\end{equation}
where $\mr{Tr}_{\mr{RR};\widetilde{g}^r}$ denotes trace over all the RR sector states twisted by $\wt g^r$ before we project on to $\wt g^s$ invariant states, $L_0,\bar{L}_0$ denotes the left and rightmoving Virasoro
generators with central charge $c_L,c_R$ respectively, and $F_L$ and $F_R$ denote the worldsheet fermion numbers associated with left and rightmoving sectors in this SCFT. Due
to the insertion of $(-1)^{F_R}$ factor in the trace the contribution to $F^{(r,s)}$
comes only from the $\bar{L}_0-c_R/24=0$
states. As a result $F^{(r,s)}$ does not depend on $\overline{\tau}$. Let us expand the elliptic genus as 
\begin{eqsp}
    F^{(r, s)}(\tau, z) = \sum_{b=0}^1 \sum_{\substack{\ell \in 2 \mathbb{Z}+b \\ n \in \mathbb{Z} / N}} c_b^{(r, s)}\left(4 n-\ell^2\right) q^n \zeta^{\ell}, \quad q=e^{2\pi i \tau},\quad \zeta=e^{2\pi i z}~,
\end{eqsp}
with the coefficients $c_b^{(r, s)}$ taking different values for different CHL models. The partition function is given by \cite{David:2006ji,Sen:2007qy}
\begin{equation}
\begin{aligned}
& \Phi_k(\tau, z, \sigma)=e^{2 \pi i(\widetilde{\alpha} \tau+\widetilde{\gamma} \sigma+z)} \\
& \quad \times \prod_{b=0}^1 \prod_{r=0}^{N-1} \prod_{\substack{k^{\prime} \in \mathbb{Z}+\frac{r}{N}, j \in 2 \IZ+b, l \in \mathbb{Z} \\
k^{\prime}, l \geq 0, j<0 \text { for } k^{\prime}=l=0}}\left(1-\exp \left(2 \pi i\left(k^{\prime} \sigma+l \tau+j z\right)\right)\right)^{\sum_{s=0}^{N-1} e^{-2 \pi i s l / N} c_b^{(r, s)}\left(4 k^{\prime} l -j^2\right)}~,
\end{aligned}
\end{equation}
where
\begin{equation}
\widetilde{\alpha}:=\frac{1}{24 N} Q_{0,0}-\frac{1}{2 N} \sum_{s=1}^N Q_{0, s} \frac{e^{2 \pi i s / N}}{\left(1-e^{2 \pi i s / N}\right)^2}, \quad \widetilde{\gamma}:=\frac{1}{24 N} Q_{0,0}=\frac{1}{24 N} \chi(\mathcal{M})
\end{equation}
with $Q_{r, s}$ defined by
\begin{equation}
Q_{r, s}:=N\left(c_0^{(r, s)}(0)+2 c_1^{(r, s)}(-1)\right)~ .
\end{equation}
Here $\chi(\mathcal{M})$ denotes the Euler number of the manifold $\mathcal{M}$. One can show that $\tilde{\alpha}, N \tilde{\gamma} \in \mathbb{Z}$ \cite{Sen:2007qy}.
The function $\Phi_k$ transforms as a Siegel modular form of weight $k$ for a subgroup $G$ of $\mr{Sp}(2,\IZ)$. Explicitly, $k$ is given by
\begin{equation}
k=\frac{1}{2} \sum_{s=0}^{N-1} c_0^{(0, s)}(0)~.
\end{equation}
For $\CM=K3$, $k$ has a simple formula in terms of $N$ for $N\leq 8$:
\begin{equation}
    k=\begin{cases}
        \frac{24}{N+1}-2,&N=1,2,3,5,7;\\3,&N=4;\\2,& N=6;\\1,& N=8~.
    \end{cases}
\end{equation}
Note that for $N=1$ and $\CM=K3$, the CHL partition function $\Phi_k^{-1}$ is exactly the inverse of the Igusa cusp form $\Phi_{10}$ and was analyzed in detail in \cite{Bhand:2025ghn}.

$1/\Phi_{k}(\Omega)$ has double poles at \cite{LopesCardoso:2004law,Sen:2007qy} 
\be \label{epoles}
n_2(\tau\sigma-z^2) - m_1\tau +n_1\sigma + jz
+ m_2=0, \quad
m_1/N,n_1,m_2,n_2,j\in \IZ,
\quad m_1 n_1+m_2n_2 + {j^2\over 4}={1\over 4}\, .
\ee
% For $n_2\neq 0$, we can rewrite the hypersurface \eqref{epoles} as 
% \begin{eqsp}\label{eq:poles_new}
%     n_2\left[\left(\tau+\frac{n_1}{n_2}\right)\left(\sigma-\frac{m_1}{n_2}\right)-\left(z-\frac{j}{2n_2}\right)^2\right]+\frac{1}{4n_2}=0~.
% \end{eqsp}
% Let us write 
% \begin{eqsp}
%     \tilde{\tau}_1=\tau_1+\frac{n_1}{n_2}~,\quad \tilde{\sigma}_1=\sigma_1-\frac{m_1}{n_2}~,\quad \tilde{z}_1=z_1-\frac{j}{2n_2}~.
% \end{eqsp}
% Then the imaginary part of \eqref{eq:poles_new} gives 
% \begin{eqsp}\label{eq:im_poles_new}
%     \tilde{\tau}_1=\frac{2\tilde{z}_1z_2-\tau_2\tilde{\sigma}_1}{\sigma_2}~,
% \end{eqsp}
% and the real part of \eqref{eq:poles_new} gives 
% \begin{eqsp}\label{eq:re_poles_new}
%     n_2\left[\tilde{\tau}_1\tilde{\sigma}_1-\tau_2\sigma_2-\tilde{z}_1^2+z_2^2\right]+\frac{1}{4n_2}=0~.
% \end{eqsp}
% Combining \eqref{eq:im_poles_new} and \eqref{eq:re_poles_new} we get 
% \begin{eqsp}\label{eq:im_re_poles_new}
% {\tau_2\over \sigma_2}\tilde{\sigma}_1^2 + \tilde{z}_1^2 - 2\, {z_2\over \sigma_2}\,
% \tilde{\sigma}_1 \tilde{z}_1 = {1\over 4n_2^2} - (\tau_2\sigma_2 - z_2^2)~.    
% \end{eqsp}
% The left hand side of \eqref{eq:im_re_poles_new}
% is a positive definite quadratic form in the variables $\tilde{\sigma}_1,\tilde{\tau}_1$ for $\tau_2\sigma_2>z_2^2$. Therefore,
% a solution to these equations exists only when the right hand side is positive, i.e.
% when $\det \mathrm{Im} \,\Omega= \tau_2\sigma_2-z_2^2 \le 1/4n_2^2$. In particular, 
Following the same argument as in \cite{Bhand:2025ghn}, a sufficient condition for absence of poles with $n_2\neq 0$ is $\det \mathrm{Im} \,\Omega= \tau_2\sigma_2-z_2^2 > 1/4$.

This shows that, for  $\det\rm Im\,\Omega > 1/4$ we
only need to examine the poles corresponding to $n_2=0$ in \refb{epoles}. 
One of these, corresponding
to $m_1=m_2=n_1=0$, $j=1$, is the pole at $z=0$.
Near the double pole $z\to 0$, we have
\be\label{epolestructure}
{1\over \Phi_{k}(\Omega)} = \frac{(e^{\pi i z}- e^{-\pi i z})^{-2}}{f^{(k)}(\tau) g^{(k)}(\sigma)}\, ,
\ee
where 
\begin{align}\label{eq:fk-gk-def1}
f^{(k)}(\tau)&=e^{2 \pi i \widetilde{\alpha} \tau} \prod_{r=1}^{\infty}\left(1-e^{2 \pi i r \tau}\right)^{s_r}~, \\
g^{(k)}(\sigma)&=e^{2 \pi i \widetilde{\gamma} \sigma} \prod_{r=0}^{N-1} \prod_{\substack{k^{\prime} \in \mathbb{Z}+r / N \\
k^{\prime}>0}}\left(1-e^{2 \pi i k^{\prime} \sigma}\right)^{t_r}~, \\
s_r=\frac{1}{N} \sum_{s^{\prime}=0}^{N-1} e^{-2 \pi i r s^{\prime} / N} Q_{0, s^{\prime}}&=\sum_{s^{\prime}=0}^{N-1} e^{-2 \pi i r s^{\prime} / N}\left(c_0^{(0, s^{\prime})}(0)+2 c_1^{(0, s^{\prime})}(-1)\right)~, \\\label{eq:fk-gk-def4}
t_r=\frac{1}{N} \sum_{s=0}^{N-1} Q_{r, s}&=\sum_{s=0}^{N-1}\left(c_0^{(r, s)}(0)+2 c_1^{(r, s)}(-1)\right)~ .
\end{align}
The functions $f^{(k)},g^{(k)}$ have the interpretation of being the inverse of the partition function of the purely magnetic and purely electric half-BPS states respectively. We will expand these functions as in \eqref{eq:fk-gk-exp}.

The behavior of $1/\Phi_{k}$ near other
double poles of the type given in \refb{epoles} with $n_2=0$ is obtained by $\Gamma_1(N)$-transformation of \refb{epolestructure} given by $\Omega\to\gamma\Omega\gamma^t$ since one can
show, by arguments analogous to \cite[Lemma 5.1]{Bhand:2025ghn}, that every other pole for $n_2=0$ can be related to the pole at $z=0$ using
a $\Gamma_1(N)$-transformation. 
The Fourier coefficients of the RHS of \eqref{epolestructure} in a given chamber are known to give the contribution of two-centered
black holes to the index and are responsible for the jump in the index
across the walls of marginal stability \cite{Sen:2007pg,Cheng:2007ch,Sen:2007qy}. 
Thus the naive expectation would be that to get the generating function of
single-centered index, we should remove from $1/\Phi_{k}$ the contribution on the right
hand side of \refb{epolestructure} and its $\Gamma_1(N)$-images.
The naive guess for the generating function is then:
\begin{eqnarray}
\label{eguess}
&& {1\over \Phi_{k}(\Omega)}
-\varepsilon_N \sum_{\big{(}\begin{smallmatrix} a & b\cr c & d\end{smallmatrix}\big{)}\in \mr{P}\Gamma_1(N)}
\left(e^{\pi i \{ac\tau + bd\sigma + (ad+bc)z\}} - e^{-\pi i \{ac\tau + bd\sigma + (ad+bc)z\}}\right)^{-2} \nonumber \\ && 
\hskip 1in 
\times\ f^{(k)}(a^2\tau +b^2\sigma +2abz) \ g^{(k)}(c^2\tau+d^2\sigma+2cd z) \nonumber \\ 
&=&  {1\over \Phi_{k}(\Omega)}
- \varepsilon_N\sum_{\big{(}\begin{smallmatrix} a & b\cr c & d\end{smallmatrix}\big{)}\in \mr{P}\Gamma_1(N)}
\left(e^{\pi i \{ac\tau + bd\sigma + (ad+bc)z\}} - e^{-\pi i \{ac\tau + bd\sigma + (ad+bc)z\}}\right)^{-2}  \nonumber \\ &&
\hskip 1in \times \ \sum_{p,q=-1}^\infty f_p g_q \,
 e^{2\pi i \, p\{a^2\tau +b^2\sigma +2abz\}} \,  
e^{2\pi i \frac{q}{N}\{c^2\tau+d^2\sigma+2cd z\}} \, ,
\end{eqnarray}
where 
\begin{eqsp}
    \varepsilon_N:=1-\frac{1}{2}\delta_{N,1}=\begin{cases}
        1,&N>1~,\\\frac{1}{2},&N=1~,    \end{cases}
\end{eqsp}
for $N=1$ compensates for the fact that
the summand in (\ref{eguess}) remains 
invariant under $(a,b,c,d)\to (-c,-d,a,b)$ and hence each term effectively appears
twice.  
%$[~]_F$ involves certain restriction on the terms in the sum that will be explained below. 
% It is easy to see however that this cannot be the correct result.
% For example, if we take $(a,b,c,d)=(1,0,c,1)$ with $c\in\mathbb{Z}$,
% then for $q=-1$, the term in the third and fourth line of \refb{eguess} will involve a sum of the form:
% \be \label{e314}
% \sum_{c\in\mathbb{Z}} (e^{\pi i(c\tau+z)}-e^{-\pi i (c\tau+z)})^{-2} 
% e^{-2\pi i (c^2\tau+\sigma + 2cz)}\,e^{2\pi ip\tau}\, .
% \ee
% Since $\tau$ has positive imaginary part, the sum over $c$ will diverge. 
% As we shall see, the correct result will not suffer from such divergences.
For the same reason as explained in \cite{Bhand:2025ghn}, this cannot be the correct generating function since the sum diverges. 

We shall now carry out a more detailed analysis on why \refb{eguess} might 
differ from the partition function of two-centered black hole states.
First consider the coefficient
of the terms involving $f_p f_q$ for $p,q\ge 0$. By expanding 
$\left(e^{\pi i \{ac\tau + bd\sigma + (ad+bc)z\}} - e^{-\pi i \{ac\tau + bd\sigma + (ad+bc)z\}}\right)^{-2}$
in a power series expansion in $e^{\pm\pi i \{ac\tau + bd\sigma + (ad+bc)z\}}$ in the respective domains
of convergence, we can express the term subtracted from $1/\Phi_{k}$ in \eqref{eguess} as
\begin{eqnarray}\label{e216xy}
&& -
\varepsilon_N\sum_{\big{(}\begin{smallmatrix} a & b\cr c & d\end{smallmatrix}\big{)}\in \mr{P}\Gamma_1(N)} 
 \sum_{r>0} r \bigg\{
e^{2\pi i r \{ac\tau + bd\sigma + (ad+bc)z\}} H(ac\tau_2 + bd\sigma_2 + (ad+bc)z_2) 
\nonumber \\ && \hskip 1in +\
e^{-2\pi i r (ac\tau + bd\sigma + (ad+bc)z)} H(-(ac\tau_2 + bd\sigma_2 + (ad+bc)z_2))\bigg\} 
\nonumber \\ && \hskip 1in \times\sum_{p,q\ge 0}f_p g_q e^{2\pi i p(a^2\tau+b^2\sigma+2abz)} e^{2\pi i \frac{q}{N}(c^2\tau+d^2\sigma+2cdz)}
~. 
\end{eqnarray} 

This can be reorganized as
\begin{eqnarray} \label{ereorg}
&& -
\varepsilon_N\sum_{\big{(}\begin{smallmatrix} a & b\cr c & d\end{smallmatrix}\big{)}\in \mr{P}\Gamma_1(N)} 
\sum_{r>0}\sum_{p,q\ge 0} r  H(ac\tau_2 + bd\sigma_2 + (ad+bc)z_2) \,
f_p q_q \nonumber \\ && \hskip 1in \times 
e^{2\pi i \{(pa^2+qc^2/N+r ac)\tau+(pb^2+qd^2/N+r bd)\sigma+
(2pab+2qcd/N + r(ad+bc) )z\}}  \nonumber \\ &&
-
\varepsilon_N\sum_{\big{(}\begin{smallmatrix} a & b\cr c & d\end{smallmatrix}\big{)}\in \mr{P}\Gamma_1(N)}
\sum_{r>0} \sum_{p,q\ge 0}r  H(-ac\tau_2 - bd\sigma_2 - (ad+bc)z_2) \,
f_p g_q \nonumber \\ && \hskip 1in \times 
e^{2\pi i \{(pa^2+qc^2/N-r ac)\tau+(pb^2+qd^2/N-r bd)\sigma+
(2pab+2qcd/N - r(ad+bc) )z\}} \, .
\end{eqnarray}
The term in the first two lines of \refb{ereorg} may be identified as the subtraction of the
contribution from a bound state of the
half-BPS states carrying charges $(b\bM, a\bM)$ and $(d\bN,c\bN)$, with
\be
\bM^2=2p, \quad \bN^2=2q/N, \quad \bM\cdot \bN=r, \quad r>0\, ,
\ee
so that the total charge $(Q,P)=(b\bM+d\bN,a\bM+c\bN)$
satisfies
\be
P^2 = 2 (pa^2+qc^2/N+r ac), \quad Q^2 = 2(pb^2+qd^2/N+r bd), \quad Q\cdot P =2pab+2qcd/N + r(ad+bc)\, .
\ee
Such bound states are known to exist in the chamber $ac\tau_2 + bd\sigma_2 + (ad+bc)z_2>0$ for
$r>0$ \cite{1104.1498}. Similarly the term in the last two lines of \refb{ereorg} 
may be identified as representing the subtraction of a bound state of the
half BPS states carrying charges $(b\bM, a\bM)$ and $(d\bN,c\bN)$, with
\be
\bM^2=2p, \quad \bN^2=2q/N, \quad \bM\cdot \bN=-r, \quad r>0\, ,
\ee
so that the total charge $(Q,P)=(b\bM+d\bN,a\bM+c\bN)$
satisfies
\be
P^2 = 2 (pa^2+qc^2/N-r ac), \quad Q^2 = 2(pb^2+qd^2/N-r bd), \quad Q\cdot P =(2pab+2qcd/N - r(ad+bc)\, .
\ee
Such bound states are known to exist in the chamber $ac\tau_2 + bd\sigma_2 + (ad+bc)z_2<0$ for
$r>0$ \cite{1104.1498}. Thus these subtraction terms remove from the Fourier expansion of $1/\Phi_{k}$
the contributions from two-centered bound states carrying such charges in any chamber of the
moduli space. Hence for these terms, \eqref{e216xy} gives the correct subtraction.

The terms corresponding to $(p,q)=(-1,0),(0,-1),(-1,-1)$ require special attention. Let us first consider the term proportional to $f_p g_{-1}$ with $p\ge 0$ and $f_{-1}g_q$ term with $q\geq 0$. In this case, %if we ignore the $[~]_F$operation, then,
following the same steps that led to \refb{ereorg}, 
the subtraction terms in \eqref{eguess} can be organized as:
\begin{align} \label{ereorgtwo}
&- 
\varepsilon_N\sum_{\big{(}\begin{smallmatrix} a & b\cr c & d\end{smallmatrix}\big{)}\in \mr{P}\Gamma_1(N)} 
\sum_{r>0} \sum_{p\ge 0} f_p g_{-1}r  H(ac\tau_2 + bd\sigma_2 + (ad+bc)z_2) \,
\\\label{ereorgtwo2}& \hskip 1in \times 
e^{2\pi i \{(pa^2-c^2/N+r ac)\tau+(pb^2-d^2/N+r bd)\sigma+
(2pab-2cd/N + r(ad+bc) )z\}}  
\\\label{ereorgtwo3}
&- 
\varepsilon_N\sum_{\big{(}\begin{smallmatrix} a & b\cr c & d\end{smallmatrix}\big{)}\in \mr{P}\Gamma_1(N)} 
\sum_{r>0} \sum_{q\ge 0} f_{-1} g_{q}r  H(ac\tau_2 + bd\sigma_2 + (ad+bc)z_2) \,
\\\label{ereorgtwo4} & \hskip 1in \times 
e^{2\pi i \{(-a^2+qc^2/N+r ac)\tau+(-b^2+qd^2/N+r bd)\sigma+
(-2ab+2qcd/N + r(ad+bc) )z\}}  
\\\label{ereorgtwo5} 
&- 
\varepsilon_N\sum_{\big{(}\begin{smallmatrix} a & b\cr c & d\end{smallmatrix}\big{)}\in \mr{P}\Gamma_1(N)}
\sum_{r>0} \sum_{p\ge 0}f_p g_{-1}r  H(-ac\tau_2 - bd\sigma_2 - (ad+bc)z_2) \,
\\\label{ereorgtwo6} & \hskip 1in \times 
e^{2\pi i \{(pa^2-c^2/N-r ac)\tau+(pb^2-d^2/N-r bd)\sigma+
(2pab-2cd/N - r(ad+bc) )z\}}
\\\label{ereorgtwo7}
&- 
\varepsilon_N\sum_{\big{(}\begin{smallmatrix} a & b\cr c & d\end{smallmatrix}\big{)}\in \mr{P}\Gamma_1(N)}
\sum_{r>0} \sum_{q\ge 0}f_{-1} g_{q}r  H(-ac\tau_2 - bd\sigma_2 - (ad+bc)z_2) \,
\\\label{ereorgtwo8}& \hskip 1in \times 
e^{2\pi i \{(-a^2+qc^2/N-r ac)\tau+(-b^2+qd^2/N-r bd)\sigma+
(-2ab+2qcd/N - r(ad+bc) )z\}}\, .
\end{align}
% Note that we have used the $(a,b,c,d)\to (c,d,-a,-b)$, $p\leftrightarrow q$ 
% symmetry to remove the overall
% factor of $1/2$ and drop the $f_{-1}f_q$ term.
The terms in \eqref{ereorgtwo}--\eqref{ereorgtwo8} have the same interpretation as in the case of the
terms proportional to $f_p g_{q}$ with $(p,q)$ replaced by $(-1,0),(0,-1)$, 
namely they represent subtraction of the contribution
from the bound states of charges
$(b\bM, a\bM)$ and $(d\bN,c\bN)$, with
\be\label{eq:p,q=-1MN}
\bM^2=2p, \quad \bN^2=-2/N, \quad \bM\cdot \bN=\pm r, \quad r>0\, ,
\ee
and 
\be\label{eq:p=-1,qMN}
\bM^2=-2, \quad \bN^2=2q/N, \quad \bM\cdot \bN=\pm r, \quad r>0\, ,
\ee
so that the total charge $(Q,P)=(b\bM+d\bN,a\bM+c\bN)$
satisfies
\be\label{ebcount}
P^2 = 2 (pa^2-c^2/N\pm r ac), \quad Q^2 = 2(pb^2-d^2/N\pm r bd), \quad Q\cdot P =2pab-2cd/N \pm r(ad+bc)\, ,
\ee
in the case of \eqref{eq:p,q=-1MN} and 
\be\label{ebcount}
P^2 = 2 (-a^2+qc^2/N\pm r ac), \quad Q^2 = 2(-b^2+qd^2/N\pm r bd), \quad Q\cdot P =-2ab+2qcd/N \pm r(ad+bc)\, ,
\ee
in the case of \eqref{eq:p=-1,qMN}.
However, it was shown in \cite{1104.1498,1210.4385} that 
this overcounts the bound state contribution, this has been reviewed in Appendix \ref{app:BSM}. To see this, let us
make a change of variable $(a,b,c,d)\to (a,b,c-raN,d-rbN)$ to express 
\eqref{ereorgtwo5} and \eqref{ereorgtwo6} as
\begin{eqsp} \label{emetamor}
&- 
\varepsilon_N\sum_{\big{(}\begin{smallmatrix} a & b\cr c & d\end{smallmatrix}\big{)}\in \mr{P}\Gamma_1(N)}
\sum_{r>0} \sum_{p\ge 0}f_p g_{-1} r  H\left(-ac\tau_2 - bd\sigma_2 - (ad+bc)z_2 + rNa^2\tau_2 + rNb^2\sigma_2+2rNab z_2
\right) \,
\\ & \hskip 1in \times 
e^{2\pi i \{(pa^2-c^2/N+r ac)\tau+(pb^2-d^2/N+r bd)\sigma+
(2pab-2cd/N + r(ad+bc)) z\}}\, .
\end{eqsp}
We now note that the exponent in the summand has exactly the same form as the exponent in \refb{ereorgtwo2} and hence represent the same charges.
Naively these two contributions will get added. However, it was shown in 
\cite{1104.1498,1210.4385} that these two contributions should be regarded as coming
from the same bound states and 
furthermore, for given $(\tau_2,\sigma_2,z_2)$, the bound state 
exists only when the Heaviside functions appearing in \eqref{emetamor} and \eqref{ereorgtwo} are both non-zero. Similarly, changing $(a,b,c,d)\to (a-cr,b-dr,c,d)$ to \eqref{ereorgtwo7} and \eqref{ereorgtwo8}, we get 
\begin{eqsp}
    - 
\varepsilon_N\sum_{\big{(}\begin{smallmatrix} a & b\cr c & d\end{smallmatrix}\big{)}\in \mr{P}\Gamma_1(N)}
\sum_{r>0} \sum_{q\ge 0}&f_{-1} g_{q}r  H(-ac\tau_2 - bd\sigma_2 - (ad+bc)z_2+rc^2\tau_2+rd^2\sigma_2+2rcdz_2) \,
\\& \times 
e^{2\pi i \{(-a^2+qc^2/N+r ac)\tau+(-b^2+qd^2/N+r bd)\sigma+
(-2ab+2qcd/N + r(ad+bc) )z\}}\, ,
\end{eqsp}
which gives the same contribution as \eqref{ereorgtwo3} and \eqref{ereorgtwo4}. 
Thus \refb{ereorgtwo}--\eqref{ereorgtwo8} should be
replaced by,
\begin{eqsp} \label{erewritetwo}
- 
\varepsilon_N\sum_{\big{(}\begin{smallmatrix} a & b\cr c & d\end{smallmatrix}\big{)}\in \mr{P}\Gamma_1(N)}
\sum_{r>0}&  g_{-1} r  H(ac\tau_2 + bd\sigma_2 + (ad+bc)z_2) \\ &\times\
H\left(-ac\tau_2 - bd\sigma_2 - (ad+bc)z_2 + rNa^2\tau_2 + rNb^2\sigma_2+2rNab z_2
\right) \,
\\& \times 
e^{2\pi i \{-(c^2\tau+d^2\sigma+2cdz)/N+r( ac\tau+bd\sigma+(ad+bc)z )\}}f_+(a^2\tau+b^2\sigma+2abz) 
\\ - 
\varepsilon_N\sum_{\big{(}\begin{smallmatrix} a & b\cr c & d\end{smallmatrix}\big{)}\in \mr{P}\Gamma_1(N)}
\sum_{r>0}&  f_{-1} r  H(ac\tau_2 + bd\sigma_2 + (ad+bc)z_2) \\ &\times\
H\left(-ac\tau_2 - bd\sigma_2 - (ad+bc)z_2 - rc^2\tau_2 + rd^2\sigma_2+2rcd z_2
\right) \,
\\& \times 
e^{2\pi i \{-(a^2\tau+b^2\sigma+2abz)+r( ac\tau+bd\sigma+(ad+bc)z )\}}g_+(c^2\tau+d^2\sigma+2cdz) .
\end{eqsp}
% We can verify that in this form, the divergence of the type mentioned in  \refb{e314}
% disappears. Indeed, if we set $(a,b,c,d)=(1,0,c,1)$, the sum in 
% \refb{erewritetwo} takes the form
% \be \label{erewritecheck}
% - 
% \sum_{c\in \IZ} 
% \sum_{r>0} \sum_{p\ge 0} r   \,
% f_p f_{-1} e^{2\pi i \{(p-c^2+r c)\tau-\sigma+
% (-2c + r)z\}} 
% H(c\tau_2  + z_2)  H\left(-c\tau_2 -z_2 +
% r\tau_2 \right)
% \, . 
% \ee
% We now see that for fixed $r$, the sum over $c$ is restricted, and hence the
% sum over $c$ no longer diverges.
In Section \ref{stildeFconverge} we shall prove the convergence of the 
full sum over $a,b,c,d$, $r$ for $N=2,3$.
% With a little more effort one can also show the absence of divergence
% in the sum over $r$. For definiteness let us suppose that we have take $|z_2|\ll\tau_2$. In that 
% case, if 
% $z_2>0$ then the sum over $c$ is restricted to the range $0\le c\le (r-1)$ and the dominant 
% contribution to the summand, coming from the $c=(r-1)$ term, has the form
% \be
% e^{2\pi i \{(p+r-1)\tau-\sigma+
% (2 - r)z\}}\, .
% \ee
% The sum over $r$ clearly converges since $\tau_2\gg|z_2|>0$. On the other hand, if $z_2<0$, then
% the sum over $c$ in \refb{erewritecheck} is restricted to the range $1\le c\le r$. Now the
% dominant contribution to the summand comes from the $c=r$ term, and the summand is
% proportional to
% \be
% e^{2\pi i (p\tau-\sigma
% - rz)}\, .
% \ee
% The sum over $r$ still remains convergent since now $z_2<0$.

Finally, we turn to the terms proportional to $f_{-1}g_{-1}$. 
We start from 
\begin{align} \label{ef1analysis1}
-\varepsilon_Nf_{-1}g_{-1}
{\sum_{\big{(}\begin{smallmatrix} a & b\cr c & d\end{smallmatrix}\big{)}\in \mr{P}\Gamma_1(N)}}
&\sum_{r>0} r  H(ac\tau_2 + bd\sigma_2 + (ad+bc)z_2) \,
\\\label{ef1analysis2} &\times 
e^{2\pi i \{(-a^2-c^2/N+r ac)\tau+(-b^2-d^2/N+r bd)\sigma+
(-2ab-2cd/N + r(ad+bc) )z\}}  \\\label{ef1analysis3} 
-\varepsilon_N f_{-1}g_{-1}
{\sum_{\big{(}\begin{smallmatrix} a & b\cr c & d\end{smallmatrix}\big{)}\in \mr{P}\Gamma_1(N)}}
&\sum_{r>0} r  H(-ac\tau_2 - bd\sigma_2 - (ad+bc)z_2) \,
\\\label{ef1analysis4} &\times 
e^{2\pi i \{(-a^2-c^2/N-r ac)\tau+(-b^2-d^2/N-r bd)\sigma+
(-2ab-2cd/N - r(ad+bc) )z\}} \, .
\end{align}
Note that applying the transformation 
\begin{eqsp}
    \begin{pmatrix}
        a&b\\c&d
    \end{pmatrix}\to \begin{pmatrix}
        0&-\frac{1}{\sqrt{N}}\\\sqrt{N}&0
    \end{pmatrix}\begin{pmatrix}
        a&b\\c&d
    \end{pmatrix}=\begin{pmatrix}
        -c/\sqrt{N}&-d/\sqrt{N}\\a\sqrt{N}&b\sqrt{N}
    \end{pmatrix}~,
\end{eqsp}
to \eqref{ef1analysis1} and \eqref{ef1analysis2}, we obtain \eqref{ef1analysis3} and \eqref{ef1analysis4} respectively. Thus, we can write \eqref{ef1analysis1}--\eqref{ef1analysis4} as 
\begin{eqsp}\label{eq:f-1g-1_pre}
-f_{-1}g_{-1}
{\sum_{\big{(}\begin{smallmatrix} a & b\cr c & d\end{smallmatrix}\big{)}\in \mr{P}\Gamma_1(N)^+}}
&\sum_{r>0} r  H(ac\tau_2 + bd\sigma_2 + (ad+bc)z_2) \,
\\ &\times 
e^{2\pi i \{(-a^2-c^2/N+r ac)\tau+(-b^2-d^2/N+r bd)\sigma+
(-2ab-2cd/N + r(ad+bc) )z\}} ~,     
\end{eqsp}
where 
\begin{eqsp}
\mr{P}\Gamma_1(N)^+:=\mr{P}\Gamma_1(N)\cup \gamma_N\mr{P}\Gamma_1(N)~,   
\end{eqsp}
and 
\begin{eqsp}
    \gamma_N:=\begin{pmatrix}
        0&-\frac{1}{\sqrt{N}}\\\sqrt{N}&0
    \end{pmatrix}~.
\end{eqsp}
For $N=1$, the factor of $\varepsilon_N=\frac{1}{2}$ is removed by the fact that $\mr{P}\Gamma_1(1)^+=\mr{PSL}(2,\IZ)$.
Now, it is clear that the transformations 
\begin{eqsp}\label{eq:identification_rN}
    \begin{pmatrix}
        a&b\\c&d
    \end{pmatrix}\to \begin{pmatrix}
        0&-1/\sqrt{N}\\\sqrt{N}&0
    \end{pmatrix}\begin{pmatrix}
        1&-r\\0&1
    \end{pmatrix}\begin{pmatrix}
        a&b\\c&d
    \end{pmatrix}=\begin{pmatrix}
        0&-1/\sqrt{N}\\\sqrt{N}&-r\sqrt{N}
    \end{pmatrix}\begin{pmatrix}
        a&b\\c&d
    \end{pmatrix}~,
    \\
    \begin{pmatrix}
        a&b\\c&d
    \end{pmatrix}\to \begin{pmatrix}
        0&-1/\sqrt{N}\\\sqrt{N}&0
    \end{pmatrix}\begin{pmatrix}
        1&0\\-rN&1
    \end{pmatrix}\begin{pmatrix}
        a&b\\c&d
    \end{pmatrix}=\begin{pmatrix}
        r\sqrt{N}&-1/\sqrt{N}\\\sqrt{N}&0
    \end{pmatrix}\begin{pmatrix}
        a&b\\c&d
    \end{pmatrix}~,
\end{eqsp}
leaves the exponent in \eqref{eq:f-1g-1_pre} invariant.  
By the same argument as before, the bound states corresponding to 
matrices $(\begin{smallmatrix} a & b\cr c & d\end{smallmatrix})$ related by 
these transformations represent the same physical state and
should be counted only once \cite{1104.1498,1210.4385}, see Appendix \ref{app:BSM} for details of this identification. Thus in the sum over $a,b,c,d$
in \eqref{eq:f-1g-1_pre}, we must identify the matrices 
$(\begin{smallmatrix} a & b\cr c & d\end{smallmatrix})$ related by these transformations. 
Since the matrices multiplying $\begin{pmatrix}
    a&b\\c&d
\end{pmatrix}$ in \eqref{eq:identification_rN} are inverses of each other up to an overall multiplication by $-1$, we need to sum over all $\mr{P}\Gamma_1(N)^+$ matrices up to left multiplication by the group $G^N_r$ generated by the matrix 
\begin{eqsp}
G^N_r:=\left\langle\begin{pmatrix} 0 & -1/\sqrt{N}\\\sqrt{N}& -r\sqrt{N}\end{pmatrix}\right\rangle~.    
\end{eqsp}
Furthermore, 
these bound states exist in only those chambers of the moduli space where the 
Heaviside function in \eqref{eq:f-1g-1_pre}, and all the other Heaviside functions
related to the one in \eqref{eq:f-1g-1_pre} by replacing the matrix 
$(\begin{smallmatrix} a & b\cr c & d\end{smallmatrix})$ by an element of $G^N_r$
multiplying this matrix from the left, are non-zero. Thus
the
the Heaviside function in \eqref{eq:f-1g-1_pre} must be replaced by the product of infinite number of Heaviside functions where in  the argument we replace $(a,b,c,d)$ by $(a_n,b_n, c_n, d_n)$, where
\be\label{edefanbn}
\begin{pmatrix} a_n & b_n\cr c_n & d_n\end{pmatrix}:=\begin{pmatrix} 0 & -1/\sqrt{N}\\\sqrt{N}& -r\sqrt{N}\end{pmatrix}^n
\begin{pmatrix} a & b\cr c & d\end{pmatrix}\,,\quad n\in\IZ .
\ee
Thus the net contribution may be expressed as
\begin{eqnarray} \label{ereorgthreen}
&-&  f_{-1}g_{-1} \sum_{r>0} r 
\sum_{\big{(}\begin{smallmatrix} a & b\cr c & d\end{smallmatrix}\big{)}\in G^N_r\backslash \mr{P}\Gamma_1(N)^+}\hskip .1in 
 \bigg\{\prod_{n=-\infty}^\infty 
H(a_nc_n\tau_2 + b_nd_n\sigma_2 + (a_nd_n+b_nc_n)z_2)
\bigg\}
\nonumber \\ && \hskip 1in \times \
e^{2\pi i \{(-a^2-c^2/N+r ac)\tau+(-b^2-d^2/N+r bd)\sigma+
(-2ab-2cd/N + r(ad+bc)) z\}}  \, .
\end{eqnarray}
The sum is well defined since the summand is invariant under a change in the coset representative. 
Note that since $G^N_r$ depends on $r$, we are forced to exchange the order of sum over $r$ and sum over $a,b,c,d$. This is part of the prescription in the definition of $\wt{F}_k(\Omega)$.

Using \refb{e216xy}, \refb{erewritetwo} and \refb{ereorgthreen}, 
we obtain 
\begin{eqsp}
\label{eguessfin}
\wt F_k(\Omega) &= {1\over \Phi_{k}(\Omega)}
-\varepsilon_N \sum_{\big{(}\begin{smallmatrix} a & b\cr c & d\end{smallmatrix}\big{)}\in 
\mr{P}\Gamma_1(N)}
\left(e^{\pi i \{ac\tau + bd\sigma + (ad+bc)z\}} - e^{-\pi i \{ac\tau + bd\sigma + (ad+bc)z\}}\right)^{-2}  \\ & 
\hskip 1in 
\times\ f_+(a^2\tau +b^2\sigma +2abz) \ g_+(c^2\tau+d^2\sigma+2cd z) \\ 
&-\varepsilon_N 
\sum_{\big{(}\begin{smallmatrix} a & b\cr c & d\end{smallmatrix}\big{)}\in \mr{P}\Gamma_1(N)}
\sum_{r>0}  g_{-1} r  H(ac\tau_2 + bd\sigma_2 + (ad+bc)z_2) \\ &\hskip 1in \times
H\left(-ac\tau_2 - bd\sigma_2 - (ad+bc)z_2 + rNa^2\tau_2 + rNb^2\sigma_2+2rNab z_2
\right) \,
\\&\hskip 1in  \times 
e^{2\pi i \{-(c^2\tau+d^2\sigma+2cdz)/N+r( ac\tau+bd\sigma+(ad+bc)z )\}}f_+(a^2\tau+b^2\sigma+2abz) 
\\ &-\varepsilon_N 
\sum_{\big{(}\begin{smallmatrix} a & b\cr c & d\end{smallmatrix}\big{)}\in \mr{P}\Gamma_1(N)}
\sum_{r>0}  f_{-1} r  H(ac\tau_2 + bd\sigma_2 + (ad+bc)z_2) \\ &\hskip 1in \times\
H\left(-ac\tau_2 - bd\sigma_2 - (ad+bc)z_2 - rc^2\tau_2 + rd^2\sigma_2+2rcd z_2
\right) \,
\\&\hskip 1in  \times 
e^{2\pi i \{-(a^2\tau+b^2\sigma+2abz)+r( ac\tau+bd\sigma+(ad+bc)z )\}}g_+(c^2\tau+d^2\sigma+2cdz) \\ 
&-  f_{-1}g_{-1} \sum_{r>0} r\, 
\sum_{\big{(}\begin{smallmatrix} a & b\cr c & d\end{smallmatrix}\big{)}\in G^N_r\backslash \mr{P}\Gamma_1(N)^+}\hskip .1in 
 \bigg\{\prod_{n=-\infty}^\infty 
H(a_nc_n\tau_2 + b_nd_n\sigma_2 + (a_nd_n+b_nc_n)z_2)
\bigg\}
\\ & \hskip 1in \times \
e^{2\pi i \{(-a^2-c^2/N+r ac)\tau+(-b^2-d^2/N+r bd)\sigma+
(-2ab-2cd/N + r(ad+bc)) z\}}  \, .
\end{eqsp}
The sum over $a,b,c,d$ will be organized by first summing over $a,b,c,d$ subject to the condition $|a|,|b|,|c|,|d|\leq K$ for some positive integer $K$ so that the sum is finite and then summing over all positive integers $K$.  
In Section \ref{stildeFconverge} we shall prove the convergence of the sum given in
\refb{eguessfin} for $N=2,3$.
\par
Let us briefly discuss the simplification of the generating function \eqref{eguessfin} for CHL models where we have the additional $S$-duality symmetry corresponding to $\begin{pmatrix}
    0&-1/\sqrt{N}\\\sqrt{N}&0
\end{pmatrix}$, for example when $\CM=\mr{K3}$ \cite{David:2006ud,Persson:2015jka}. For such cases, the Siegel modular form satisfies \eqref{eq:gamma_N_inv_Phik}. As a result, since  $\gamma_N$-transformation acts on $\Omega$ as
\begin{eqsp}
\Omega\to\gamma_N\Omega\gamma_N^t=\begin{pmatrix}
    \sigma/N&-z\\-z&\tau
\end{pmatrix}~,    
\end{eqsp}
using this symmetry of $\Phi_{k}^{-1}$ near the $z=0$ pole with the behavior \eqref{epolestructure}, we find that 
\begin{eqsp}\label{eq:fkgk_relK3}
    g^{(k)}(\sigma)=f^{(k)}(\sigma/N)~.
\end{eqsp} 
The first effect of this symmetry is to extend the sum over a larger S-duality group generated by $\Gamma_1(N)$ and $\begin{pmatrix}
    0&-1/\sqrt{N}\\\sqrt{N}&0
\end{pmatrix}\Gamma_1(N)$ and\footnote{Note that the transformation by $\gamma_N$ does not preserve the charge lattice but acts on the quadratic T-duality invariants $Q^2,P^2,P\cdot Q$ as 
\begin{eqsp}
Q^2\to P^2/N,\quad P^2\to NQ^2,\quad P\cdot Q \to -P\cdot Q~,   
\end{eqsp}
and the invariance of the Siegel modular form $\Phi_k$ under $\gamma_N$ ensures that the full index $d(m,n,\ell)$ as well as the single-centered index $d^*(m,n,\ell)$ is invariant under this transformation:
\begin{eqsp}
    d(m,n,\ell)=d(n,m,-\ell)~,\quad d^*(m,n,\ell)=d^*(n,m,-\ell)~.
\end{eqsp}
} include a factor of $\frac{1}{2}$ owing to the invariance of the summand under $(a,b,c,d)\to (-c/\sqrt{N},-d/\sqrt{N},a\sqrt{N},b\sqrt{N})$. 
Using \eqref{eq:fkgk_relK3}, we can combine the terms in the sixth, seventh and eight line of \eqref{eguessfin} with the terms in the third, fourth and fifth lines of \eqref{eguessfin}:
\begin{eqsp}
    &-\frac{1}{2} 
\sum_{\big{(}\begin{smallmatrix} a & b\cr c & d\end{smallmatrix}\big{)}\in \mr{P}\Gamma_1(N)^+}
\sum_{r>0}  g_{-1} r  H(ac\tau_2 + bd\sigma_2 + (ad+bc)z_2) \\ &\hskip 1in \times
H\left(-ac\tau_2 - bd\sigma_2 - (ad+bc)z_2 + rNa^2\tau_2 + rNb^2\sigma_2+2rNab z_2
\right) \,
\\&\hskip 1in  \times 
e^{2\pi i \{-(c^2\tau+d^2\sigma+2cdz)/N+r( ac\tau+bd\sigma+(ad+bc)z )\}}f_+(a^2\tau+b^2\sigma+2abz) 
\\ &-\frac{1}{2}
\sum_{\big{(}\begin{smallmatrix} a & b\cr c & d\end{smallmatrix}\big{)}\in \mr{P}\Gamma_1(N)^+}
\sum_{r>0}  f_{-1} r  H(ac\tau_2 + bd\sigma_2 + (ad+bc)z_2) \\ &\hskip 1in \times\
H\left(-ac\tau_2 - bd\sigma_2 - (ad+bc)z_2 - rc^2\tau_2 + rd^2\sigma_2+2rcd z_2
\right) \,
\\&\hskip 1in  \times 
e^{2\pi i \{-(a^2\tau+b^2\sigma+2abz)+r( ac\tau+bd\sigma+(ad+bc)z )\}}g_+(c^2\tau+d^2\sigma+2cdz)
\\&=-\sum_{\big{(}\begin{smallmatrix} a & b\cr c & d\end{smallmatrix}\big{)}\in \mr{P}\Gamma_1(N)^+}
\sum_{r>0}  g_{-1} r  H(ac\tau_2 + bd\sigma_2 + (ad+bc)z_2) \\ &\hskip 1in \times
H\left(-ac\tau_2 - bd\sigma_2 - (ad+bc)z_2 + rNa^2\tau_2 + rNb^2\sigma_2+2rNab z_2
\right) \,
\\&\hskip 1in  \times 
e^{2\pi i \{-(c^2\tau+d^2\sigma+2cdz)/N+r( ac\tau+bd\sigma+(ad+bc)z )\}}f_+(a^2\tau+b^2\sigma+2abz) 
\\&=-\sum_{\big{(}\begin{smallmatrix} a & b\cr c & d\end{smallmatrix}\big{)}\in \mr{P}\Gamma_1(N)^+}
\sum_{r>0}  f_{-1} r  H(ac\tau_2 + bd\sigma_2 + (ad+bc)z_2) \\ &\hskip 1in \times\
H\left(-ac\tau_2 - bd\sigma_2 - (ad+bc)z_2 - rc^2\tau_2 + rd^2\sigma_2+2rcd z_2
\right) \,
\\&\hskip 1in  \times 
e^{2\pi i \{-(a^2\tau+b^2\sigma+2abz)+r( ac\tau+bd\sigma+(ad+bc)z )\}}g_+(c^2\tau+d^2\sigma+2cdz)~.
\end{eqsp}
Thus we obtain the generating function given in \eqref{eguessfinintroK3}.
\par

\section{Convergence of $\wt F_k(\Omega)$} \label{stildeFconverge}

In this section we shall prove the convergence of the sum in \eqref{eguessfin}. 
Let us define 
\begin{align} 
    \CF_{1}&:=\sum_{\big{(}\begin{smallmatrix} a & b\cr c & d\end{smallmatrix}\big{)}\in 
\mr{P}\Gamma_1(N)}
\left(e^{\pi i \{ac\tau + bd\sigma + (ad+bc)z\}} - e^{-\pi i \{ac\tau + bd\sigma + (ad+bc)z\}}\right)^{-2} \nonumber \\ & 
\hskip 1in 
\times\ f_+(a^2\tau +b^2\sigma +2abz) \ g_+(c^2\tau+d^2\sigma+2cd z)~,
\label{ef1}\\
\CF_2&:=
\sum_{\big{(}\begin{smallmatrix} a & b\cr c & d\end{smallmatrix}\big{)}\in \mr{P}\Gamma_1(N)}
\sum_{r>0}  g_{-1} r  H(ac\tau_2 + bd\sigma_2 + (ad+bc)z_2) \nonumber\\ &\hskip 1in \times
H\left(-ac\tau_2 - bd\sigma_2 - (ad+bc)z_2 + rNa^2\tau_2 + rNb^2\sigma_2+2rNab z_2
\right) \,
\nonumber\\&\hskip 1in  \times 
e^{2\pi i \{-(c^2\tau+d^2\sigma+2cdz)/N+r( ac\tau+bd\sigma+(ad+bc)z )\}}f_+(a^2\tau+b^2\sigma+2abz) 
\label{ef2}\\ \CF_3&:= 
\sum_{\big{(}\begin{smallmatrix} a & b\cr c & d\end{smallmatrix}\big{)}\in \mr{P}\Gamma_1(N)}
\sum_{r>0}  f_{-1} r  H(ac\tau_2 + bd\sigma_2 + (ad+bc)z_2)\nonumber \\ &\hskip 1in \times\
H\left(-ac\tau_2 - bd\sigma_2 - (ad+bc)z_2 - rc^2\tau_2 + rd^2\sigma_2+2rcd z_2
\right) \,
\nonumber\\&\hskip 1in  \times 
e^{2\pi i \{-(a^2\tau+b^2\sigma+2abz)+r( ac\tau+bd\sigma+(ad+bc)z )\}}g_+(c^2\tau+d^2\sigma+2cdz) \label{ef3}\\ 
\CF_4&:=  f_{-1}g_{-1} \sum_{r>0} r\, 
\sum_{\big{(}\begin{smallmatrix} a & b\cr c & d\end{smallmatrix}\big{)}\in G^N_r\backslash \mr{P}\Gamma_1(N)^+}\hskip .1in 
 \bigg\{\prod_{n=-\infty}^\infty 
H(a_nc_n\tau_2 + b_nd_n\sigma_2 + (a_nd_n+b_nc_n)z_2)
\bigg\}\nonumber
\\ & \hskip 1in \times \
e^{2\pi i \{(-a^2-c^2/N+r ac)\tau+(-b^2-d^2/N+r bd)\sigma+
(-2ab-2cd/N + r(ad+bc)) z\}}  \, ,
\label{ef4}
\end{align}
so that \eqref{eguessfin} can be written as
\begin{equation}\label{eq:tildeFCF123}
    \widetilde{F}_k(\Omega)=\frac{1}{\Phi_{k}}-\varepsilon_N(\CF_1+\CF_2+\CF_3)-\CF_4~.
\end{equation}
%Thus the proving Theorem \ref{thm:S_conv} is equivalent to proving the convergence of $\CF_{1},\CF_{2},\CF_3$. 

% \begin{figure}
% \begin{center}
% \figfour
% \end{center}

% \vskip -.3in

% \caption{Chambers in the $x$-$y$ plane separated by pole locations of $1/\Phi_{k}$. \textbf{Comment out if figure below is better.}} \label{figfour}
% \end{figure}
\subsection{The fundamental chamber $\CR_N$}
Having constructed the generating function, it will be useful to develop a geometric picture for the matrices
$\begin{pmatrix} a & b\cr c & d\end{pmatrix}\in \Gamma_1(N)$ as in \cite{Bhand:2025ghn}. Let us define
\be
x := {z_2\over \tau_2}, \qquad y := {\sigma_2\over \tau_2}\, .
\ee
Then the imaginary parts of \refb{epoles} for $n_2=0$ take the form:
\be \label{epolesimaginary}
- m_1 +n_1\, y + j\, x
=0, \quad
m_1/N,n_1,j\in \IZ,
\quad m_1 n_1 + {j^2\over 4}={1\over 4}\, .
\ee
The fact that $\Omega\in\IH_2$ implies that 
\be\label{eparabola}
y > x^2\, .
\ee
Again one can show that the lines represented by \eqref{epolesimaginary} intersect only on the parabola and hence the vertices 
of these chambers either lie on the parabola or at $y=\infty$. Let us see how these chambers are related to the chambers in the $\tau_\infty$-plane. The $\CR_N$-chamber in the $\tau_\infty$-plane is described as follows \cite{Sen:2007vb}: the $\CR_N$-chamber is bounded by the lines connecting $(i\infty,0)$ and $(1,i\infty)$ on the left and right and a set of circle segments in the bottom. The shape of these circle segments  depends on the charge but the points where it intersects the real $\tau_\infty$-axis is universal and is given by $\mr{SL}(2,\IZ)$-matrices $\begin{pmatrix}
    a&b\\c&d
\end{pmatrix}$ with $cd\in N\IZ$. A circle segment corresponding to $\begin{pmatrix}
    a&b\\c&d
\end{pmatrix}$ intersects the real $\tau_\infty$-axis at $\frac{a}{c},\frac{b}{d}$. We will denote this circle segment by $(\frac{a}{c},\frac{b}{d})$. The first circle segment bounding $\CR_N$ from below begins at 0 and ends at some point
$P_1$ on the positive real $\tau_\infty$-axis, the second one begins at $P_1$ and ends at some point $P_2$ to
the right of $P_1$ and so on with the final segment ending at 1. The matrices corresponding to these circle segments is determined by requiring that it travels maximum possible
distance to the right starting from a given vertex $P$. Any other segment that travels less distance will lie beneath the maximal segment and hence is not a boundary of $\CR_N$.  One can follow the procedure illustrated in \cite{Sen:2007vb} to determine the walls and vertices. Some of the vertices in counterclockwise direction is given by 
\begin{eqsp}
    i\infty,0,\frac{1}{N},\frac{1}{N-1},\frac{N-1}{N(N-2)},\cdots,1-\frac{N-1}{N(N-2)},1-\frac{1}{N-1},1-\frac{1}{N},1~.
\end{eqsp}
Clearly, for $N=1,2,3$ there are finitely many walls with vertices at 
\begin{eqsp}
    i\infty,0,1,\quad &N=1~,
    \\
    i\infty,0,\frac{1}{2},1,\quad &N=2~,
    \\
    i\infty,0,\frac{1}{3},\frac{1}{2},\frac{2}{3},1,\quad &N=3~.
\end{eqsp}
For $N\geq 4$, there are infinitely many walls and vertices \cite{Sen:2007vb}. This is easily seen as follows:
as noted in \cite{Sen:2007vb}, the $\mr{SL}(2,\IR)$-matrix $\begin{pmatrix}
    \sqrt{N}&-1/\sqrt{N}\\\sqrt{N}&0
\end{pmatrix}$, via its M\"{o}bius action on $\tau_\infty$, shifts the vertices by one step in clockwise direction mapping $i\infty\to 1\to1-\frac{1}{N}$ and so on. But it has fixed points at 
\begin{eqsp}
    v_c:=\frac{1}{2}\left(1+\sqrt{1-\frac{4}{N}}\right)~,\quad \text{and}\quad v_c^{-1}/N=\frac{1}{2}\left(1-\sqrt{1-\frac{4}{N}}\right)~.
\end{eqsp}
Thus the action of $\begin{pmatrix}
    \sqrt{N}&-1/\sqrt{N}\\\sqrt{N}&0
\end{pmatrix}$ has accumulation points at $v_c,v_c^{-1}/N$ and hence there are infinitely many vertices. A vertex in the range
\begin{eqsp}\label{eq:vc_range}
v_c^{-1}/N<\tau_{\infty}<v_c~,    
\end{eqsp}
is never mapped outside this range by $\begin{pmatrix}
    \sqrt{N}&-1/\sqrt{N}\\\sqrt{N}&0
\end{pmatrix}$.  
\begin{figure}
    \centering
\[\begin{tikzcd}
	&& {\frac{1}{2}} && \\
	{v_c} & {\frac{1}{2}+\frac{1}{2N}} && {\frac{1}{2}-\frac{1}{2N}} & {v_c^{-1}/N} \\
	{1-\frac{N-1}{N(N-2)}} & {\frac{1}{2}+\frac{1}{2M}} && {\frac{1}{2}-\frac{1}{2M}} & {\frac{N-1}{N(N-2)}} \\
	{1-\frac{1}{N-1}} &&&& {\frac{1}{N-1}} \\
	{1-\frac{1}{N}} & 1 & {i\infty} & 0 & {\frac{1}{N}}
	\arrow[tail reversed,no head,midway,sloped,"{N\,\text{odd}}", from=1-3, to=2-4]
	\arrow[tail reversed,no head,midway,sloped,"{4|N}", curve={height=-18pt}, dotted, from=1-3, to=2-5]
	\arrow[tail reversed,no head,midway,sloped,"{N=2M,M\,\text{odd}}"', from=1-3, to=3-4]
	\arrow[tail reversed,no head,midway,sloped,"{4|N}", curve={height=-18pt}, dotted, from=2-1, to=1-3]
	\arrow[from=2-1, to=2-1, loop, in=145, out=215, distance=10mm]
	\arrow[tail reversed,no head,dotted, from=2-1, to=2-2]
	\arrow[tail reversed,no head,dotted, from=2-1, to=3-2]
	\arrow[tail reversed,no head,midway,sloped,"{N\,\text{odd}}", from=2-2, to=1-3]
	\arrow[tail reversed,no head,dotted, from=2-4, to=2-5]
	\arrow[from=2-5, to=2-5, loop, in=325, out=35, distance=10mm]
	\arrow[dotted, from=3-1, to=2-1]
	\arrow[tail reversed,no head,midway,sloped,"{N=2M,M\,\text{odd}}"', from=3-2, to=1-3]
	\arrow[tail reversed,no head,dotted, from=3-4, to=2-5]
	\arrow[dotted, tail reversed, no head, from=3-5, to=2-5]
	\arrow[from=4-1, to=3-1]
	\arrow[tail reversed, no head, from=4-5, to=3-5]
	\arrow[from=5-1, to=4-1]
	\arrow[from=5-2, to=5-1]
	\arrow[from=5-3, to=5-2]
	\arrow[from=5-4, to=5-3]
	\arrow[tail reversed, no head, from=5-5, to=4-5]
	\arrow[from=5-5, to=5-4]
\end{tikzcd}\]
\caption{The action of the $\mr{SL}(2,\IR)$-matrix $\begin{pmatrix}
    \sqrt{N}&-1/\sqrt{N}\\\sqrt{N}&0
\end{pmatrix}$ on the vertices of the fundamental chamber $\CR_N$. The vertices connected by the dotted arrows on both sides are accumulation points.}
    \label{fig:act_SL2r_mat_RN}
\end{figure}
As shown in \cite{Sen:2007vb}, $\frac{1}{2}$ is in the range \eqref{eq:vc_range} and is an accumulation point if $4|N$. For $N$ odd or $N=2M,M$ odd, the vertices to the left and to the right of $\frac{1}{2}$ has been calculated in \cite{Sen:2007vb}. The full picture is summarized in Figure \ref{fig:act_SL2r_mat_RN}. In this paper, we are only concerned with $N=1,2,3$, the $\CR_N$-chamber for these two cases is shown in Figure
\ref{fig:RN-chamber1}-\ref{fig:RN-chamber3}.

\begin{figure}[htbp]
    \centering
    % === LEFT SUBFIGURE ===
    \begin{subfigure}[b]{0.48\textwidth}
        \centering
        \resizebox{\textwidth}{!}{% <-- Scales TikZ to fit the 0.48 width exactly
            \begin{tikzpicture}
                % Axes
                \draw[-{Stealth[scale=1.2]}] (-1,0) -- (5.5,0)node[right] {$\mr{Re}\,\tau_\infty$};
                \draw[-{Stealth[scale=1.2]}] (0,0) -- (0,3.5)node[above] {$\mr{Re}\,\tau_\infty$};
            
                % Axis labels
                \node[below] at (0,0) {0};
                \node[below] at (4,0) {1};
            
                % Curves bounding the region R
                \draw[thick,red] (0,0) -- (1,3.2);
                \draw[thick,red] (0,0) to[out=25, in=155] (4,0);
                \draw[thick,red] (4,0) -- (5.2,2.8);
            
                % Label for Region R
                \node at (2.4,1.8) {$\mathcal{R}_1$};
            \end{tikzpicture}% <-- Percent sign prevents extra space issues
        }
        \caption{The fundamental chamber $\mathcal{R}_1$ in the $\tau_\infty$-plane bounded by walls of marginal stability.} % Fixed \CR to \mathcal{R}
        \label{fig:sub-chamber1}
    \end{subfigure}
    \hfill % === Pushes them to the outer margins perfectly ===
    % === RIGHT SUBFIGURE ===
    \begin{subfigure}[b]{0.48\textwidth}
        \centering
        \resizebox{\textwidth}{!}{% <-- Scales the large scale=3 diagram down to fit
            \begin{tikzpicture}[scale=3]
                % axes
                \draw[->] (-1.6,0) -- (1.6,0) node[right] {$x:=\frac{z_2}{\tau_2}$};
                \draw[->] (0,0) -- (0,2.2) node[above] {$y:=\frac{\sigma_2}{\tau_2}$};
            
                % parabola
                \draw[thick,domain=-1.5:1.5,samples=200] plot (\x,{\x*\x}) node[right]{} ;
            
                % slanted lines from origin
                \draw[thick,red] (0,0) -- (-1,1);
                \draw[thick,red] (0,0) -- (1,1);
            
                % vertical lines starting at (+-1,1) and going up
                \draw[thick,red] (-1,1) -- (-1,2);
                \draw[thick,red] (1,1) -- (1,2);
                \draw[thick,red] (0,0) -- (0.5,0.25);
                \draw[thick,red] (0.5,0.25) -- (1,1);
                \draw[thick,red] (0,0) -- (0,2);
                
                % labels for regions
                \node at (-0.5,1.4) {$\mathcal{R}_1$};
                \node at (0.5,1.4) {$\mathcal{L}_1$};
                \node at (-1.1,1) {$-1$};
                \node at (1.1,1) {$1$};
                \node at (0,-0.1) {$0$};
                \node at (0.65,0.25) {$\frac{1}{2}$};
            \end{tikzpicture}%
        }
        \caption{Chambers in the $x$-$y$ plane separated by pole locations of $1/\Phi_{k}$.}
        \label{fig:sub-chamber2}
    \end{subfigure}
    
    % Main figure caption
    \caption{Fundamental $\CR_1$-chamber in the $\tau_\infty$ and $x$-$y$ plane.}
    \label{fig:RN-chamber1}
\end{figure}
\begin{figure}[htbp]
    \centering
    % === LEFT SUBFIGURE ===
    \begin{subfigure}[b]{0.48\textwidth}
        \centering
        \resizebox{\textwidth}{!}{% <-- Scales TikZ to fit the 0.48 width exactly
            \begin{tikzpicture}

    % Axes
    \draw[-{Stealth[scale=1.2]}] (-1,0) -- (5.5,0)node[right] {$\mr{Re}\,\tau_\infty$};
    \draw[-{Stealth[scale=1.2]}] (0,0) -- (0,3.5)node[above] {$\mr{Re}\,\tau_\infty$};

    % Axis labels
    \node[below] at (0,0) {0};
    \node[below] at (2,0) {$\frac{1}{2}$};
    \node[below] at (4,0) {1};

    % Left straight line boundary starting at (0,0)
    \draw[thick,red] (0,0) -- (1,3.2);
    
    % Bottom curved boundary from 0 to 1/2
    \draw[thick,red] (0,0) to[out=25, in=155] (2,0);
    
    % Bottom curved boundary from 1/2 to 1
    \draw[thick,red] (2,0) to[out=25, in=155] (4,0);
    
    % Right straight line boundary starting at (1,0)
    \draw[thick,red] (4,0) -- (5.2,2.8);

    % Label for Region R
    \node at (2.4,1.8) {$\mathcal{R}_2$};

\end{tikzpicture}% <-- Percent sign prevents extra space issues
        }
        \caption{The fundamental chamber $\mathcal{R}_2$ in the $\tau_\infty$-plane bounded by walls of marginal stability.} % Fixed \CR to \mathcal{R}
        \label{fig:sub-chamber1}
    \end{subfigure}
    \hfill % === Pushes them to the outer margins perfectly ===
    % === RIGHT SUBFIGURE ===
    \begin{subfigure}[b]{0.5\textwidth}
        \centering
        \resizebox{\textwidth}{!}{% <-- Scales the large scale=3 diagram down to fit
            \begin{tikzpicture}[scale=1.5]
        % Axes
        \draw[->, thick] (-2.5,0) -- (2.5,0) node[right] {$x := \frac{z_2}{\tau_2}$};
        \draw[->, thick] (0,0) -- (0,2.5) node[above] {$y := \frac{\sigma_2}{\tau_2}$};
        \node[below] at (0,0) {$0$};

        % Parabola (y = 0.25 * x^2 to spread out x=1 and x=2 beautifully)
        \draw[thick, domain=-2.3:2.3, samples=200] plot (\x, {0.25*\x*\x});

        % --- Red Polygonal Path Lines ---
        % Right side vertices: (0,0) -> (1, 0.25) -> (2, 1.0) -> vertical up
        \draw[thick, red] (0,0) -- (1, 0.25);
        \draw[thick, red] (1, 0.25) -- (2, 1.0);
        \draw[thick, red] (2, 1.0) -- (2, 2.3); % Vertical line

        % Left side vertices (Symmetric counterparts)
        \draw[thick, red] (0,0) -- (-1, 0.25);
        \draw[thick,red] (0,0) -- (0,2.3);
        \draw[thick, red] (-1, 0.25) -- (-2, 1.0);
        \draw[thick, red] (-2, 1.0) -- (-2, 2.3); % Vertical line

        % --- Text Labels and Annotations ---
        % Chamber labels
        \node at (-0.7, 1.3) {$\mathcal{R}_2$};
        \node at (0.7, 1.3) {$\mathcal{L}_2$};

        % Numerical markers along the curve
        \node[below right] at (1, 0.25) {$1$};
        \node[below right] at (2, 1.0) {$2$};
        \node[below left] at (-1, 0.25) {$-1$};
        \node[below left] at (-2, 1.0) {$-2$};
        
    \end{tikzpicture}%
        }
        \caption{Chambers in the $x$-$y$ plane separated by pole locations of $1/\Phi_{k}$.}
        \label{fig:sub-chamber2}
    \end{subfigure}
    
    % Main figure caption
    \caption{Fundamental $\CR_2$-chamber in the $\tau_\infty$ and $x$-$y$ plane.}
    \label{fig:RN-chamber2}
\end{figure}
\begin{figure}[htbp]
    \centering
    % === LEFT SUBFIGURE ===
    \begin{subfigure}[b]{0.48\textwidth}
        \centering
        \resizebox{\textwidth}{!}{% <-- Scales TikZ to fit the 0.48 width exactly
            \begin{tikzpicture}

    % Axes
    \draw[-{Stealth[scale=1.2]}] (-1,0) -- (6.5,0)node[right] {$\mr{Re}\,\tau_\infty$};
    \draw[-{Stealth[scale=1.2]}] (0,0) -- (0,3.5)node[above] {$\mr{Im}\,\tau_\infty$};

    % Axis labels
    \node[below] at (0,0) {0};
    \node[below] at (1.8,0) {$\frac{1}{3}$};
    \node[below] at (2.7,0) {$\frac{1}{2}$};
    \node[below] at (3.6,0) {$\frac{2}{3}$};
    \node[below] at (5.1,0) {1};

    % Left straight line boundary starting at (0,0)
    \draw[thick,red] (0,0) -- (1,3.2);
    
    % Bottom curved boundaries (Farey arcs/modular-like regions)
    \draw[thick,red] (0,0) to[out=35, in=145] (1.8,0);
    \draw[thick,red] (1.8,0) to[out=45, in=135] (2.7,0);
    \draw[thick,red] (2.7,0) to[out=45, in=135] (3.6,0);
    \draw[thick,red] (3.6,0) to[out=35, in=145] (5.1,0);
    
    % Right straight line boundary starting at (1,0)
    \draw[thick,red] (5.1,0) -- (6.3,2.8);

    % Label for Region R
    \node at (3.1,2.0) {$\mathcal{R}_3$};

\end{tikzpicture}% <-- Percent sign prevents extra space issues
        }
        \caption{The fundamental chamber $\mathcal{R}_3$ in the $\tau_\infty$-plane bounded by walls of marginal stability.} % Fixed \CR to \mathcal{R}
        \label{fig:sub-chamber1}
    \end{subfigure}
    \hfill % === Pushes them to the outer margins perfectly ===
    % === RIGHT SUBFIGURE ===
    \begin{subfigure}[b]{0.5\textwidth}
        \centering
        \resizebox{\textwidth}{!}{% <-- Scales the large scale=3 diagram down to fit
            \begin{tikzpicture}[scale=1.5]
        % Axes
        \draw[->, thick] (-3.5,0) -- (3.5,0) node[right] {$x := \frac{z_2}{\tau_2}$};
        \draw[->, thick] (0,0) -- (0,2.5) node[above] {$y := \frac{\sigma_2}{\tau_2}$};
        \node[below] at (0,0) {$0$};

        % Parabola (y = 0.2 * x^2 to fit up to x=3 beautifully)
        \draw[thick, domain=-3.2:3.2, samples=200] plot (\x, {0.2*\x*\x});

        % --- Red Polygonal Path Lines ---
        % Right side vertices: (0,0) -> (1, 0.2) -> (1.5, 0.45) -> (2, 0.8) -> (3, 1.8) -> up
        \draw[thick, red] (0,0) -- (1, 0.2);
        \draw[thick, red] (1, 0.2) -- (1.5, 0.45);
        \draw[thick, red] (1.5, 0.45) -- (2, 0.8);
        \draw[thick, red] (2, 0.8) -- (3, 1.8);
        \draw[thick, red] (3, 1.8) -- (3, 2.3); % Slanted near-vertical line going up

        % Left side vertices (Symmetric counterparts)
        \draw[thick, red] (0,0) -- (-1, 0.2);
        \draw[thick, red] (-1, 0.2) -- (-1.5, 0.45);
        \draw[thick, red] (-1.5, 0.45) -- (-2, 0.8);
        \draw[thick, red] (-2, 0.8) -- (-3, 1.8);
        \draw[thick,red] (0,0) -- (0,2.3);
        \draw[thick, red] (-3, 1.8) -- (-3, 2.3); % Slanted near-vertical line going up

        % --- Text Labels and Annotations ---
        % Chamber labels
        \node at (-0.9, 1.4) {$\mathcal{R}_3$};
        \node at (0.9, 1.4) {$\mathcal{L}_3$};

        % Right side numerical markers
        \node[below, scale=0.9] at (1, 0.2) {$1$};
        \node[below right, scale=0.9] at (1.5, 0.45) {$\frac{3}{2}$};
        \node[right, scale=0.9] at (2, 0.8) {$2$};
        \node[below right, scale=0.9] at (3, 1.8) {$3$};

        % Left side numerical markers
        \node[below left, scale=0.9] at (-1, 0.2) {$-1$};
        \node[below left, scale=0.9] at (-1.5, 0.4) {$-\frac{3}{2}$};
        \node[left, scale=0.9] at (-2, 0.8) {$-2$};
        \node[left, scale=0.9] at (-3, 1.8) {$-3$};
        
    \end{tikzpicture}%
        }
        \caption{Chambers in the $x$-$y$ plane separated by pole locations of $1/\Phi_{k}$.}
        \label{fig:sub-chamber2}
    \end{subfigure}
    
    % Main figure caption
    \caption{Fundamental $\CR_3$-chamber in the $\tau_\infty$ and $x$-$y$ plane.}
    \label{fig:RN-chamber3}
\end{figure}
\par 
Let us now map the $\CR_N$-chamber to the $x$-$y$ plane. The chambers are mapped to the $x$-$y$ plane by the following correspondence \cite{1104.1498}:
\begin{eqsp}
    \left(\frac{b}{d}, \frac{a}{c}\right) \longleftrightarrow c d+a b y +(a d+b c) x=0 ~,
\end{eqsp}
where $\left(\frac{b}{d}, \frac{a}{c}\right) $ denotes the circle segment connecting $\frac{b}{d}, \frac{a}{c}$ in the $\tau_\infty$-plane. The image of the $\CR_N$-chamber in the $x$-$y$ plane is also denoted by the $\CR_N$. It is bounded by lines $x=0,x=-N$ on right and left and a bunch of straight lines intersecting the parabola from below. The straight line $c d+a b y +(a d+b c) x=0$ intersects the parabola $y=x^2$ at the points 
\begin{eqsp}
    x=-\frac{c}{a},-\frac{d}{b}.
\end{eqsp}
This determines the vertices of the $\CR_N$-chamber in the $x$-$y$ plane from the vertices in the $\tau_\infty$-plane. Thus in the increasing order of the $x$-values, the vertices of the $\CR_N$-chamber in the $x$-$y$ plane are given by 
\begin{eqsp}\label{eq:vert_RN_xy}
    -N,-N+1,-\frac{N(N-2)}{N-1},\cdots,-Nv_c,\cdots,-v_c^{-1},\cdots-\frac{N(N-2)}{N^2-3N+1},-\frac{N-1}{N-2},-\frac{N}{N-1},-1,0,\infty~.
\end{eqsp}
Explicitly we have 
\begin{eqsp}
N=1:\quad &\{0,-1,\infty\}~,\\
    N=2:\quad & \left\{0,-1,-2,\infty\right\}~,
    \\
    N=3:\quad & \left\{0,-1,-\frac{3}{2},-2,-3,\infty\right\}~,
    \\
    N=4:\quad & 
\left\{
0,\;
-1,\;
-\frac{4}{3},\;
-\frac{3}{2},\;
-\frac{8}{5},\;
-\frac{5}{3},\;
-\frac{12}{7},\;
-\frac{7}{4},\;
-\frac{16}{9},\;
-\frac{9}{5},\cdots,
-2,\cdots,\right.
\\&\left.\quad\quad-\frac{20}{9},\;
-\frac{9}{4},\;
-\frac{16}{7},\;
-\frac{7}{3},\;
-\frac{12}{5},\;
-\frac{5}{2},\;
-\frac{8}{3},\;
-3,\;
-4,\;
\infty
\right\}~.
\end{eqsp}
Equivalently, we have 
\begin{eqsp}
\CR_1:\quad &z_2<0,\quad z_2>-\tau_2,\quad z_2>-\sigma_2~,
\\
    \CR_2:\quad &z_2<0,\quad z_2>-2\tau_2,\quad z_2>-\sigma_2,\quad 3z_2+2\tau_2>-\sigma_2~,
    \\
    \CR_3:\quad & z_2<0,\quad z_2>-3\tau_2,\quad z_2>-\sigma_2,\quad  5z_2+3\tau_2>-2\sigma_2,\quad 7z_2+6\tau_2>-2\sigma_2,\\&5z_2+6\tau_2>-\sigma_2~.
\end{eqsp}
$\begin{pmatrix}
    a&b\\c&d
\end{pmatrix}\in\mathrm{P}\Gamma_1(N)^+$ acts on points in the $x$-$y$ plane as:
\be \label{e471y}
x  \longmapsto  {ac + bd y + (ad + bc) x
\over a^2 + b^2 y + 2ab x}, 
\qquad  y \longmapsto  {c^2 + d^2 y + 2cd x
\over a^2 + b^2 y + 2ab x}\, .
\ee
This action is the same as the $\mr{P}\Gamma_1(N)^+$-action on $\Omega$ given by $\Omega\to\gamma\Omega\gamma^t$ but now written in variables $x=z_2/\tau_2,y=\sigma_2/\tau_2$. 
The image of the action of $\mathrm{P}\Gamma_1(N)^+$ on a given chamber can be determined by the image of the vertices of the chamber. In particular, a $\mathrm{P}\Gamma_1(N)^+$ matrix can either permute the vertices of a given chamber 
(in which case, it preserves the chamber) or maps the vertices of one chamber to the vertices of another chamber. For example, the matrix $\begin{pmatrix}
    0&-1/\sqrt{N}\\\sqrt
    N&0
\end{pmatrix}$ acts as on a vertex $(x,y=x^2)$ as
\begin{eqsp}\label{eq:xtoN/x}
    x\mapsto -\frac{N}{x}~.
\end{eqsp}
Thus, under $\begin{pmatrix}
    0&-1/\sqrt{N}\\\sqrt
    N&0
\end{pmatrix}$, the $\CR_N$ chamber is mapped to the $\CL_N$ chamber, bounded by lines $x=0,x=N$ from left and right respectively and a bunch of lines intersecting the parabola from below. The $\CR_N$-chamber and $\CL_N$-chamber is shown in Figure \ref{fig:RN-chamber1}--\ref{fig:RN-chamber3} for $N=1,2,3$. Similarly, the matrix $\begin{pmatrix}
    1&0\\kN&1
\end{pmatrix}$ acts on a vertex  $(x,y=x^2)$ as
\begin{eqsp}\label{esl2zonxy}
\begin{pmatrix}
    1&0\\kN&1
\end{pmatrix}:x\mapsto x+kN~.    
\end{eqsp}
We will again specify a chamber in the $x$-$y$ plane by its vertices. For
example, the chamber labeled as $\CR_N$ in Figure \ref{fig:RN-chamber1}--\ref{fig:RN-chamber3}
will be labeled as $(-N,...,0,\infty)$ and the chamber
labeled as $\CL_N$ in Figure \ref{fig:RN-chamber1}--\ref{fig:RN-chamber3} is labeled as $(0,...,N,\infty)$. 
Since in our expression for $\CF_{i},~i=1,2,3,4$, replacing $\Omega$ by $\gamma\Omega
\gamma^t$  has the effect of multiplying the $\Gamma_1(N)$ matrix
$\begin{pmatrix} a&b\cr c&d\end{pmatrix}$ by $\gamma$ from the right, and we
are summing over all $\Gamma_1(N)$ matrices,
$\CF_{i},~i=1,2,3,4$ are formally satisfies
\be \label{esl3zoffi}
\CF_i(\gamma\Omega\gamma^t)=\CF_i(\Omega),\qquad\gamma\in\Gamma_1(N)~,\quad i=1,2,3,4\, .
\ee
From the expression for $\CF_4$, we see that it is formally invariant under the larger group $\mathrm{P}\Gamma_1(N)^+$. 
We will prove the convergence of $\CF_i$ for $(\tau_2,\sigma_2,z_2)$
in a particular chamber and then use \refb{esl3zoffi} to define
$\CF_i(\Omega)$ in any other chamber.
Hence, we can restrict the original choice of $(x,y)$ to one particular
chamber for proving convergence of the sum. We shall take this to be the 
chamber\footnote{We are assuming that $\CR_N$ is a fundamental chamber, in the sense that any other chamber can be reached from $\CR_N$ by the action of some $\Gamma_1(N)$ transformation. This has been proved for $N\leq 3$ in \cite{Sen:2007vb}.} $\CR_N$. 
We will prove the following theorem:
\begin{thm}\label{thm:S_conv}
For $N=2,3$, the sum over $a,b,c,d$ and $r$ in \eqref{eguessfin} converges absolutely and uniformly on compact subsets of $\CR_N$. 
\end{thm}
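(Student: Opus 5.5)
We treat the four pieces $\CF_1,\dots,\CF_4$ of \eqref{eq:tildeFCF123} separately and, for each, bound the summand by a quantity depending only on $(a,b,c,d,r)$ and on the compact set $K\subset\CR_N$. This reduces absolute and uniform convergence to a Weierstrass M-test. The basic tool is the positive definite quadratic form $Q_\Omega(u,v):=u^2\tau_2+v^2\sigma_2+2uvz_2$. On $K$ it is uniformly equivalent to $u^2+v^2$: there is $\lambda_K>0$ with $Q_\Omega(u,v)\ge\lambda_K(u^2+v^2)$ for all $\Omega$ with imaginary part in $K$. We also use the bilinear quantity $B_\Omega(\gamma):=ac\tau_2+bd\sigma_2+(ad+bc)z_2$, which is the argument of all the Heaviside functions. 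In this notation the imaginary parts of the exponents are $Q_\Omega(a,b)$, $Q_\Omega(c,d)/N$ and $rB_\Omega(\gamma)$. Since $f_+,g_+$ have only non-negative Fourier modes, $|f_+(a^2\tau+\dots)|\le\sum_p|f_p|e^{-2\pi pQ_\Omega(a,b)}$ is bounded uniformly, and it decays like $e^{-2\pi Q_\Omega(a,b)}$ once the constant term is removed.

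For $\CF_1$ the key observation is the following. The walls of $\CR_N$ are exactly the loci $B_\Omega(\gamma)=0$ for the finitely many (for $N=2,3$) boundary matrices, and no wall crosses the interior. Hence on $K$ we have $|B_\Omega(\gamma)|\ge\delta_K>0$ for every $\gamma\in\mr{P}\Gamma_1(N)$, using that $B_\Omega(\gamma)=0$ along a wall and that $K$ stays a positive distance from all walls. This makes $\big(e^{\pi i(\cdot)}-e^{-\pi i(\cdot)}\big)^{-2}$ bounded by $Ce^{-2\pi|B_\Omega(\gamma)|}$. We then need $\sum_\gamma e^{-2\pi|B_\Omega(\gamma)|}<\infty$. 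We use the identity
\be
B_\Omega(\gamma)^2-Q_\Omega(a,b)Q_\Omega(c,d)=-\det\mr{Im}\,\Omega\,(ad-bc)^2=-\det\mr{Im}\,\Omega ,
\ee
which gives $|B_\Omega(\gamma)|\ge\sqrt{Q_\Omega(a,b)Q_\Omega(c,d)-\det\mr{Im}\,\Omega}$. This is $\gtrsim\lambda_K\|(a,b)\|\,\|(c,d)\|$ once the product is large. Counting matrices with $\|(a,b)\|\,\|(c,d)\|\le R$ gives polynomial growth in $R$, so $\CF_1$ converges.

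For $\CF_2$ and $\CF_3$ the two Heaviside functions force $0<B_\Omega(\gamma)<rNQ_\Omega(a,b)$ (respectively $<rQ_\Omega(c,d)$ up to the sign conventions in \eqref{ef3}). The modulus of the summand is
\be
r\,e^{-2\pi\left(rB_\Omega(\gamma)-Q_\Omega(c,d)/N\right)}\,|f_+|,
\ee
so we need the combination $rB_\Omega(\gamma)-Q_\Omega(c,d)/N$ to grow. We write $(c,d)=(c'+raN,d'+rbN)$, which is the metamorphosis shift already used in \eqref{emetamor}. Expanding shows the exponent is a positive combination of $Q_\Omega(a,b)$ and $Q_\Omega(c',d')$-type terms, controlled through the two Heaviside inequalities. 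For $N=2,3$ we then use that the lower boundary of $\CR_N$ consists of finitely many walls. This lets us show that only the finitely many $\gamma$ whose wall touches the relevant corner can give exponents bounded below. The sum over $r$ is then geometric.

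For $\CF_4$ the infinite product of Heaviside functions restricts the $G^N_r$-orbit to a finite, explicitly describable set of coset representatives. Its exponent is $-Q_\Omega(a,b)-Q_\Omega(c,d)/N+rB_\Omega(\gamma)$, and we bound it from below by $\sim r$ times a positive constant on $K$ after choosing the representative minimizing $Q_\Omega(a,b)$.

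\textbf{Main obstacle.} The hardest step is the case analysis for $\CF_2,\CF_3$ (and $\CF_4$). We must show the exponent is bounded below by $c_K(r+\|\gamma\|^2)$ on the region allowed by the Heaviside functions. This uses the explicit vertex list \eqref{eq:vert_RN_xy} for $N=2,3$. We would first try to reduce to checking the inequality at the finitely many vertices of $\CR_N$ by linearity in $(x,y)$. This finiteness is exactly what fails for $N\ge4$.
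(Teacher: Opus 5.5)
\textbf{The gap is in $\CF_4$.} Your argument for $\CF_1$ is fine, and in fact cleaner than the paper's: the identity $B_\Omega(\gamma)^2=Q_\Omega(a,b)Q_\Omega(c,d)-\det\mr{Im}\,\Omega$ together with a uniform $\delta_K$ replaces the wall-by-wall decompositions \eqref{eq:bound_tau2N2}, \eqref{eq:bound_tau2N3}.

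For $\CF_4$ you assert that, for each $r$, the Heaviside product leaves only finitely many cosets in $G^N_r\backslash\mr{P}\Gamma_1(N)^+$, and you then bound the exponent only by a multiple of $r$. That finiteness holds only for the cosets whose image chamber has a vertex in $S_1\cup S_2$ (the paper's $\CS_{12}$, at most $2Nr$ per $r$). For $r\ge2$ the interval $(u_c,u_c')$ is non-empty, and none of the boundary lines generated by $g_{r,N}^{\pm n}$ enters the strip $u_c<x'<u_c'$. So every chamber with all its vertices in $(u_c,u_c')$ survives, and there are infinitely many such cosets.

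For example, take $N=r=2$ and $\gamma_b=\begin{pmatrix}1&b\\2&2b+1\end{pmatrix}$ with $b\ge2$. These send the vertices $0,-1,-2,\infty$ of $\CR_2$ to $2,\frac{2b-1}{b-1},\frac{4b}{2b-1},\frac{2b+1}{b}$, all in $(2-\sqrt2,2+\sqrt2)$. A $G^2_2$-orbit of chambers contains at most four chambers with $2$ as a vertex, so these $\gamma_b$ lie in infinitely many distinct cosets.

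A bound $e^{-cr}$ that is uniform in the coset cannot make this sum converge. What is missing is decay in $\mu(a,b,c,d)$, uniformly in $r$, on this $S_3$-family, and that is the core of the paper's proof:
\begin{itemize}
\item Lemma~\ref{lemma:coset_reps} picks coset representatives with $2/r\le c/a<rN/2$.
\item Lemma~\ref{lemma:bound_B} shows the exponent is at most $-\mathrm{const}\cdot C_N(\Omega)\mu(a,b,c,d)$. It writes the exponent as positive combinations of wall functionals with coefficients such as $\frac{ac}{N}\big(\frac ca+\frac{aN}c-rN\big)\le\frac{ac}{N}\big(\frac2r-\frac{rN}2\big)$, controls the other vertices by unimodularity, and needs a separate case analysis at $r=2$, where the crude bound is not negative.
\end{itemize}
Your own identity gives $rB_\Omega(\gamma)-Q_\Omega(a,b)-Q_\Omega(c,d)/N=\frac{Q_\Omega(a,b)}{N}(t-u_c)(u_c'-t)-\frac{\det\mr{Im}\,\Omega}{NQ_\Omega(a,b)}$ with $t=B_\Omega(\gamma)/Q_\Omega(a,b)$. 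This makes clear that the quantitative estimate is needed exactly on that region.

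\textbf{Your sketch for $\CF_2,\CF_3$ also fails as written, though it is repairable.} The shift $(c,d)=(c'+raN,d'+rbN)$ gives $rB_\Omega(\gamma)-Q_\Omega(c,d)/N=-rB_\Omega(\gamma')-Q_\Omega(c',d')/N$, where $\gamma'=\begin{pmatrix}1&0\\-rN&1\end{pmatrix}\gamma$. This is the same expression with the sign of $B$ flipped, not a positive combination. Your target lower bound $c_K(r+\|\gamma\|^2)$ is also false. For instance, $\gamma=\begin{pmatrix}1&b\\0&1\end{pmatrix}$ with $r=1$ passes both Heaviside tests in $\CR_2$ for every $b\ge1$, yet its exponent $b\sigma_2+z_2-\sigma_2/2$ is only linear in $\mu$.

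Your tools do close this case once you use the product of the two Heaviside arguments. Since $B_\Omega(\gamma)+|B_\Omega(\gamma')|=rNQ_\Omega(a,b)$,
\[ rB_\Omega(\gamma)-\frac{Q_\Omega(c,d)}{N}=\frac{B_\Omega(\gamma)\,|B_\Omega(\gamma')|-\det\mr{Im}\,\Omega}{N\,Q_\Omega(a,b)}\ \ge\ \frac r2\,\min\big(B_\Omega(\gamma),|B_\Omega(\gamma')|\big)-\frac{\det\mr{Im}\,\Omega}{N\,Q_\Omega(a,b)}. \]
Both $B$'s are at least $\delta_K$, and by the determinant identity they are $\gtrsim\lambda_K\|(a,b)\|$ for large $\|(a,b)\|$. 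For fixed $(a,b)$ and $r$, at most $r$ second rows $(c,d)\in(c_1,d_1)+N\IZ(a,b)$ pass the Heaviside tests. Summing then gives convergence. $\CF_3$ works the same way with the roles of the two rows exchanged. This route differs from the paper's $r_0$, $\lceil r_0\rceil$, $\epsilon$ splitting and does not use finiteness of the walls.
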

We will prove this by showing that the sums $\CF_1,\CF_2,\CF_3,\CF_4$ converges absolutely and uniformly on compact subsets of $\CR_N$.  
\subsection{Convergence of $\CF_1,\CF_2,\CF_3$}\label{sec:conv_CF1}
As discussed above, we shall work in the chamber
\begin{eqsp}\label{echamber}
    \CR_2:\quad &z_2<0,\quad z_2>-2\tau_2,\quad z_2>-\sigma_2,\quad 3z_2+2\tau_2>-\sigma_2~,
    \\
    \CR_3:\quad & z_2<0,\quad z_2>-3\tau_2,\quad z_2>-\sigma_2,\quad  5z_2+3\tau_2>-2\sigma_2,\quad 7z_2+6\tau_2>-2\sigma_2,\\&5z_2+6\tau_2>-\sigma_2~.
\end{eqsp}
We will write
\begin{equation} \label{edefprimenew}
    \tau'_2:=a^2\tau_2+b^2\sigma_2+2abz_2,\quad \sigma'_2:=c^2\tau_2+d^2\sigma_2+2cdz_2,\quad z_2':=ac\tau_2+bd\sigma_2+(ad+bc)z_2~,
\end{equation}
for the transformed variables. 
The absolute and uniform convergence of $\CF_1$ is identical to \cite{Bhand:2025ghn} with the following improvement for bound on $f_+(\tau'),g_+(\sigma')$. In the $\CR_N$-chamber, we have
\begin{eqsp}\label{eq:bound_tau2N2}
N=2:\quad    a^2\tau_2+b^2\sigma_2+2ab z_2&=
a^2(\tau_2+z_2)+b^2(\sigma_2+z_2)+(a-b)^2(-z_2)
\\
&=(a-b)^2(2\tau_2+z_2)+b^2(3z_2+2\tau_2+\sigma_2)+(a-2b)^2(-\tau_2-z_2)~.
\end{eqsp}
\begin{eqsp}
N=2:\quad   
c^2\tau_2+d^2\sigma_2+2cd z_2&=
c^2(\tau_2+z_2)+d^2(\sigma_2+z_2)+(c-d)^2(-z_2)
\\
&=(c-d)^2(2\tau_2+z_2)+d^2(3z_2+2\tau_2+\sigma_2)+(c-2d)^2(-\tau_2-z_2)~.
\end{eqsp}
\begin{eqsp}
N=2:\quad  ac\tau_2+bd\sigma_2+(ad+bc)z_2&=ac(\tau_2 + z_2) + bd(\sigma_2 + z_2) + (a - b)(c -d)(-z_2)
\\&= bd(3z_2+2\tau_2+\sigma_2) +(a - b)(c - d)(2\tau_2+z_2) \\&+ (a - 2b)(c - 2d)(-\tau_2-z_2)~.
\end{eqsp}
Thus, for $\Omega\in\CR_2$ we have 
\begin{eqsp}
a^2\tau_2+b^2\sigma_2+2ab z_2 &\geq C_2(\Omega)~,
\\
c^2\tau_2+d^2\sigma_2+2cd z_2 &\geq C_2(\Omega)~,   
\end{eqsp}
where 
\begin{eqsp}\label{eq:def_C2}
   C_2(\Omega) := \begin{cases}
       {\rm Min} (\tau_2+z_2, \sigma_2+z_2, -z_2),& -\tau_2\leq z_2< 0,
       \\
       {\rm Min} (2\tau_2+z_2, 3z_2+2\tau_2+\sigma_2, -\tau_2-z_2),& -2\tau_2< z_2\leq  -\tau_2~.
   \end{cases} 
\end{eqsp}
Similarly
\begin{eqsp}\label{eq:bound_tau2N3}
    N=3:\quad    a^2\tau_2+b^2\sigma_2+2ab z_2&=
a^2(\tau_2+z_2)+b^2(\sigma_2+z_2)+(a-b)^2(-z_2)
\\
&=\frac{1}{2}b^2(5z_2+3\tau_2+2\sigma_2)+(a-b)^2(2z_2+3\tau_2)\\&+\frac{1}{2}(2a-3b)^2(-z_2-\tau_2)
\\
&=\frac{1}{2}b^2(7z_2+6\tau_2+2\sigma_2)+\frac{1}{2}(2a-3b)^2(z_2+2\tau_2)\\&+(a-2b)^2(-2z_2-3\tau_2)
\\
&=b^2(5z_2+6\tau_2+\sigma_2) + (a-3b)^2(-z_2-2\tau_2) \\&+ (a-2b)^2(z_2+3\tau_2)
\end{eqsp}
\begin{eqsp}
N=3:\quad   c^2\tau_2+d^2\sigma_2+2cd z_2&=
c^2(\tau_2+z_2)+d^2(\sigma_2+z_2)+(c-d)^2(-z_2)
\\
&=\frac{1}{2}d^2(5z_2+3\tau_2+2\sigma_2)+(c-d)^2(2z_2+3\tau_2)\\&+\frac{1}{2}(2c-3d)^2(-z_2-\tau_2)
\\
&=\frac{1}{2}d^2(7z_2+6\tau_2+2\sigma_2)+\frac{1}{2}(2c-3d)^2(z_2+2\tau_2)\\&+(c-2d)^2(-2z_2-3\tau_2)
\\
&=d^2(5z_2+6\tau_2+\sigma_2) + (c-3d)^2(-z_2-2\tau_2) \\&+ (c-2d)^2(z_2+3\tau_2)
\end{eqsp}
\begin{eqsp}
N=3:\quad   
ac\tau_2+bd\sigma_2+(ad+bc)z_2&=ac(\tau_2 + z_2) + bd(\sigma_2 + z_2) + (a - b)(c -d)(-z_2)
\\&=\frac{1}{2}bd(5z_2+3\tau_2+2\sigma_2) + (a - b)(c - d)(2z_2+3\tau_2)  \\&+\frac{1}{2}(2a - 3b)(2c - 3d)(-\tau_2-z_2)
\\
&=\frac{1}{2}bd(7z_2+6\tau_2+2\sigma_2) + \frac{1}{2}(2a - 3b)(2c - 3d))(z_2+2\tau_2) \\&+ (a - 2b)(c - 2d)(-2z_2-3\tau_2)
\\
&=bd(5z_2+6\tau_2+\sigma_2) + (a - 3b)(c - 3d)(-z_2-2\tau_2) \\&+ (a - 2b)(c - 2d)(z_2+3\tau_2)
\end{eqsp}
Thus, for $\Omega\in\CR_3$ we have 
\begin{eqsp}\label{eq:tau2,sig2_bound}
a^2\tau_2+b^2\sigma_2+2ab z_2 &\geq C_3(\Omega)~,
\\
c^2\tau_2+d^2\sigma_2+2cd z_2 &\geq C_3(\Omega)~,   
\end{eqsp}
where 
\begin{eqsp}\label{eq:def_C3}
   C_3(\Omega) := \begin{cases}
       {\rm Min} (\tau_2+z_2, \sigma_2+z_2, -z_2),& -\tau_2\leq z_2< 0,
       \\
       {\rm Min} (5z_2+3\tau_2+2\sigma_2, 3\tau_2+2z_2, -\tau_2-z_2),& -\frac{3}{2}\tau_2\leq  z_2\leq  -\tau_2,
       \\
       {\rm Min} (7z_2+6\tau_2+2\sigma_2, 2\tau_2+z_2, -3\tau_2-2z_2),&-2\tau_2\leq z_2\leq -\frac{3}{2}\tau_2,
       \\
      {\rm Min} (5z_2+6\tau_2+\sigma_2, 3\tau_2+z_2, -2\tau_2-z_2), &-3\tau_2<z_2\leq -2\tau_2~.
   \end{cases} 
\end{eqsp}
Arguing in the same way as in \cite{Bhand:2025ghn}, we see that the coefficients appearing in the expressions for $z_2'$ have the same sign and hence we have, for $N=2,3$, the bound
\begin{eqsp}\label{eq:z2'_bound}
    |ac\tau_2+bd\sigma_2+(ad+bc)z_2|\geq KC_N(\Omega)\mu(a,b,c,d)~,\quad \Omega\in\CR_N~,
\end{eqsp}
where $K>0$ is a constant and
\begin{eqsp}\label{eq:def_mu}
    \mu(a,b,c,d):=\mathrm{Max}(|a|, |b|, |c|, |d|)~.
\end{eqsp}
Clearly, $C_N(\Omega)$ is continuous and positive function. 
Then we have 
\begin{eqsp}\label{eq:f+_bound}
    |f_+(a^2\tau+b^2\sigma+2ab z)|\leq \sum_{p\geq 0}|f_p|e^{-2\pi(a^2\tau_2+b^2\sigma_2+2ab z_2)}\leq \tilde{f}_+(C_N(\Omega))~,
    \\
    |g_+(c^2\tau+d^2\sigma+2cd z)|\leq \sum_{q\geq 0}|g_q|e^{-2\pi(c^2\tau_2+d^2\sigma_2+2cd z_2)}\leq \tilde{g}_+(C_N(\Omega))~,
\end{eqsp}
where 
\begin{eqsp}
\tilde{f}_+(x):=\sum_{p\geq 0}|f_p|e^{-2\pi x}~,\quad   \tilde{g}_+(x):=\sum_{q\geq 0}|g_q|e^{-2\pi x}~,  
\end{eqsp}
and we used the fact that 
\begin{enumerate}
    \item the Fourier expansion for $f_+,g_+$ converges absolutely and uniformly on $\IH$, so that $\tilde{f}_+(x),\tilde{g}_+(x)$
are continuous functions on $\CR_N$,
\item $\tilde{f}_+(x),\tilde{g}_+(x)$ are monotonically increasing functions. 
\end{enumerate}
Since $C_N(\Omega)$ is also a continuous function, on any compact subset $\CO\subset\CR_N$, there exists constants  $C_{\text{upper}}^{f_+}(\CO),C_{\text{upper}}^{g_+}(\CO)>0$ such that 
\begin{eqsp}\label{eq:f+_bound_cpt_set}
    \tilde{f}_+(C_N(\Omega))\leq C_{\text{upper}}^{f_+}(\CO)~,\quad \tilde{g}_+(C_N(\Omega))\leq C_{\text{upper}}^{g_+}(\CO)~,\quad \Omega\in\CO~.
\end{eqsp} 
Following the proof of \cite[Proposition 4.1, Proposition 4.2]{Bhand:2023rhm}, we get the holomorphicity of $\CF_1$ in the $\CR_N$-chamber.
\par For the convergence of $\CF_2$, we follow the same strategy as in \cite{Bhand:2025ghn}. Following the steps in  \cite{Bhand:2025ghn}, we get 
\begin{eqsp}
    |\CF_2|\leq \CF_2^{>}+\CF_2^{<}~,
\end{eqsp}
where 
\begin{equation}
\begin{aligned}
\mathcal{F}_{2}^{<} &:= C_{\text{upper}}^{f_{+}}(\mathcal{O}) \sum_{\big{(}\begin{smallmatrix} a & b \\ c & d \end{smallmatrix}\big{)} \in \mathrm{P}\Gamma_1(N)} \lceil r_{0} \rceil H(z_{2}^{\prime}) e^{2\pi(\tau_{2}\sigma_{2}-z_{2}^{2})/N\tau_{2}^{\prime}} e^{-2\pi (\lceil r_{0} \rceil - r_{0}/N)|z_{2}^{\prime}|} \\
\mathcal{F}_{2}^{>} &:= C_{\text{upper}}^{f_{+}}(\mathcal{O}) \sum_{\big{(}\begin{smallmatrix} a & b \\ c & d \end{smallmatrix}\big{)}\in \mathrm{P}\Gamma_1(N)} \sum_{\substack{r \ge \lceil r_{0} \rceil + 1 \\ r \in \mathbb{Z}}} r H(z_{2}^{\prime}) e^{2\pi(\tau_{2}\sigma_{2}-z_{2}^{ 2})/N\tau_{2}^{\prime}} e^{-2\pi (r - r_{0}/N)|z_{2}^{\prime}|} ~,
\end{aligned}
\end{equation}
where 
\begin{eqsp}
    r_0:=\frac{z_2'}{\tau_2'}>0~,
\end{eqsp}
and $\r0$ is the lowest integer larger than or equal to $r_0$.
For the proof of convergence of $\CF_2^{>}$, we use the following bound 
\begin{eqsp}
    ac\tau_2+bd\sigma_2+(ad+bc)z_2\leq 2L\mu(a,b,c,d)^2E_N(\Omega)~,
\end{eqsp}
where $L$ is a constant $E_N(\Omega)$ is defined by replacing Min by Max in the definition of $C_N(\Omega)$. Note that $E_N(\Omega)$ is only piecewise continuous on $\CR_N$ but it has a finite number of jump discontinuities and hence is still bounded on compact subsets of $\CR_N$. With this input, the proof of the uniform convergence of $\CF_2^{>}$ is identical to \cite[Proposition 4.3]{Bhand:2025ghn}. To prove the convergence of $\CF_2^{<}$, we again define for $0<\eps<\frac{1}{2}$ 
\begin{equation}
\begin{aligned}
\mathcal{F}_{2}^{<\eps} &:= C_{\text{upper}}^{f_{+}}(\mathcal{O}) \sum_{\substack{\big{(}\begin{smallmatrix} a & b \\ c & d \end{smallmatrix}\big{)}\in \mathrm{P}\Gamma_1(N)\\r_0/\r0< \eps}} \lceil r_{0} \rceil H(z_{2}^{\prime}) e^{2\pi(\tau_{2}\sigma_{2}-z_{2}^{2})/N\tau_{2}^{\prime}} e^{-2\pi (\lceil r_{0} \rceil - r_{0}/N)|z_{2}^{\prime}|} \\
\mathcal{F}_{2}^{>\eps} &:= C_{\text{upper}}^{f_{+}}(\mathcal{O}) \sum_{\substack{\big{(}\begin{smallmatrix} a & b \\ c & d \end{smallmatrix}\big{)}\in \mathrm{P}\Gamma_1(N)\\r_0/\r0\geq \eps}} \r0 H(z_{2}^{\prime}) e^{2\pi(\tau_{2}\sigma_{2}-z_{2}^{ 2})/N\tau_{2}^{\prime}} e^{-2\pi (\r0 - r_{0}/N)|z_{2}^{\prime}|} ~,
\end{aligned}
\end{equation}
so that 
\begin{eqsp}
    \CF_2^{<}= \CF_2^{>\eps}+\CF_2^{<\eps}~.
\end{eqsp}
The proof of the convergence of $\CF_2^{<\eps}$ is identical to \cite[Proposition 4.4]{Bhand:2025ghn}. To prove the convergence of $\CF_2^{>\eps}$, we follow the same steps as in \cite[Proposition 4.5]{Bhand:2025ghn}. We put $n:=\r0$ and parametrize $\mathrm{P}\Gamma_1(N)$-matrices as $c=N(an-k),d=bNn-\ell$ requiring $-a\ell+bkN=1$. Then for $\frac{r0}{\r0}\geq \eps$, we find the bound 
\begin{eqsp}
   \left(\r0-\frac{r_0}{N}\right)&=r_0\tau_2\left(n-\frac{r_0}{N}\right) 
   \\&\geq \frac{\eps n\tau_2'}{N}\left(\frac{akN\tau_2+b\ell\sigma_2+(a\ell+bkN)z_2}{a^2\tau_2+b^2\sigma_2+2abz_2}\right)
   \\
   &=\frac{\eps n}{N}(akN\tau_2+b\ell\sigma_2+(a\ell+bkN)z_2)~.
\end{eqsp}
Then we get
\begin{eqsp}\label{eq:ser_>eps}
\CF_{2}^{>\eps}&\leq C^{f_+}_{\mr{upper}}(\CO)\exp\left(\frac{2\pi (\tau_2\sigma_2 - z_2^2)}{NC_N(\Omega)}\right)
\sum_{\big{(}\begin{smallmatrix} a & b\cr -kN & -\ell\end{smallmatrix}
    \big{)}\in \mr{P}\Gamma_1(N)} \sum_{n=1}^{\infty}
n ~H(akN\tau_2+b\,\ell\sigma_2+(bkN+a\,\ell)z_2) \, \\ & \hskip 2.5in
e^{-2\pi \, \frac{\eps\,n}{N} (a\,kN\tau_2+b\,\ell\sigma_2+(b\,kN+a\,\ell)z_2)}
\\&\leq C^{f_+}_{\mr{upper}}(\CO)\exp\left(\frac{2\pi (\tau_2\sigma_2 - z_2^2)}{NC_N(\Omega)}\right)\\&\times \sum_{\big{(}\begin{smallmatrix} a & b\cr -kN & -\ell\end{smallmatrix}
    \big{)}\in \mr{P}\Gamma_1(N)}
\, \left(e^{\pi \, \frac{\eps}{N}\, (a\,kN\tau_2+b\,\ell\sigma_2+(b\,kN+a\,\ell)z_2)}-
e^{-\pi \, \frac{\eps}{N}\, (a\,kN\tau_2+b\,\ell\sigma_2+(b\,kN+a\,\ell)z_2)}\right)^{-2}
\\&<\infty~.
\end{eqsp}
The proof of convergence of $\CF_3$ is similar.
\subsection{Convergence of ${\cal F}_4$}\label{sec:conv_CF3}

We shall now analyze the convergence of 
${\cal F}_4$ given in \refb{ef4}:
\begin{equation} \label{ef3exp}
\begin{split}
\CF_4&:=  f_{-1}g_{-1} \sum_{r>0} r\, 
\sum_{\big{(}\begin{smallmatrix} a & b\cr c & d\end{smallmatrix}\big{)}\in G^N_r\backslash \mr{P}\Gamma_1(N)^+}\hskip .1in 
 \bigg\{\prod_{n=-\infty}^\infty 
H(a_nc_n\tau_2 + b_nd_n\sigma_2 + (a_nd_n+b_nc_n)z_2)
\bigg\}
\\ & \hskip 1in \times \
e^{2\pi i \{(-a^2-c^2/N+r ac)\tau+(-b^2-d^2/N+r bd)\sigma+
(-2ab-2cd/N + r(ad+bc)) z\}}  \, ,
\end{split}    
\end{equation}
The summand in this expression
has absolute value $e^{2\pi B}$ where
\be \label{eBdef}
B = \sigma_2'/N+\tau_2'-r\, z_2'\, ,
\ee
and $\tau_2',\sigma_2',z_2'$ have been defined in \refb{edefprimenew}. The Heaviside functions are analytic in the interior of the $\CR_N$-chamber. Consequently, our goal is to show that the sum over $r$ and $a,b,c,d$ in the expression \refb{ef3exp} for ${\cal F}_4$ converges absolutely and uniformly on compact subsets of the $\CR_N$-chamber. 
We begin by analyzing the
constraints imposed by the Heaviside functions.

\begin{lemma}\label{lemma:Heaviside_matr_const}
Fix $N,r\in\IN$. Let $\begin{pmatrix} a & b\cr c & d\end{pmatrix}\in
\mathrm{P}\Gamma_1(N)^+$, $r\geq 1$ and $\begin{pmatrix} a_n & 
b_n\cr c_n & d_n\end{pmatrix}=\begin{pmatrix} 0 & -1/\sqrt{N}\cr \sqrt{N}& -r\sqrt{N}\end{pmatrix}^n\begin{pmatrix} a & b\cr c & d
\end{pmatrix}$. Let $a(n)$ be the sequence defined by the 
recursion relation:
\begin{eqsp} \label{e461x}
   a(0)=0, \quad a(1)=1,\quad a(n)=r\sqrt{N}a(n-1)-a(n-2)~,\quad n\geq 2~.
\end{eqsp}
Define the sets 
\begin{eqsp}\label{eq:def_S123}
    S_1&:=\left\{\frac{\sqrt{N}a(n)}{a(n+1)}:n\geq 0\right\}=\left\{0,{1\over r}, {rN\over r^2N-1}, {r^2N-1\over r(r^2N-2)}, \cdots \right\}~,
\\
    S_2&:=\left\{\frac{\sqrt{N}a(n+1)}{a(n)}:n\geq 0\right\}=\left\{\infty , rN, {r^2N-1\over rN}, {r(r^2N-2)\over r^2N-1}, \cdots \right\}~,
    \\
    S_3&:= \begin{cases} \{x: u_c<x<u_c' \}, \quad {\rm for} \ r> 2/\sqrt{N}\cr
    \varnothing, \quad{\rm for}  \ r\leq 2/\sqrt{N}\end{cases}\,~ ,
\end{eqsp}
where,
\begin{equation} \label{eucdef}
    u_c=\frac{N}{2}\left(r-\sqrt{r^2-\frac{4}{N}}\right)~,\quad u_c':=Nu_c^{-1}=\frac{N}{2}\left(r+\sqrt{r^2-\frac{4}{N}}\right)~.
\end{equation}
Then for $\Omega\in\CR_N$, if 
\begin{eqsp}\label{eq:heaviside_const_n}
    H(a_nc_n\tau_2 + b_nd_n\sigma_2 + (a_nd_n+b_nc_n)z_2)=1\quad\text{for all}\quad n\in\IZ
\end{eqsp}
then
\ben\label{erequirement}
&&  {{}_nc\over {}_na}, \ {{}_nc+2\,{}_nd\over {}_na+2\,{}_nb}, \in S_1\cup S_2\cup S_3~,
\een
where 
\begin{eqsp}
    \begin{pmatrix}
        {}_na&{}_nb\\{}_nc&{}_nd
    \end{pmatrix}=\begin{pmatrix}
        a&b\\c&d
    \end{pmatrix}\begin{pmatrix}
        0&1/\sqrt{N}\\-\sqrt{N}&-\sqrt{N}
    \end{pmatrix}^n~,\quad n\in \IZ~.
\end{eqsp}
\end{lemma}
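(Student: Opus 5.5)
The plan is to read the Heaviside constraints geometrically in the $x$-$y$ plane and then reduce \eqref{erequirement} to the dynamics of a single real M\"obius map. Throughout, \emph{vertex image} means the $x$-coordinate on the parabola of the image of a vertex of $\CR_N$ under the relevant matrix.

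\textbf{Step 1: translate each Heaviside constraint into a statement about a chamber.} For $\gamma=\big(\begin{smallmatrix} a&b\\ c&d\end{smallmatrix}\big)\in\mathrm{P}\Gamma_1(N)^+$ the action \eqref{e471y} gives $x'=z_2'/\tau_2'$, and $\tau_2'>0$. Hence $H(z_2')=1$ is equivalent to $x'>0$ for the image point $\gamma\cdot(x,y)$.

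Since $\Omega\in\CR_N$ and $\gamma$ maps chambers to chambers, the image point lies in the chamber $\gamma\CR_N$. The line $x=0$ is itself a wall, so no chamber crosses it. Therefore $x'>0$ at one interior point forces the whole chamber $\gamma\CR_N$ into $x\ge 0$. In particular every vertex image must lie in $[0,\infty]$.

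Restricting \eqref{e471y} to the parabola $y=x^2$, a vertex at $x$ is sent to $(c+dx)/(a+bx)$. So the vertex $0$ goes to $c/a$, and the vertex $\mp 2$ (up to the sign convention implicit in the paper's ${}_n a$ notation) goes to $(c+2d)/(a+2b)$. The right multiplications by $\big(\begin{smallmatrix} 0&1/\sqrt N\\ -\sqrt N&-\sqrt N\end{smallmatrix}\big)^n$ defining ${}_na,\dots$ just relabel which vertices of $\CR_N$ (or $\gamma_N$-images of them) are being tracked. I will check this relabelling explicitly for $N=2,3$ against the vertex lists \eqref{eq:vert_RN_xy}.

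\textbf{Step 2: identify the induced map on vertex images.} Left multiplication by the generator $\big(\begin{smallmatrix} 0&-1/\sqrt N\\ \sqrt N&-r\sqrt N\end{smallmatrix}\big)$ of $G^N_r$ sends $(a,c)\mapsto(-c/\sqrt N,\ \sqrt N a-r\sqrt N c)$. On $u=c/a$ (and likewise on $(c+2d)/(a+2b)$) it therefore acts by
\[
T(u)=rN-\frac{N}{u}, \qquad T^{-1}(u)=\frac{N}{rN-u}.
\]
The constraint \eqref{eq:heaviside_const_n} for all $n\in\IZ$ thus becomes: $T^n(u)\in[0,\infty]$ for every $n\in\IZ$, applied to both tracked vertex images.

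I will then verify that the two relevant orbits are $S_1$ and $S_2$. The forward orbit of $\infty$ under $T$ is $\infty, rN, \dots$, giving $S_2$ via the recursion \eqref{e461x}. The backward orbit of $0$ is $0, 1/r, \dots$, giving $S_1$. The fixed points of $T$ solve $u^2-rNu+N=0$, i.e.\ they are exactly $u_c,u_c'$ of \eqref{eucdef}.

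\textbf{Step 3: the dynamical argument.} $T$ lies in $\mathrm{SL}(2,\IR)$ with trace $r\sqrt N$.

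If $r\sqrt N>2$, $T$ is hyperbolic with real fixed points $0<u_c<u_c'$. It preserves the interval $(u_c,u_c')$, which is $S_3$, and also the complementary arc of $\IR\cup\{\infty\}$ through $\infty$ and the negative axis. The forward and backward orbits of any point on that arc converge to $u_c'$ and $u_c$ respectively. Because each point of the arc is moved monotonically along it toward the attracting fixed point, an orbit starting in $(0,u_c]$ or $[u_c',\infty)$ must cross the segment containing $0,\infty$ and the negative reals. It either lands on $0$ or $\infty$ exactly, which is precisely membership in $S_1\cup S_2$, or it takes a negative value, which violates Step 1. The fixed points themselves lie in the closure of $S_3$; they have to be treated separately, by noting that they are irrational so that the vertex images cannot equal them (or they are simply included).

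If $r\sqrt N\le 2$ (only $r=1$, $N\le 4$), $T$ is elliptic or parabolic. Every orbit on $\IR\cup\{\infty\}$ is then periodic or wraps around, so it visits the negative axis unless it hits $0$ or $\infty$. Hence $S_3=\varnothing$ in this case, as stated.

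\textbf{Main obstacle.} I expect the delicate part to be Step 1 rather than Step 3. One must justify that the sign condition on an interior point propagates to the vertex images, including boundary cases where a vertex image equals $0$ or $\infty$ (this is why $S_1$ and $S_2$ contain $0$ and $\infty$). One must also match correctly which vertex of $\CR_N$ corresponds to $(c+2d)/(a+2b)$ under the paper's conventions for ${}_na,{}_nb,{}_nc,{}_nd$.

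To settle the first point I would use that $\CR_N$ is a chamber bounded by walls, and compute directly from \eqref{e471y} that $x'$ is a monotone function along each edge of $\CR_N$. This puts the extremal values of $x'$ at the vertices, so $x'>0$ in the interior forces every vertex image into $[0,\infty]$, with equality possible only on the boundary.
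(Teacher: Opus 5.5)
Your argument is correct for the cases that matter ($N=2,3$), and it is essentially the paper's proof. The paper also reads each Heaviside factor as $x'>0$, requires the image chamber $\gamma\,\CR_N$ (hence its vertex images on the parabola) to lie in the allowed region, and identifies $S_1,S_2$ with the $g_{r,N}$-orbits of $0,\infty$ and $u_c,u_c'$ with the fixed points. The only difference is presentational: the paper maps the wall $x'=0$ by $g_{r,N}^{\pm n}$ and reads the surviving parabola points off a figure, whereas your monotone iteration of $u\mapsto rN-N/u$, together with the trace dichotomy, makes that step explicit.

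Two small corrections, both reflecting defects in the statement itself:

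First, your suspicion about the second ratio is justified, and it is not a sign convention. It must be $({}_nc-2\,{}_nd)/({}_na-2\,{}_nb)$, the image of the vertex $-2$, because $x=2$ is not in the closure of $\CR_N$. With $+2$ the claim already fails for $N=2$, $r=1$: there $\gamma_N$ satisfies every Heaviside constraint, yet $(c+2d)/(a+2b)=-1$.

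Second, in the parabolic cases $(N,r)=(4,1)$ and $(1,2)$ (the latter missing from your list), your claim that every orbit meets the negative axis fails at the rational fixed point ($u=2$, resp.\ $u=1$). That point survives all iterations and would have to be added to the allowed set. This does not affect $N=2,3$, where the only non-hyperbolic case $r=1$ is elliptic and your argument goes through.
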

\begin{proof}
We will give a geometric proof of this lemma. For $n=0$, \eqref{eq:heaviside_const_n} implies
\begin{eqnarray}\label{ez2zero}
    z_2'>0~,
\end{eqnarray}
where $\tau_2',\sigma_2',z_2'$ have been defined in \refb{edefprimenew}. 
Let us now define 
\be \label{expypdef}
x' := {z_2'\over \tau_2'}= {ac\tau_2+bd\sigma_2+(ad+bc)z_2\over
a^2\tau_2+b^2\sigma_2+2abz_2}, 
\qquad y' := {\sigma_2'\over \tau_2'}= {c^2\tau_2+d^2\sigma_2+2cdz_2 
\over a^2\tau_2+b^2\sigma_2+2abz_2}\, .
\ee
\refb{ez2zero} can then be rewritten as
\be
x'>0\, .
\ee
For all the Heaviside functions in \eqref{eq:heaviside_const_n} to be nonzero,  
the constraint $x'>0$ on $\tau_2',\sigma_2',z_2'$ must be satisfied if we replace 
$a,b,c,d$ in the expression for $x'$ by $a_n,b_n,c_n,d_n$ respectively. 
We can obtain these restrictions by mapping the line $x'=0$ in the $x'$-$y'$ plane by
the transformations
\be \label{e555}
g_{r,N} :=\begin{pmatrix} 0 & -1/\sqrt{N}\cr \sqrt{N} & -r\sqrt{N}\end{pmatrix}, \qquad
g_{r,N}^{-1} := \begin{pmatrix} 0 & -1/\sqrt{N}\cr \sqrt{N} & -r\sqrt{N}\end{pmatrix}^{-1}~.
\ee
Under the transformation
\be
\begin{pmatrix} \tau_2' & z_2'\cr z_2' & \sigma_2' \end{pmatrix}
\longmapsto  \tilde{g} \, \begin{pmatrix} \tau_2' & z_2'\cr z_2' & \sigma_2' \end{pmatrix}\, 
\tilde{g}^t, 
\qquad \tilde{g}:= \begin{pmatrix} \alpha & \beta\cr \gamma & \delta \end{pmatrix}\in\mathrm{P}\Gamma_1(N)^+\, ,
\ee
$(x',y')$ maps to
\be \label{e471y}
x ' \longmapsto  {\alpha\gamma + \beta \delta y' + (\alpha\delta + \beta\gamma) x'
\over \alpha^2 + \beta^2 y' + 2\alpha\beta x'}, 
\qquad  y' \longmapsto  {\gamma^2 + \delta^2 y' + 2\gamma \delta x'
\over \alpha^2 + \beta^2 y' + 2\alpha\beta x'}\, .
\ee
Using the condition $y'\ge (x')^2$ for points in $\IH_2$, we can see that 
\begin{eqsp}
\gamma^2 + \delta^2 y' + 2\gamma \delta x'>0~,\quad \alpha^2 + \beta^2 y' + 2\alpha\beta x'>0~.    
\end{eqsp}
Therefore for any $\tilde{g}$ of the form $g_{r,N}^n$, $n\in\mathbb{Z}$, 
we must have
\be \label{e472x}
\alpha\gamma + \beta \delta y' + (\alpha\delta + \beta\gamma) x' > 0\, .
\ee
These boundaries are
a subset of the straight lines of the form \eqref{epolesimaginary} some of which are shown in Figure~\ref{fig:RN-chamber1}--\ref{fig:RN-chamber3} except that now
we are drawing them in the $(x',y')$ plane instead of in the $(x,y)$ plane.

To solve the problem of determining the allowed region in the $(x',y')$ plane, we
label the lines obtained by setting the LHS of \refb{e472x} to zero 
by the points where they intersect the parabola 
$y'=(x')^2$. These can be specified by specifying the $x'$ value of the point, which
is the convention we shall follow. 
The only exception is the $x'=$~constant lines whose one end is at $y'=\infty$. We
shall denote this point by $\infty$. In this convention, the original $x'=0$ line,
connecting $\infty$ to 0,
will be represented as $[\infty,0]$ and the allowed region is to the left of this line.
The transformation \refb{e471y} maps this line to
$[\delta/\beta, \gamma/\alpha]$.

It is easy to prove recursively that 
\begin{eqsp} \label{e473x}
    g_{r,N}^n&=(-1)^{n+1}\begin{pmatrix}
        a(n-1)&-a(n)/\sqrt{N}\\a(n)\sqrt{N}&-a(n+1)
    \end{pmatrix}\in\mathrm{P}\Gamma_1(N)^+,\quad n\geq 1~,
    \\
    g_{r,N}^{-n}&=(-1)^{n+1}\begin{pmatrix}
        -a(n+1)&a(n)/\sqrt{N}\\-a(n)\sqrt{N}&a(n-1)
    \end{pmatrix}\in\mathrm{P}\Gamma_1(N)^+,\quad n\geq 1~,
\end{eqsp}
where the $a(n)$'s have been defined in \refb{e461x}.
Under \refb{e473x} the line $[\infty,0]$ maps to 
\be \label{e474x}
\left[\sqrt{N}{a(n+1)\over a(n)}, \sqrt{N}{a(n)\over a(n-1)}\right] \quad {\rm and} \quad 
\left[\sqrt{N}{a(n-1)\over a(n)}, \sqrt{N}{a(n)\over a(n+1)}\right]\, ,
\ee
respectively. These are lines connecting two successive points in the sets
$S_2$ and $S_1$ respectively. To understand the $n\to\infty$ limits of these
points 
we need
to find the fixed points of the transformation \refb{e471y} under $g_{r,N}$. 
Demanding that  $(x', y'=x^{\prime 2})$ maps to 
$(x', y'=x^{\prime 2})$ under the map \refb{e471y} for $\tilde{g}=g_{r,N}$, we get
\be
x' =  \frac{r\, x^{\prime 2} - x'}{\frac{x^{\prime 2}}{N}}\, .
\ee
This gives a quadratic equation for $x'$ with two solutions
\be 
x' = u_c, u_c'\, , 
\ee
with $u_c,u_c'$ given in \refb{eucdef}.

\begin{figure}
    \centering
\begin{tikzpicture}[scale=1.5, xscale=1.4] 

        % --- Axes ---
        % Horizontal Axis (x')
        \draw[->, thick] (-0.1, 0) -- (5, 0) node[right] {$x' := \frac{z'_2}{\tau'_2}$};
        \node[below] at (0,0) {$0$};
        
        % Vertical Axis (y')
        \draw[->, thick] (0, 0) -- (0, 5.5) node[above] {$y' := \frac{\sigma'_2}{\tau'_2}$};
        \draw[thick, red] (0, 0) -- (0, 5.3);

        % --- The Continuous Blue Curve ---
        \draw[blue, thick, domain=0:4.8, samples=200] plot (\x, {0.18*\x*\x});

        % --- Red Polygonal Path Segments ---
        \draw[thick, red] (0,0) -- (0.8, 0.115);
        \draw[thick, red] (0.8, 0.115) -- (1.6, 0.46); % Red line now stops here
        
        % [Gap here: No red lines between the second node, u_c, rN/2, and Nu_c^-1]
        
        % Right side segments resuming at Nu_c^-1
        \draw[thick, red] (3.4, 2.08) -- (3.8, 2.60);
        \draw[thick, red] (3.8, 2.60) -- (4.2, 3.17);
        \draw[thick, red] (4.2, 3.17) -- (4.5, 3.65);
        
        % Final vertical constraint line going straight up from the rN node
        \draw[thick, red] (4.5, 3.65) -- (4.5, 5.3);

        % --- All 8 Vertex Nodes (Uniform Bullets) ---
        \foreach \p in {
            (0.8, 0.115), % 1. 1/r
            (1.6, 0.46),  % 2. rN / (r^2N-1)
            (2.2, 0.87),  % 3. u_c
            (2.8, 1.41),  % 4. rN / 2
            (3.4, 2.08),  % 5. N u_c^{-1}
            (3.8, 2.60),  % 6. r(r^2N-2) / (r^2N-1)
            (4.2, 3.17),  % 7. (r^2N-1) / rN
            (4.5, 3.65)   % 8. rN (Corner Node)
        } {
            \node at \p {$\bullet$}; 
        }

        % --- Mathematical Label Placements ---
        \node[above, yshift=4pt] at (0.8, 0.115) {$\frac{1}{r}$};
        \node[above, yshift=4pt] at (1.6, 0.46) {$\frac{rN}{r^2N-1}$};
        \node[above left, xshift=2pt, yshift=4pt] at (2.2, 0.87) {$u_c$};
        \node[above, yshift=4pt] at (2.8, 1.41) {$\frac{rN}{2}$};
        \node[above left, xshift=2pt, yshift=4pt] at (3.4, 2.08) {$Nu_c^{-1}$};
        \node[above left, xshift=6pt, yshift=6pt] at (3.8, 2.60) {$\frac{r(r^2N-2)}{r^2N-1}$};
        \node[above left, xshift=6pt, yshift=6pt] at (4.2, 3.17) {$\frac{r^2N-1}{rN}$};
        \node[below right, xshift=1pt, yshift=-1pt] at (4.5, 3.65) {$rN$};

        % --- Adjusted Red Continuation Dots (Sits neatly along the curve) ---
        \node[red, font=\bfseries, rotate=22] at (1.9, 0.73) {$\dots$};
        \node[red, font=\bfseries, rotate=50] at (3.1, 1.83) {$\dots$}; % Tucked tighter to y = 0.18*x^2

    \end{tikzpicture}
    \caption{This figure shows the restriction 
    on $(\sigma_2',\tau_2',z_2')$ imposed by the Heaviside functions. 
    The Heaviside functions only allow the region bounded by the red lines. 
    There are infinite number of such red lines shown by the dots.
    The $x'$-axis has been rescaled appropriately to show close points on the 
    curve separately.}
    \label{fig:Heaviside_r=3}
\end{figure}

The lines described in \refb{e474x} 
have been shown in red in Figure \ref{fig:Heaviside_r=3}. 
The image of the
condition $x'>0$ translates to the condition that $(x',y')$ should lie above the red lines
appearing in Figure \ref{fig:Heaviside_r=3}, to the right of the line $x'=0$ and to the 
left of the
line $x'=rN$.
This shows that 
part of the parabola $y'=x^{\prime 2}$ 
between $x'=0$ and $x'=u_c$ and between
$x'=u_c'$ and $x'=r$ are removed by the Heaviside functions except for the isolated
points that appear in the set $S_1\cup S_2$ in \eqref{eq:def_S123}.

Our next task is to translate the condition on $(x',y')$ given above 
to a condition on the
integers $a,b,c,d$ using \refb{edefprimenew}.
A given $(a,b,c,d)$ will map the $\CR_N$ chamber to one of the
chambers in the $(x',y')$ plane. If this chamber lies in the allowed region
then the corresponding $(a,b,c,d)$ is allowed. Otherwise
it is excluded from the sum in \refb{ef3exp}.
To examine this, we need to determine the image of the vertices \eqref{eq:vert_RN_xy} of the $\CR_N$ chamber under $\begin{pmatrix}
    a&b\\c&d
\end{pmatrix}$. It is easy to see that a point $(x,x^2)$ on the parabola maps under \eqref{e471y} to the point 
\begin{eqsp}
    x'=\frac{c+dx}{a+bx}~,\quad y'= \left(\frac{c+dx}{a+bx}\right)^2~,
\end{eqsp}
on the parabola. Next, it is easy to check that the vertices \eqref{eq:vert_RN_xy} of the $\CR_N$-chamber from $-N$ to $-Nv_c$ and from $-v_c^{-1}$ to $\infty$ are shifted by one in the anticlockwise direction and the vertices in the range  $-Nv_c<x<-v_c^{-1}$ are shifted by one in the clockwise direction by 
\begin{eqsp}
    \tilde{g}_N:=\begin{pmatrix}
        0&1/\sqrt{N}\\-\sqrt{N}&-\sqrt{N}
    \end{pmatrix}~.
\end{eqsp}
Thus, to get the image of all (a subset of for $N\geq 7$ \cite{Sen:2007vb}) vertices of the $\CR_N$-chamber under $\begin{pmatrix}
        a&b\\c&d
    \end{pmatrix}$, we can find the image of $0$ and $-Nv_c<2<-v_c^{-1}$ under  $\begin{pmatrix}
        a&b\\c&d
    \end{pmatrix}\tilde{g}_N^n$. 
These are all points on the parabola $y'=x^{\prime 2}$. Allowed values of $(a,b,c,d)$
are those for which all the vertices lie in the allowed region in the $(x',y')$ plane.
From Figure \ref{fig:Heaviside_r=3} we see that this requires the $x'$ values 
of all the vertices to
either coincide with the set of points in the sets $S_1$ or $S_2$ or lie in the range
$u_c<x<u_c'$ in which case it is in the set $S_3$. 
This concludes the proof of the lemma.  
\end{proof}
\begin{remark}\label{rem:lemma_nec}
The condition \eqref{erequirement} is a necessary condition for \eqref{eq:heaviside_const_n} for all $N$ and also sufficient for $N\leq 6$. For $N\geq 7$, all the vertices cannot be reached by the action of $\tilde{g}_N$ \cite{Sen:2007vb}. Moreover, for $N>4,4|N$, $2$ is a vertex as well as an accumulation point \cite{Sen:2007vb}. Since 2 is a vertex for $N=2,3,4$, it is enough to require 
\begin{eqsp}
    \frac{{}_n c}{{}_n a}\in S_1\cup S_2\cup S_3~,
\end{eqsp}
in Lemma \ref{lemma:Heaviside_matr_const}.
\end{remark}
We now decompose the sum into $r\leq 2/\sqrt{N}$ and $r> 2/\sqrt{N}$ terms. So we define 
\begin{eqsp}
    \CF_4=\CF_4^{<}+\CF_4^{>}~,
\end{eqsp}
where 
\begin{eqsp} \label{e460xy}
\CF_4^{<}&=f_{-1}g_{-1} \sum_{r=1}^{2/\sqrt{N}} r\, 
\sum_{\big{(}\begin{smallmatrix} a & b\cr c & d\end{smallmatrix}\big{)}\in G^N_r\backslash \mr{P}\Gamma_1(N)^+}\hskip .1in 
 \bigg\{\prod_{n=-\infty}^\infty 
H(a_nc_n\tau_2 + b_nd_n\sigma_2 + (a_nd_n+b_nc_n)z_2)
\bigg\}
\\ & \hskip 1in \times \
e^{2\pi i \{(-a^2-c^2/N+r ac)\tau+(-b^2-d^2/N+r bd)\sigma+
(-2ab-2cd/N + r(ad+bc)) z\}}  \, ,
\\
\CF_4^{>}&=f_{-1}g_{-1} \sum_{r>2/\sqrt{N}} r\, 
\sum_{\big{(}\begin{smallmatrix} a & b\cr c & d\end{smallmatrix}\big{)}\in G^N_r\backslash \mr{P}\Gamma_1(N)^+}\hskip .1in 
 \bigg\{\prod_{n=-\infty}^\infty 
H(a_nc_n\tau_2 + b_nd_n\sigma_2 + (a_nd_n+b_nc_n)z_2)
\bigg\}
\\ & \hskip 1in \times \
e^{2\pi i \{(-a^2-c^2/N+r ac)\tau+(-b^2-d^2/N+r bd)\sigma+
(-2ab-2cd/N + r(ad+bc)) z\}}  \, .
\end{eqsp}
Note that for $N=2,3,4$, $\CF^{<}_4$ only has $r=1$ terms and the sum is empty for $N>4$. 
\begin{prop}\label{prop:F_3>_finite}
Let $N=2,3$. Then the sum $\CF_{4}^{<}$ is a finite sum. 
\begin{proof}
Our aim is to show that for each $N=2,3$, there are finitely many $\mathrm{P}\Gamma_1(N)^+$ matrices which satisfy all the constraints \eqref{eq:heaviside_const_n}. 
\begin{figure}[htbp]
    \centering
    \begin{tikzpicture}[scale=1.5, xscale=1.4] 

        % --- Axes ---
        % Horizontal Axis (x')
        \draw[->, thick] (-0.2, 0) -- (5.5, 0) node[right] {$x' := \frac{z'_2}{\tau'_2}$};
        
        % Vertical Axis (y') - Standard black arrow line on the left
        \draw[->, thick] (0, 0) -- (0, 5.5) node[above ] {$y' := \frac{\sigma'_2}{\tau'_2}$};

        % --- The Continuous Blue Curve ---
        \draw[blue, thick, domain=0:4.8, samples=200] plot (\x, {0.18*\x*\x});

        % --- Red Polygonal Path Lines ---
        % Tracing continuously from 0 up through node 6
        \draw[thick, red] (0,0) -- (0.8, 0.115);
        \draw[thick, red] (0.8, 0.115) -- (1.5, 0.405);
        \draw[thick, red] (1.5, 0.405) -- (2.3, 0.952);
        \draw[thick, red] (2.3, 0.952) -- (3.0, 1.62);
        \draw[thick, red] (3.0, 1.62) -- (3.6, 2.333);
        \draw[thick, red] (3.6, 2.333) -- (4.2, 3.175);

        % --- Vertical Red Walls (From nodes 2, 4, and 6) ---
        \draw[thick, red] (1.5, 0.405) -- (1.5, 5.0); % From node 2
        \draw[thick, red] (3.0, 1.62)  -- (3.0, 5.0); % From node 4
        \draw[thick, red] (4.2, 3.175) -- (4.2, 5.0); % From node 6

        % --- Node Bullets & Numerical Labels ---
        \foreach \p/\n/\pos in {
            (0.8, 0.115)/1/above,
            (1.5, 0.405)/2/below,
            (2.3, 0.952)/3/above,
            (3.0, 1.62)/4/below,
            (3.6, 2.333)/5/above,
            (4.2, 3.175)/6/below%
        } {
            \node at \p {$\bullet$};
            \node[\pos, yshift=2pt] at \p {\small \n};
        }

        % --- Region Label ---
        \node at (0.7, 3.2) {\large $\mathcal{L}_2$};

    \end{tikzpicture}
    \caption{The three chambers with one vertex at $\infty$ for $N=2,r=3$. There are
other chambers, not related to these by transformation by $g_{r,N},g_{r,N}^{-1}$, which lie close to the parabola, with
vertices lying between $u_c$ and $Nu_c^{-1}$.}
    \label{fig:chamber-L1}
\end{figure}
\begin{figure}[htbp]
    \centering
    \begin{tikzpicture}[scale=1.5, xscale=1.4] 

        % --- Axes ---
        % Horizontal Axis (x')
        \draw[->, thick] (-0.2, 0) -- (5.5, 0) node[right] {$x' := \frac{z'_2}{\tau'_2}$};
        \node[below] at (0,0) {$0$};
        
        % Vertical Axis (y')
        \draw[->, thick] (0, 0) -- (0, 5.5) node[above] {$y' := \frac{\sigma'_2}{\tau'_2}$};

        % --- The Continuous Blue Curve ---
        \draw[blue, thick, domain=0:4.8, samples=200] plot (\x, {0.18*\x*\x});

        % --- Red Polygonal Path Lines ---
        \draw[thick, red] (0,0) -- (0.6, 0.065);
        \draw[thick, red] (0.6, 0.065) -- (1.1, 0.218);
        \draw[thick, red] (1.1, 0.218) -- (1.6, 0.461);
        \draw[thick, red] (1.6, 0.461) -- (2.2, 0.871);
        \draw[thick, red] (2.2, 0.871) -- (2.8, 1.411);
        \draw[thick, red] (2.8, 1.411) -- (3.3, 1.96);
        \draw[thick, red] (3.3, 1.96) -- (3.7, 2.464);
        \draw[thick, red] (3.7, 2.464) -- (4.2, 3.175);

        % --- Vertical Red Walls (From nodes 3 and 6) ---
        \draw[thick, red] (2.2, 0.871) -- (2.2, 5.0); % From node 3
        \draw[thick, red] (4.2, 3.175) -- (4.2, 5.0); % From node 6

        % --- Node Bullets & Mathematical Labels ---
        \foreach \p/\n/\pos in {
            (0.6, 0.065)/1/above,
            (1.1, 0.218)/\frac{3}{2}/above,
            (1.6, 0.461)/2/above,
            (2.2, 0.871)/3/below,
            (2.8, 1.411)/4/above,
            (3.3, 1.96)/\frac{9}{2}/above,
            (3.7, 2.464)/5/above,
            (4.2, 3.175)/6/below%
        } {
            \node at \p {$\bullet$};
            \node[\pos, yshift=2pt] at \p {\small $\n$};
        }

        % --- Region Label ---
        \node at (0.7, 3.2) {\large $\mathcal{L}_3$};

    \end{tikzpicture}
    \caption{The three chambers with one vertex at $\infty$ for $N=3,r=2$. There are
other chambers, not related to these by transformation by $g_{r,N},g_{r,N}^{-1}$, which lie close to the parabola, with
vertices lying between $u_c$ and $Nu_c^{-1}$.}
    \label{fig:chamber-L3}
\end{figure}
Note that for $r\leq 2\sqrt{N}$, the set $S_3$ is empty. Next, from the identity 
\begin{eqsp}
    \gamma_{N}\tilde{g}_{N}\gamma_{N}^{-1}=g_{1,N}^{-1}~,
\end{eqsp}
the fact that $\gamma_N$ maps $\CR_N\leftrightarrow\CL_N$, and that $\tilde{g}_N$ permutes the vertices of $\CR_N$, we conclude that elements of $S_1\cup S_2$ for $r=1$ are the vertices of the $\CL_N$-chamber since the elements of $S_1,S_2$ are the action of $g_{1,N}^n,n\in\IZ$ on $\infty,0$ respectively. 
Explicitly, for $N=2,3$, we have:
\begin{eqsp}
    N=2:\quad & S_1\cup S_2=\left\{0,1,2,\infty\right\}~,
    \\
    N=3:\quad & S_1\cup S_2=\left\{0,1,\frac{3}{2},2,3,\infty\right\}~,
\end{eqsp}
Hence only those $\mathrm{P}\Gamma_1(N)^+$-matrices which map the $\CR_N$-chamber
to the $\CL_N$-chamber are allowed in the sum. Since $S_1\cup S_2$ is a finite set, there are only finitely many such $\mathrm{P}\Gamma_1(N)^+$-matrices and hence the sum is finite.
\end{proof}
\end{prop}
For the convergence of $\CF_4^{>}$, the idea is to generalize the analysis of the proof of Proposition \ref{prop:F_3>_finite} to $r\geq 2$ using Lemma \ref{lemma:Heaviside_matr_const}. Let us define 
\begin{eqsp} \label{e573}
    \CS_{12}&:=\left\{\begin{pmatrix}
        a&b\\c&d
    \end{pmatrix}\in\mathrm{P}\Gamma_1(N)^+:\left\{{{}_nc\over {}_na}\right\}\bigcap (S_1\cup S_2)\neq \varnothing\right\}~,
    \\
    \CS_{3}&:=\left\{\begin{pmatrix}
        a&b\\c&d
    \end{pmatrix}\in\mathrm{P}\Gamma_1(N)^+:{{}_nc\over {}_na}\in S_3\right\}~.
\end{eqsp}
For $N=2,3$, we have 
\begin{eqsp}\label{eq:nc/naN23}
    N&=2:\quad \left\{{{}_nc\over {}_na},n\in\IZ\right\}=\left\{\frac{d}{b},\frac{c-2 d}{a-2
   b},\frac{c-d}{a-b},\frac{c}{a}\right\}~,
   \\
   N&=3:\quad \left\{{{}_nc\over {}_na},n\in\IZ\right\}=\left\{\frac{2 c-3 d}{2 a-3
   b},\frac{c-d}{a-b},\frac{c}{a},\frac{d}{b},\frac{c-
   3 d}{a-3 b},\frac{c-2 d}{a-2 b}\right\}~.
\end{eqsp}
Then we can write 
\begin{eqsp}
    \CF_4^>=\CF_4^{>12}+\CF_4^{>3}~,
\end{eqsp}
where 
\begin{align}
\CF_4^{>12}&:=f_{-1}g_{-1} \sum_{r>2/\sqrt{N}} r\, 
\sum_{\big{(}\begin{smallmatrix} a & b\cr c & d\end{smallmatrix}\big{)}\in G^N_r\backslash \CS_{12}}
\exp\left[2\pi i \left\{\left(-a^2-\frac{c^2}{N}+r ac\right)\tau+\left(-b^2-\frac{d^2}{N}+r bd\right)\sigma\right.\right.\nonumber\\&\hspace{6cm}+
\left.\left.\left(-2ab-\frac{2cd}{N} + r(ad+bc)\right) z\right\}\right]  \, .\label{eq:CF3>12_def}
\\
\CF_4^{>3}&:=f_{-1}g_{-1} \sum_{r>2/\sqrt{N}} r\, 
\sum_{\big{(}\begin{smallmatrix} a & b\cr c & d\end{smallmatrix}\big{)}\in G^N_r\backslash \CS_3}
\exp\left[2\pi i \left\{\left(-a^2-\frac{c^2}{N}+r ac\right)\tau+\left(-b^2-\frac{d^2}{N}+r bd\right)\sigma\right.\right.\nonumber\\&\hspace{6cm}+
\left.\left.\left(-2ab-\frac{2cd}{N} + r(ad+bc)\right) z\right\}\right]  \, .
\end{align}
\begin{prop}
The series $\CF_4^{>12}$ converges absolutely and uniformly on compact subsets of the $\CR_N$-chamber for $N=2,3$.     
\end{prop}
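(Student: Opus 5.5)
The plan is to show that, after quotienting by $G^N_r$, each $r>2/\sqrt{N}$ contributes only finitely many cosets to $\CS_{12}$, with a number of representatives growing at most polynomially in $r$. Then I will show that the summand decays like $e^{-c\,r}$ (or faster) uniformly on a compact $\CO\subset\CR_N$.

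First, the counting. By definition of $\CS_{12}$, some ${}_nc/{}_na$ lies in $S_1\cup S_2$. Using \eqref{eq:nc/naN23}, right multiplication by $\tilde g_N^n$ only cycles through the finitely many vertices of $\CR_N$, so one of the finitely many vertex images $\frac{c+dv}{a+bv}$, $v\in\{0,-1,\dots,-N,\infty\}$, equals a point $g_{r,N}^m(\infty)$ or $g_{r,N}^m(0)$. Left multiplication by $g_{r,N}^{-m}$ preserves the $G^N_r$-coset. Hence we may choose the coset representative so that this vertex image is $\infty$ or $0$, that is, $a+bv=0$ or $c+dv=0$ for some vertex $v$.

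With that normalization, the matrix maps $\CR_N$ to a chamber having $0$ or $\infty$ as a vertex. By Lemma \ref{lemma:Heaviside_matr_const}, together with Remark \ref{rem:lemma_nec} (which gives sufficiency for $N\le 6$), all other vertex images must lie in $S_1\cup S_2\cup S_3$. Chambers adjacent to $\infty$ with $x'\in[0,rN]$ are integer translates by $\begin{pmatrix}1&0\\kN&1\end{pmatrix}$ of $\CL_N$ and its $\Gamma_1(N)$-neighbours, with $0\le k\le r$. Similarly, chambers adjacent to $0$ are obtained from these by $\gamma_N$. This leaves $O(r)$ allowed representatives per $r$, each with entries bounded by $O(r)$.

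Second, the exponential bound. For a representative as above, $|{\rm summand}|=e^{2\pi B}$ with $B=\sigma_2'/N+\tau_2'-rz_2'$ from \eqref{eBdef}. Writing $x'=z_2'/\tau_2'$ and $y'=\sigma_2'/\tau_2'$, we get
\begin{equation*}
B=\tau_2'\bigl(1+y'/N-rx'\bigr)=\frac{\tau_2'}{N}\bigl((x'-u_c)(x'-u_c')+y'-x'^2\bigr).
\end{equation*}
Since the image chamber has vertices in $S_1\cup S_2$ outside $(u_c,u_c')$, and the image point lies above the red lines in Figure \ref{fig:Heaviside_r=3}, I expect to show that $1+y'/N-rx'\le -\kappa r$. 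Here $\kappa>0$ depends only on $\CO$ and comes from the linear combinations \eqref{eq:bound_tau2N2}, \eqref{eq:bound_tau2N3}. Combining this with $\tau_2'\ge C_N(\Omega)$ gives $B\le -\kappa' r\,C_N(\Omega)$. The sum is then dominated by $\sum_r r\cdot O(r)\,e^{-2\pi\kappa' r\,C_N}$, which converges uniformly on $\CO$ because $C_N$ is continuous and positive.

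The main obstacle is this second step: producing a uniform negative bound on $1+y'/N-rx'$ for every allowed representative. The translates with $k$ near $0$ or near $r$ have $x'$ close to $0$ or $rN$, where the factor $(x'-u_c)(x'-u_c')$ degenerates toward the boundary lines. There I would first try direct computation: write out $\tau_2',\sigma_2',z_2'$ for $\begin{pmatrix}1&0\\kN&1\end{pmatrix}M$ with $M$ one of the finitely many matrices mapping $\CR_N$ onto $\CL_N$-type chambers. Then $B$ becomes an explicit quadratic in $k$ whose leading coefficient is $-N\tau_2'\!\cdot\!$(positive). I would bound it using the nonnegative combinations in the $\CR_N$ inequalities \eqref{echamber}.
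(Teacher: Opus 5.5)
Your outline follows the paper's route. You choose in each $G^N_r$-coset a representative whose image chamber has a vertex at $\infty$, write it as $\begin{pmatrix}1&0\\kN&1\end{pmatrix}M_i$ with $M_i$ one of the $2N$ matrices mapping $\CR_N$ onto $\CL_N$, and prove decay linear in $r$. The gap is the decisive estimate. You only conjecture it, and the form you state is false: $1+y'/N-rx'\le-\kappa r$ cannot hold for all $r$. Take $N=2$, $r=2$, $k=0$, $M_i=\begin{pmatrix}1&0\\2&1\end{pmatrix}$, whose image $\CL_2$ is an allowed chamber. Then $B=\sigma_2/2-\tau_2$, which equals $4$ at $(\tau_2,\sigma_2,z_2)=(1,10,-\tfrac12)\in\CR_2$. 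What is true, and enough, is $B\le A(\Omega)-r\,C(\Omega)$ with $A$ bounded and $C>0$ continuous on compacts. Your heuristic that the image point lies above the red lines cannot give this, because it only bounds $y'$ from below. The missing input is an upper bound, for example $y'-x'^2=(\tau_2\sigma_2-z_2^2)/\tau_2'^2$, which does not depend on $k$ because $\tau_2'$ does not.

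To close the gap, do the direct computation you mention. Write $(\tau_2^{(i)},\sigma_2^{(i)},z_2^{(i)})$ for the variables transformed by $M_i$. The $k$-th translate then has $B=\sigma_2^{(i)}/N+\tau_2^{(i)}-kN(r-k)\tau_2^{(i)}+(2k-r)z_2^{(i)}$. As a polynomial in $k$ this is convex (its $k^2$ coefficient is $+N\tau_2^{(i)}$, not negative), which is why $k=0$ and $k=r-1$ are the worst cases. The identity $-kN(r-k)\tau+(2k-r)z=-k(r-k)(N\tau-z)-\bigl((k+1)(r-k-1)+1\bigr)z$ then gives $B\le\sigma_2^{(i)}/N+\tau_2^{(i)}-r\min\{z_2^{(i)},N\tau_2^{(i)}-z_2^{(i)}\}$. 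The only input is positivity of $z_2^{(i)}$ and $N\tau_2^{(i)}-z_2^{(i)}$ on $\CR_N$, i.e. the image lies strictly inside $0<x'<N$. For $N=2,3$ each of these is one of the wall inequalities in \eqref{echamber}, so the bound $\tau_2'\ge C_N(\Omega)$ is not needed.

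Also repair two enumeration slips that, read literally, put divergent families into your majorant. First, the range must be $0\le k\le r-1$: for $k=r$ the chamber lies beyond $x'=rN$ and $B$ contains $+r z_2^{(i)}$. Second, allowed chambers with vertex $0$ are $g_{r,N}^{-1}$-images of those with vertex $\infty$ (since $g_{r,N}(0)=\infty$), so they give the same cosets. The $\gamma_N$-images you propose instead lie in $x'<0$, where $-rz_2'>0$. Just drop the ``or $0$'' option.
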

\begin{proof}
We first show that, for a fixed $r$, the inner sum is a finite sum. 
For a fixed $r$, consider a typical term corresponding to $\begin{pmatrix} a & b\cr c & d\end{pmatrix}\in G_r^N\backslash \CS_{12}$. Let $x',y'$ be as in \eqref{expypdef}. Then 
since one element of $\left\{{}_nc\over {}_na\right\}$ is an element of $S_1\cup S_2$, we can left 
multiply $\begin{pmatrix} a & b\cr c & d\end{pmatrix}$ 
by $g_{r,N}^n$ for some $n$ such that $(x',y')$ belongs to a chamber with one vertex at $\infty$. 
There are exactly
$r$ chambers with one vertex at infinity, the other vertices being at 
\begin{eqsp}
 N&=2:\quad   x'=\left\{0,1,2\right\},~~\{2,3,4\},~~\dots,~~\{2r-2,2r-1,2r\}~,
 \\
 N&=3:\quad   x'=\left\{0,1,\frac{3}{2},2,3\right\},~~\left\{3,4,\frac{9}{2},5,6\right\},~~\left\{3r-3,3r-2,3r-\frac{3}{2},3r-1,3r\right\}~.
\end{eqsp}
This has been shown in Figure~\ref{fig:chamber-L1} for the case $N=2,r=3$ and in Figure \ref{fig:chamber-L3} for the case $N=3,r=2$. There are $\mathrm{P}\Gamma_1(N)^+$ matrices which permute these vertices of the $\CL_N$-chamber:
\begin{eqsp}\label{eq:ver_perm_matr}
    N&=2:\quad 
\begin{pmatrix} 0 & -\frac{1}{\sqrt{2}} \\ \sqrt{2} & 0 \end{pmatrix}, \begin{pmatrix} 1 & 1 \\ 0 & 1 \end{pmatrix}, \begin{pmatrix} \sqrt{2} & \frac{1}{\sqrt{2}} \\ \sqrt{2} & \sqrt{2} \end{pmatrix}, \begin{pmatrix} 1 & 0 \\ 2 & 1 \end{pmatrix} 
\\
N&=3:\quad 
\begin{pmatrix} 1 & 0 \\ 3 & 1 \end{pmatrix}, \begin{pmatrix} 0 & -\frac{1}{\sqrt{3}} \\ \sqrt{3} & 0 \end{pmatrix}, \begin{pmatrix} 1 & 1 \\ 0 & 1 \end{pmatrix}, \begin{pmatrix} \sqrt{3} & \frac{2}{\sqrt{3}} \\ \sqrt{3} & \sqrt{3} \end{pmatrix}, \begin{pmatrix} 2 & 1 \\ 3 & 2 \end{pmatrix}, \begin{pmatrix} \sqrt{3} & \frac{1}{\sqrt{3}} \\ 2\sqrt{3} & \sqrt{3} \end{pmatrix}~.
\end{eqsp}
Taking into account the cyclic permutation of the vertices by these matrices, we get at most 
$2Nr$ independent matrices $\begin{pmatrix} a & b\cr c & d\end{pmatrix}$  
that are not related by left multiplication by the matrices \refb{e555}.  The actual number might be less. 
Let $(x',y')$
be the image of $(x,y)\in\CR_N$ in the chamber $\CL_N$. 
We denote the images of $(\tau,\sigma,z)$ under the matrices in \eqref{eq:ver_perm_matr} by 
\begingroup
\allowdisplaybreaks
\begin{equation}\label{eq:tsz_primes_def}
\begin{split}
&N=2:\quad 
\begin{pmatrix} 0 & -\frac{1}{\sqrt{2}} \\ \sqrt{2} & 0 \end{pmatrix}:\begin{pmatrix}
    \tau&z\\z&\sigma
\end{pmatrix}\longmapsto \begin{pmatrix}
    \tau^{21}&z^{21}\\z^{21}&\sigma^{21}
\end{pmatrix}:=\begin{pmatrix} \frac{\sigma}{2} & -z \\ -z & 2\tau \end{pmatrix}~, 
\\&N=2:\quad \begin{pmatrix} 1 & 1 \\ 0 & 1 \end{pmatrix}:\begin{pmatrix}
    \tau&z\\z&\sigma
\end{pmatrix}\longmapsto \begin{pmatrix}
    \tau^{22}&z^{22}\\z^{22}&\sigma^{22}
\end{pmatrix}:=\begin{pmatrix} 2z + \sigma + \tau & z + \sigma \\ z + \sigma & \sigma \end{pmatrix}~,
\\&N=2:\quad \begin{pmatrix} \sqrt{2} & \frac{1}{\sqrt{2}} \\ \sqrt{2} & \sqrt{2} \end{pmatrix}:\begin{pmatrix}
    \tau&z\\z&\sigma
\end{pmatrix}\longmapsto \begin{pmatrix}
    \tau^{23}&z^{23}\\z^{23}&\sigma^{23}
\end{pmatrix}:=\begin{pmatrix} \frac{1}{2}(4z + \sigma + 4\tau) & 3z + \sigma + 2\tau \\ 3z + \sigma + 2\tau & 2(2z + \sigma + \tau) \end{pmatrix}, 
\\&N=2:\quad \begin{pmatrix} 1 & 0 \\ 2 & 1 \end{pmatrix}:\begin{pmatrix}
    \tau&z\\z&\sigma
\end{pmatrix}\longmapsto \begin{pmatrix}
    \tau^{24}&z^{24}\\z^{24}&\sigma^{24}
\end{pmatrix}:=\begin{pmatrix} \tau & z + 2\tau \\ z + 2\tau & 4z + \sigma + 4\tau \end{pmatrix}
\\
&N=3:\quad 
\begin{pmatrix} 1 & 0 \\ 3 & 1 \end{pmatrix}:\begin{pmatrix}
    \tau&z\\z&\sigma
\end{pmatrix}\longmapsto \begin{pmatrix}
    \tau^{31}&z^{31}\\z^{31}&\sigma^{31}
\end{pmatrix}:=\begin{pmatrix} \tau & z + 3\tau \\ z + 3\tau & 6z + \sigma + 9\tau \end{pmatrix},
\\
&N=3:\quad \begin{pmatrix} 0 & -\frac{1}{\sqrt{3}} \\ \sqrt{3} & 0 \end{pmatrix}:\begin{pmatrix}
    \tau&z\\z&\sigma
\end{pmatrix}\longmapsto \begin{pmatrix}
    \tau^{32}&z^{32}\\z^{32}&\sigma^{32}
\end{pmatrix}:=\begin{pmatrix} \frac{\sigma}{3} & -z \\ -z & 3\tau \end{pmatrix}, \\
&N=3:\quad \begin{pmatrix} 1 & 1 \\ 0 & 1 \end{pmatrix}:\begin{pmatrix}
    \tau&z\\z&\sigma
\end{pmatrix}\longmapsto \begin{pmatrix}
    \tau^{33}&z^{33}\\z^{33}&\sigma^{33}
\end{pmatrix}:=\begin{pmatrix} 2z + \sigma + \tau & z + \sigma \\ z + \sigma & \sigma \end{pmatrix}, 
\\
&N=3:\quad \begin{pmatrix} \sqrt{3} & \frac{2}{\sqrt{3}} \\ \sqrt{3} & \sqrt{3} \end{pmatrix}:\begin{pmatrix}
    \tau&z\\z&\sigma
\end{pmatrix}\longmapsto \begin{pmatrix}
    \tau^{34}&z^{34}\\z^{34}&\sigma^{34}
\end{pmatrix}:=\begin{pmatrix} 4z + \frac{4\sigma}{3} + 3\tau & 5z + 2\sigma + 3\tau \\ 5z + 2\sigma + 3\tau & 3(2z + \sigma + \tau) \end{pmatrix}, \\
&N=3:\quad \begin{pmatrix} 2 & 1 \\ 3 & 2 \end{pmatrix}:\begin{pmatrix}
    \tau&z\\z&\sigma
\end{pmatrix}\longmapsto \begin{pmatrix}
    \tau^{35}&z^{35}\\z^{35}&\sigma^{35}
\end{pmatrix}:=\begin{pmatrix} 4z + \sigma + 4\tau & 7z + 2\sigma + 6\tau \\ 7z + 2\sigma + 6\tau & 12z + 4\sigma + 9\tau \end{pmatrix}, 
\\
&N=3:\quad \begin{pmatrix} \sqrt{3} & \frac{1}{\sqrt{3}} \\ 2\sqrt{3} & \sqrt{3} \end{pmatrix}:\begin{pmatrix}
    \tau&z\\z&\sigma
\end{pmatrix}\longmapsto \begin{pmatrix}
    \tau^{36}&z^{36}\\z^{36}&\sigma^{36}
\end{pmatrix}:=\begin{pmatrix} \frac{1}{3}(6z + \sigma + 9\tau) & 5z + \sigma + 6\tau \\ 5z + \sigma + 6\tau & 3(4z + \sigma + 4\tau) \end{pmatrix}~.
\end{split}
\end{equation}
\endgroup
The images in the other chambers with a vertex at
$\infty$ can be generated from $\CL_N$ by the action of the matrices
\be
\begin{pmatrix} 1 & 0\cr nN & 1\end{pmatrix}\, , \qquad 1\le n\le r-1\, .
\ee
Indeed, using \refb{esl2zonxy} we see that under this map,
\begin{eqsp}
    N=2:\quad &(0,1,2,\infty)\mapsto (2n,2n+1,2n+2,\infty)~,
    \\
    N=3:\quad & \left(0,1,\frac{3}{2},2,3,\infty\right)\mapsto \left(3n,3n+1,3n+\frac{3}{2},3n+2,3n+3,\infty\right)~.
\end{eqsp}
Then we have 
\begin{eqsp}\label{eq:CSr_def}
G_r^2\backslash\CS_{12}&\subset \left\{\begin{pmatrix} 1 & 0\cr 2n & 1\end{pmatrix}\begin{pmatrix} 0 & -\frac{1}{\sqrt{2}} \\ \sqrt{2} & 0 \end{pmatrix}, \begin{pmatrix} 1 & 0\cr 2n & 1\end{pmatrix}\begin{pmatrix} 1 & 1 \\ 0 & 1 \end{pmatrix}, \begin{pmatrix} 1 & 0\cr 2n & 1\end{pmatrix}\begin{pmatrix} \sqrt{2} & \frac{1}{\sqrt{2}} \\ \sqrt{2} & \sqrt{2} \end{pmatrix},\right. \\&\hspace{6cm}\left.\begin{pmatrix} 1 & 0\cr 2n & 1\end{pmatrix}\begin{pmatrix} 1 & 0 \\ 2 & 1 \end{pmatrix} :0\leq n\leq r-1\right\}~,
\\
G_r^3\backslash\CS_{12}&\subset \left\{\begin{pmatrix} 1 & 0\cr 3n & 1\end{pmatrix}\begin{pmatrix} 1 & 0 \\ 3 & 1 \end{pmatrix}, \begin{pmatrix} 1 & 0\cr 3n & 1\end{pmatrix}\begin{pmatrix} 0 & -\frac{1}{\sqrt{3}} \\ \sqrt{3} & 0 \end{pmatrix}, \begin{pmatrix} 1 & 0\cr 3n & 1\end{pmatrix}\begin{pmatrix} 1 & 1 \\ 0 & 1 \end{pmatrix},\begin{pmatrix} 1 & 0\cr 3n & 1\end{pmatrix}\begin{pmatrix} \sqrt{3} & \frac{2}{\sqrt{3}} \\ \sqrt{3} & \sqrt{3} \end{pmatrix}, \right.
\\
&\hspace{4.5cm}\left.\begin{pmatrix} 1 & 0\cr 3n & 1\end{pmatrix}\begin{pmatrix} 2 & 1 \\ 3 & 2 \end{pmatrix}, \begin{pmatrix} 1 & 0\cr 3n & 1\end{pmatrix}\begin{pmatrix} \sqrt{3} & \frac{1}{\sqrt{3}} \\ 2\sqrt{3} & \sqrt{3} \end{pmatrix}:0\leq n\leq r-1\right\}~,
\end{eqsp}
where the containment is because of the fact that there may be further
identification between these matrices by left multiplication by powers of 
\refb{e555}. The left multiplication of
$\begin{pmatrix} a & b\cr c & d\end{pmatrix}$ by $\begin{pmatrix} 1 & 0\cr nN & 1\end{pmatrix}$
maps 
\be
(\tau_2',\sigma_2',z_2') \to (\tau_2', \sigma'_2+n^2N^2\tau'_2+ 2nN z'_2, z'_2+nN\tau'_2)\, .
\ee
Using this and \eqref{eq:CSr_def}, we thus have 
\begin{eqsp}\label{e487x}
\left|\CF_4^{>12}\right|&\leq f_{-1}g_{-1} \sum_{r=2}^{\infty} r 
\sum_{n=0}^{r-1}\sum_{i=1}^{2N}\left[e^{2\pi \{\tau^{Ni}_2+(\sigma^{Ni}_2+n^2N^2\tau^{Ni}_2+2nNz^{Ni}_2)/N-r(z^{Ni}_2+nN\tau^{Ni}_2)\}}\right]
\\
&= f_{-1}g_{-1} \sum_{r=2}^{\infty} r 
\sum_{n=0}^{r-1}\sum_{i=1}^{2N} 
\left[e^{2\pi\{\sigma^{Ni}_2/N+\tau^{Ni}_2-nN(r-n)\tau^{Ni}_2
+ (2n-r)\, z^{Ni}_2\}}\right]~.
\end{eqsp}
We now use the following bound for $r\geq 2$:
\begin{eqsp} \label{e496yx}
e^{2\pi\{\sigma^{Ni}_2/N+\tau^{Ni}_2-nN(r-n)\tau^{Ni}_2
+ (2n-r)\, z^{Ni}_2\}}&= e^{2\pi \{\sigma^{Ni}_2/N+\tau^{Ni}_2- n (r-n) (N\tau^{Ni}_2- z^{Ni}_2)
- ((n+1)(r-n-1)+1) z^{Ni}_2\}}  
\\
&\leq 
 e^{2\pi(\sigma^{Ni}_2/N+\tau^{Ni}_2- C_{Ni}\, r)}\ \hbox{for $0\le n\le r-1$}~,
\end{eqsp}
where 
\begin{eqsp}
    C_{Ni} := \min\{N\tau^{Ni}_2-z^{Ni}_2, z^{Ni}_2\}\, .
\end{eqsp}
Explicitly, we have 
\begin{equation}\label{eq:CNi_explicit}
\begin{aligned}
C_{21} &= \min \left\{ \sigma_{2}+z_{2} \,,\, -z_{2} \right\} , \\
C_{22} &= \min \left\{ 3z_{2}+\sigma_{2}+2\tau_{2} \,,\, z_{2}+\sigma_{2} \right\} , \\
C_{23} &= \min \left\{ 2\tau_{2}+z_{2} \,,\, 3z_{2}+\sigma_{2}+2\tau_{2} \right\} , \\
C_{24} &= \min \left\{ -z_{2} \,,\, 2\tau_{2}+z_{2} \right\} , \\
C_{31} &= \min \left\{ -z_{2} \,,\, 3\tau_{2}+z_{2} \right\} , \\
C_{32} &= \min \left\{ \sigma_{2}+z_{2} \,,\, -z_{2} \right\} , \\
C_{33} &= \min \left\{ 5z_{2}+3\tau_{2}+2\sigma_{2} \,,\, z_{2}+\sigma_{2} \right\} , \\
C_{34} &= \min \left\{ 7z_{2}+6\tau_{2}+2\sigma_{2} \,,\, 5z_{2}+2\sigma_{2}+3\tau_{2} \right\} , \\
C_{35} &= \min \left\{ 5z_{2}+6\tau_{2}+\sigma_{2} \,,\, 7z_{2}+2\sigma_{2}+6\tau_{2} \right\} , \\
C_{36} &= \min \left\{ z_{2}+3\tau_{2} \,,\, 5z_{2}+\sigma_{2}+6\tau_{2} \right\} .
\end{aligned}
\end{equation}
From this, and \eqref{echamber}, we easily check that $C_{Ni}>0$
for $\begin{pmatrix}
    \tau&z\\z&\sigma
\end{pmatrix}\in\CR_N$ for $N=2,3$. Thus we get 
\begin{eqsp} \label{e497yx}
\left|\CF_4^{>12}\right|&\leq f_{-1}g_{-1} \sum_{r=2}^{\infty} r 
\sum_{n=0}^{r-1}\sum_{i=1}^{2N} 
\left[e^{2\pi\{\sigma^{Ni}_2/N+\tau^{Ni}_2-nN(r-n)\tau^{Ni}_2
+ (2n-r)\, z^{Ni}_2\}}\right]
\\&=\sum_{r=2}^{\infty} r^2\sum_{i=1}^{2N}
\left[e^{2\pi(\sigma^{Ni}_2/N+\tau^{Ni}_2- C_{Ni}\, r)}\right]
\\&<\infty~.    
\end{eqsp}
The absolute and uniform convergence on compact subsets of the $\CR_N$-chamber follows from the continuity of $C_{Ni}$ in the $\CR_N$-chamber.
\end{proof}
We are left to prove the convergence of $\CF_4^{>3}$. 
With this in mind, we prove the following lemma.
\begin{lemma}\label{lemma:coset_reps}
We have 
\begin{eqsp}
    G^N_r\backslash\CS_3= \tilde{\CS}_3:=\left\{\begin{pmatrix}
        a&b\\c&d
    \end{pmatrix}\in\CS_3: {2\over r} \leq  {c\over a} < {rN\over 2}\right\}~. 
\end{eqsp}
\end{lemma}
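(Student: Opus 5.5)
The plan is to study the left action of $g_{r,N}$ on the single quantity $c/a$. Under left multiplication by $g_{r,N}$ we have $(a,b,c,d)\mapsto(-c/\sqrt N,-d/\sqrt N,\sqrt N(a-rc),\sqrt N(b-rd))$. This means the ratio $u:=c/a$, which is the image $x'$ of the vertex $x=0$ of $\CR_N$, transforms by the Möbius map
\begin{eqsp}
\phi_r(u)=\frac{\sqrt N(a-rc)}{-c/\sqrt N}=N\left(r-\frac1u\right)~.
\end{eqsp}
This is exactly the action of $g_{r,N}$ on the parabola point $(u,u^2)$ given by \eqref{e471y}. Its fixed points are $u_c$ and $u_c'=Nu_c^{-1}$ from \eqref{eucdef}.

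For elements of $\CS_3$ we have $u\in S_3=(u_c,u_c')$, and this interval is mapped to itself by $\phi_r$ and $\phi_r^{-1}$. On this interval $\phi_r(u)-u=N(r-1/u)-u=-(u^2-rNu+N)/u>0$, because $u^2-rNu+N<0$ strictly between the roots $u_c,u_c'$. So $\phi_r$ is increasing and moves every point of $(u_c,u_c')$ strictly to the right. Its orbits accumulate at $u_c$ as $n\to-\infty$ and at $u_c'$ as $n\to+\infty$.

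The next step is to exhibit a fundamental domain for this $\IZ$-action on $(u_c,u_c')$. The natural candidate is $[u_0,\phi_r(u_0))$ for any $u_0$ in the interval. Taking $u_0=2/r$ gives $\phi_r(2/r)=N(r-r/2)=rN/2$, which matches the claimed range $2/r\le c/a<rN/2$. I need to check that $2/r\in(u_c,u_c')$ when $r>2/\sqrt N$. Indeed $(2/r)^2-rN(2/r)+N=4/r^2-N<0$, which is exactly the condition $r^2N>4$. Since $\phi_r$ is a strictly increasing homeomorphism of $(u_c,u_c')$ with no fixed points inside, the intervals $\phi_r^n([2/r,rN/2))$, $n\in\IZ$, are disjoint and cover the whole interval. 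Hence every $G^N_r$-orbit meets $\tilde\CS_3$ in exactly one matrix. This gives both existence and uniqueness of coset representatives with $2/r\le c/a<rN/2$.

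Two points still need care. First, I must confirm that $\CS_3$ is stable under left multiplication by $g_{r,N}$, so that the quotient makes sense and the condition defining $\CS_3$ can be read on any representative. Membership in $\CS_3$ is a condition on the right-multiplied ratios ${}_nc/{}_na$, and left multiplication commutes with right multiplication by $\tilde g_N^n$. Moreover each such ratio transforms by the same map $\phi_r$, which preserves $S_3$, so $\CS_3$ is left-$G^N_r$-invariant. Second, I must handle signs and the $\mr{P}$-quotient, including cosets lying in $\gamma_N\mr{P}\Gamma_1(N)$, where $a$ or $c$ may carry a factor $\sqrt N$. The ratio $c/a$ is insensitive to an overall sign, and $a\neq0$ because $c/a\in S_3$ is finite. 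I expect the main obstacle to be this bookkeeping, rather than the dynamics: verifying that the $\phi_r$-action on $c/a$ is compatible with the identification in $\mr{P}\Gamma_1(N)^+$, and that no nontrivial element of $G^N_r$ fixes a matrix. The latter follows because $\phi_r$ has no fixed points in $S_3$.
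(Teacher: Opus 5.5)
Your proposal is correct and follows essentially the paper's argument. Both use the map $x\mapsto N(r-1/x)$ induced by $g_{r,N}$ on $c/a$, with fixed points $u_c$ and $Nu_c^{-1}$; it is increasing and moves points toward $Nu_c^{-1}$, and since $2/r\mapsto rN/2$, the interval $[2/r,rN/2)$ is a fundamental domain. Your additional checks fill in details the paper leaves implicit: that $2/r\in S_3$ exactly when $r^2N>4$, that $\CS_3$ is left-$G^N_r$-invariant, and that the action is free.
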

\begin{proof}
It follows from \refb{expypdef} and \refb{e555} that on the parabola $y'=x^{\prime 2}$,
if we denote by $(x_n,y_n=x_n^2)$ the image of $(x',y'=x^{\prime 2})$ under the action of
$g_{r,N}^n,n>0$, then we have
\be\label{eurecurx}
x_{n+1}= N(r - x_n^{-1}), \qquad x_{n-1} = N(rN-x_n)^{-1} \, .
\ee
Using $u_c+Nu_c^{-1}=rN$, this gives,
\be\label{einequalx}
x_{n+1}-u_c = N\left({x_n-u_c\over  x_n u_c}\right) >  x_n-u_c\, ,\qquad
Nu_c^{-1}-x_{n+1} = {Nu_c^{-1}-x_n\over x_nu_c^{-1}}<Nu_c^{-1}-x_n 
\ee
as long as $u_c<x_n<Nu_c^{-1}$. 
This shows that 
at every step, $x$ is driven away from $u_c$ towards $Nu_c^{-1}$.
Conversely, the action of $g_{r,N}^{-1}$ will drive $x$ towards $u_c$. 
Furthermore, considering the first equation in \refb{e471y} with $y'=x'{}^2$ as a function $f_{g_r}:\IR\to\IR$, we have
\begin{eqsp}
    f_{g_r}(x)=rN-\frac{N}{x}~.
\end{eqsp}  
We see that $f_{g_r}$ is a continuous function on $(u_c,Nu_c^{-1})$. Moreover, 
\begin{eqsp} \label{egderivative}
    f_{g_r}'(x)=\frac{1}{x^2}>0~,\quad x\neq 0~,
\end{eqsp}
showing that $f_{g_r}(x)$ is a monotonically increasing function of $x$. Also,
\begin{eqsp} \label{egspecial}
   f_{g_r}(u_c)=u_c~,\quad  f_{g_r}(2/r)=rN/2~,\quad f_{g_r}(Nu_c^{-1})=Nu_c^{-1}~.
\end{eqsp}
This shows that the region $u_c<x\le 2/r$ is mapped to the region
$u_c<x\le rN/2$ under $g_{r,N}$ and the region $rN/2<x<Nu_c^{-1}$ is mapped to the region
$2/r<x<Nu_c^{-1}$ under $g_{r,N}^{-1}$.  Furthermore, using \refb{egderivative} and \refb{egspecial}
it is easy to see that any point in the interval $2/r\leq x<rN/2$ is mapped outside this range by the action of $g_{r,N}$
and $g_{r,N}^{-1}$. 
%Now, since $g$ drives $x$
%towards $u_{c}^{-1}$ and $g^{-1}$ drives $x$ towards $u_c$, 
This shows that any $x$ in the
range $u_c<x<Nu_c^{-1}$ can be mapped to a unique point in the range 
$2/r\leq x<rN/2$ by successive action of $g_{r,N}$ or $g_{r,N}^{-1}$.
This completes the proof of the lemma.     
\end{proof}
We now want to find a bound for the absolute value of the summand of $\CF_4^{>3}$. Let us write 
\begin{eqsp}
    \left|e^{2\pi i \{(-a^2-c^2/N+r ac)\tau+(-b^2-d^2/N+r bd)\sigma+
(-2ab-2cd/N + r(ad+bc)) z\}}\right|=e^{2\pi B}~,
\end{eqsp}
where 
\begin{eqsp}
B&:= \sigma_2'/N+\tau_2'-r z_2' 
\\&= (a^2 + c^2/N - r ac)\tau_2 + (b^2 + d^2/N - rbd)\sigma_2 - 
\{  r (ad+bc) - 2ab - 2cd/N\} z_2\, .
\end{eqsp}
We have the following upper bound for $B$.
\begin{lemma}\label{lemma:bound_B}
For $N=2,3$, $\begin{pmatrix} a & b\cr c & d\end{pmatrix}\in \tilde{\CS}_3$, we have
\begin{eqsp}\label{ebound}
    B\leq \frac{L}{2N}\left(\frac{2}{r}-\frac{rN}{2}+N\left(1-\frac{2}{3}\delta_{N,2}\delta_{r,2}-\frac{11}{18}\delta_{N,3}\delta_{r,2}\right)\right)C_N(\Omega)\mu(a,b,c,d)<0~,
\end{eqsp}
for some $L>0$, where $C_N(\Omega),\mu(a,b,c,d)$ is defined in \eqref{eq:def_C2},\eqref{eq:def_C3} and \eqref{eq:def_mu}.
\end{lemma}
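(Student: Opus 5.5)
The idea is to write $B$ as a quadratic form in the transformed variables and to use that $x'=z_2'/\tau_2'$ lies in the fundamental window $[2/r,rN/2)$ given by Lemma \ref{lemma:coset_reps}.

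\textbf{Reduction to one variable.} From \eqref{expypdef},
\begin{equation*}
B=\tau_2'\left(1+\frac{y'}{N}-r x'\right).
\end{equation*}
Since $y'$ is not pinned to the parabola, we cannot bound $B$ through $x'$ alone. Instead we use the chamber structure. The matrix $\begin{pmatrix} a&b\\c&d\end{pmatrix}$ maps $\CR_N$ to a chamber in the $(x',y')$ plane. This chamber lies inside the region carved out by the red lines of Lemma \ref{lemma:Heaviside_matr_const}, and all of its vertices lie on the parabola in $(u_c,Nu_c^{-1})$; see Remark \ref{rem:lemma_nec}.

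On the image chamber, $1+y'/N-rx'$ is affine in $(x',y')$, so it suffices to control it at the vertices. At a vertex $(v,v^2)$ its value is
\begin{equation*}
h(v):=1+\frac{v^2}{N}-rv=\frac{1}{N}(v-u_c)(v-Nu_c^{-1}),
\end{equation*}
which is negative on $(u_c,Nu_c^{-1})$.

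\textbf{Making this quantitative.} Write the point $(x,y)\in\CR_N$ as a convex combination of the vertices of $\CR_N$, with weights given by the decompositions \eqref{eq:bound_tau2N2} and \eqref{eq:bound_tau2N3}. Equivalently, expand $\tau_2'$, $\sigma_2'$ and $z_2'$ simultaneously in the same basis of positive chamber-dependent forms, e.g. $\tau_2+z_2$, $\sigma_2+z_2$, $-z_2$ on the sub-piece $-\tau_2\le z_2<0$. Then $B$ becomes a positive combination of those forms, each multiplied by a term of the form
\begin{equation*}
\alpha^2+\frac{\gamma^2}{N}-r\alpha\gamma=\alpha^2\, h(\gamma/\alpha),
\end{equation*}
where $(\alpha,\gamma)$ runs over the column pairs $(a,c)$, $(b,d)$, $(a-b,c-d)$, $(a-2b,c-2d)$, etc. These ratios $\gamma/\alpha$ are exactly the ${}_nc/{}_na$ of \eqref{eq:nc/naN23}. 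They lie in $S_3$, since they cannot be in $S_1\cup S_2$ for elements of $\CS_3$, so every coefficient is negative. Bounding each positive form below by $C_N(\Omega)$ gives
\begin{equation*}
B\le C_N(\Omega)\sum \alpha^2 h(\gamma/\alpha)\le C_N(\Omega)\max_i\bigl(\alpha_i^2 h(v_i)\bigr),
\end{equation*}
and the sum contains at least one term with $|\alpha|$ comparable to $\mu(a,b,c,d)$.

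\textbf{Main obstacle: a uniform negative bound.} The main obstacle is bounding $h(v_i)$ uniformly away from $0$, as $\mu$ times the explicit factor in \eqref{ebound}, since the vertices can approach $u_c$ or $Nu_c^{-1}$. Here I would use the fundamental-domain normalization $2/r\le c/a<rN/2$ from Lemma \ref{lemma:coset_reps} for the vertex $c/a$. Evaluating there,
\begin{equation*}
h\!\left(\frac{2}{r}\right)=h\!\left(\frac{rN}{2}\right)\cdot(\text{positive factor}),\qquad h\!\left(\frac{2}{r}\right)=\frac{1}{N}\left(\frac{4}{r^2}+N-2N\right),
\end{equation*}
and $h$ is concave-down-free on the interval, so its maximum on $[2/r,rN/2]$ is attained at an endpoint. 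Multiplied by $|a|$ and by $r$ appropriately, this yields something like $\tfrac{1}{2N}\bigl(\tfrac{2}{r}-\tfrac{rN}{2}+N\bigr)$, up to the constant $L$.

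For general $r$ this is already negative. The cases $N\in\{2,3\}$, $r=2$ are borderline: the generic expression vanishes or becomes positive there. For these I would enumerate explicitly which vertices of $\CR_N$ can land in the window, using the finite lists \eqref{eq:vert_RN_xy}. From that enumeration I would extract the improved constants $1-\tfrac{2}{3}$ and $1-\tfrac{11}{18}$ as the worst-case vertex values of $h$, computed in the integer lattice of admissible $c/a$ values with $a\ge 1$. Finally, $|\alpha|\ge$ const$\cdot\mu$ follows because the columns entering the decomposition generate $(a,b,c,d)$ with bounded coefficients.
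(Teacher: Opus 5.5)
You have the same setup as the paper. Expanding $\tau_2',\sigma_2',z_2'$ in the chamber forms turns $B$ into \eqref{eq:B_exp_N2}--\eqref{eq:B_exp_N3}, with coefficients $\alpha^2h(\gamma/\alpha)=\frac{\alpha\gamma}{N}\bigl(\frac{\gamma}{\alpha}+\frac{N\alpha}{\gamma}-rN\bigr)$, and your sign argument via $S_3$ is correct.

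The gap is in the quantitative step. The only uniform negative bound you actually prove is for the single column $(a,c)$, using the window $2/r\le c/a<rN/2$. That cannot suffice, for three reasons.
\begin{enumerate}
\item On some sub-pieces of $\CR_N$ the positive decomposition does not contain the $(a,c)$ column at all. For $N=2$ and $-2\tau_2<z_2\le-\tau_2$ it uses only $(b,d)$, $(a-b,c-d)$ and $(a-2b,c-2d)$, so there your argument gives nothing.
\item The statement needs a bound that grows with $\mu(a,b,c,d)$, and a bound in $(a,c)$ alone cannot provide it. For fixed $(a,c)$, the matrices with $(b,d)=(b_0+ka,\,d_0+kc)$ lie in $\tilde{\CS}_3$ for all large $|k|$, because every column ratio tends to $c/a$, which is interior to $S_3$.
\item Your closing remark that some column has size $\gtrsim\mu$ helps only if the bracket of \emph{that} column is uniformly negative. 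Those columns are not normalized into the window, and you note yourself that vertices can approach $u_c$ or $Nu_c^{-1}$. Separately, bounding a sum of negative terms by $\max_i$ discards exactly the large term you want to keep.
\end{enumerate}

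The missing ingredient is to transfer the window bound to the other columns using $ad-bc=1$. For example,
\begin{equation*}
\frac{d}{b}+\frac{Nb}{d}-rN=\Bigl(\frac{c}{a}+\frac{Na}{c}-rN\Bigr)+\frac{1}{ab}-\frac{N}{cd},
\end{equation*}
and similarly for $(a-b,c-d)$, $(a-2b,c-2d)$, $(2a-3b,2c-3d)$ and $(a-3b,c-3d)$. The corresponding corrections are $\frac{1}{a(a-b)}-\frac{N}{c(c-d)}$, $\frac{2N}{c(c-2d)}-\frac{2}{a(a-2b)}$, $\frac{3N}{c(2c-3d)}-\frac{3}{a(2a-3b)}$ and $\frac{3N}{c(c-3d)}-\frac{3}{a(a-3b)}$.

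Integrality ($ab\in\IZ$, $N\mid cd$, etc.) and the fact that the two terms in each correction have the same sign bound these corrections by $1,1,2,3,3$, all $\le N$. Combined with $\frac{c}{a}+\frac{Na}{c}-rN\le\frac{2}{r}-\frac{rN}{2}$ on the window (convexity of $v+N/v$), this is exactly where the $+N$ in the statement comes from. Your observation that the columns generate $(a,b,c,d)$, together with positivity and integrality of the products $\alpha\gamma$, then yields the factor $C_N(\Omega)\mu(a,b,c,d)$. It also yields the linear growth in $r$ needed later for the sum over $r$.

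This also shows your account of $r=2$ is off. By your own computation $h(2/r)=\frac{1}{N}\bigl(\frac{4}{r^2}-N\bigr)<0$ at $r=2$, so the $(a,c)$ term is not borderline there. The borderline value $\frac{2}{r}-\frac{rN}{2}+N=1$ is created by the correction terms. The refined constants, $\frac{2}{3}$ for $N=2$ and $\frac{7}{6}=3\cdot\frac{7}{18}$ for $N=3$, come from an integer case analysis of those corrections. That analysis minimizes $|ab|$, $|cd|$, $|a(a-2b)|$, \dots{} subject to $ad-bc=1$, $N\mid c$, the window, and membership in $S_3$. It is not an enumeration of vertices of $\CR_N$ landing in the window.
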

\begin{proof}
We first express $B$ in various ways. We have 
\begin{eqsp}\label{eq:B_exp_N2}
B&=\frac{ac}{N} \left( {aN\over c} + {c\over a} - rN\right)\, (\tau_2 + z_2)
+ \frac{bd}{N} \left( {bN\over d} +{d\over b}
- rN\right)\, (\sigma_2+z_2) \\&+ \,\frac{(a-b)(c-d)}{N} \left(  {N(a-b)\over (c-d)}+ 
{c-d\over a-b}-rN \right) (-z_2)  \\&=    \frac{(a-b)(c-d)}{N} \left(  {N(a-b)\over (c-d)}+ 
{c-d\over a-b}-rN \right) (2\tau_2+z_2) + \frac{bd}{N} \left( {bN\over d} +{d\over b}
- rN\right)(3z_2+2\tau_2+\sigma_2) \\&+ \frac{(a-2b)(c-2d)}{N}\left( \frac{N(a-2b)}{(c-2d)} + \frac{(c-2d)}{(a-2b)} - rN \right)(-\tau_2-z_2)
\end{eqsp}
which will be useful for $N=2$ and 
\begin{eqsp}\label{eq:B_exp_N3}
B&=\frac{ac}{N} \left( {aN\over c} + {c\over a} - rN\right)\, (\tau_2 + z_2)
+ \frac{bd}{N} \left( {bN\over d} +{d\over b}
- rN\right)\, (\sigma_2+z_2) \\&+ \,\frac{(a-b)(c-d)}{N} \left(  {N(a-b)\over (c-d)}+ 
{c-d\over a-b}-rN \right) (-z_2)  \\
&=\frac{bd}{2N}\left( \frac{bN}{d} + \frac{d}{b} - rN\right)(5z_2+3\tau_2+2\sigma_2) \\&+ \frac{(a-b)(c-d)}{N}\left( {N(a-b)\over (c-d)}+ 
{c-d\over a-b}-rN \right)(2z_2+3\tau_2) \\&+ \frac{(2a-3b)(2c-3d)}{2N}\left( \frac{N(2a-3b)}{(2c-3d)} + \frac{(2c-3d)}{(2a-3b)} - rN \right)(-\tau_2-z_2)
\\&=\frac{bd}{2N}\left( \frac{bN}{d} + \frac{d}{b} - rN\right)(7z_2+6\tau_2+2\sigma_2) \\&+\frac{(a-2b)(c-2d)}{N}\left( \frac{N(a-2b)}{(c-2d)} + \frac{(c-2d)}{(a-2b)} - rN \right)(-2z_2-3\tau_2) \\&+\frac{(2a-3b)(2c-3d)}{2N}\left( \frac{N(2a-3b)}{(2c-3d)} + \frac{(2c-3d)}{(2a-3b)} - rN \right)(z_2+2\tau_2)  
\\&=\frac{bd}{N} \left( {bN\over d} +{d\over b}
- rN\right)(5z_2+6\tau_2+\sigma_2) \\&+ \frac{(a-3b)(c-3d)}{N}\left( \frac{N(a-3b)}{(c-3d)} + \frac{(c-3d)}{(a-3b)} - rN \right)(-z_2-2\tau_2)\\& + \frac{(a-2b)(c-2d)}{N}\left( \frac{N(a-2b)}{(c-2d)} + \frac{(c-2d)}{(a-2b)} - rN \right)(z_2+3\tau_2)
\end{eqsp}
which will be useful for $N=3$.
For $\begin{pmatrix} a & b\cr c & d\end{pmatrix}
\in \tilde{\CS}_3$,
we get
\be\label{e4102xy}
\frac{c}{a}+\frac{aN}{c}-rN \le {2\over r}+{rN\over 2} - rN = {2\over r} - {rN\over 2}\, .
\ee
As in \cite{Bhand:2025ghn}, using \eqref{eq:nc/naN23}, all the denominators in the coefficients in \eqref{eq:B_exp_N2} and \eqref{eq:B_exp_N3} are nonzero and the two terms in each of the pairs $(ab,cd),(a(a-b),c(c-d)),(c(c-2d),a(a-2b)),(c(2c-3d),a(2a-3b)),(c(c-3d),(a(a-3b))$ have the same sign.  Using the fact that $N|c$, we find that
\begin{align}\label{eq:B1}
\frac{d}{b}+\frac{bN}{d}-rN=\frac{c}{a}-\frac{aN}{c}-rN+\frac{1}{ab}-\frac{N}{cd}&\leq \frac{2}{r}-\frac{rN}{2}+1~, 
\\
{N(a-b)\over (c-d)}+ 
{c-d\over a-b}-rN=\frac{c}{a}-\frac{aN}{c}-rN+\frac{1}{a(a-b)}-\frac{N}{c(c-d)}&\leq \frac{2}{r}-\frac{rN}{2}+1~, 
\\\frac{N(a-2b)}{(c-2d)} + \frac{(c-2d)}{(a-2b)} - rN=\frac{c}{a}-\frac{aN}{c}-rN+\frac{2N}{c(c-2d)} - \frac{2}{a(a-2b)}&\leq \frac{2}{r}-\frac{rN}{2}+2~,
\\
\frac{N(2a-3b)}{(2c-3d)} + \frac{(2c-3d)}{(2a-3b)} - rN=\frac{c}{a}-\frac{aN}{c}-rN+\frac{3N}{c(2c-3d)} - \frac{3}{a(2a-3b)}&\leq \frac{2}{r}-\frac{rN}{2}+3~, 
\\
\frac{N(a-3b)}{(c-3d)} + \frac{(c-3d)}{(a-3b)} - rN=\frac{c}{a}-\frac{aN}{c}-rN+\frac{3N}{c(c-3d)} - \frac{3}{a(a-3b)} &\leq \frac{2}{r}-\frac{rN}{2}+3~. 
\label{eq:B5}
\end{align}
For $r=2$ we need a stricter bound. As in \cite{Bhand:2025ghn}, we need to find bounds on 
\begin{eqsp}\label{eq:cdab_bounds_req}
  & \frac{1}{ab}-\frac{N}{cd} ,\quad  \frac{1}{a(a-b)}-\frac{N}{c(c-d)} ,\quad  \frac{N}{c(c-2d)} - \frac{1}{a(a-2b)} ~,\\& \frac{N}{c(2c-3d)} - \frac{1}{a(2a-3b)} ,\quad \frac{N}{c(c-3d)} - \frac{1}{a(a-3b)}~,   
\end{eqsp}
the first three of which is required for $N=2,3$ and all five are required for $N=3$. 
We perform the analysis case by case. 
We first observe that for $\begin{pmatrix}
    a&b\\c&d
\end{pmatrix}\in\mr{P}\Gamma_1(N)^+$, we have $ad,bc,ab\in\IZ$ and $cd\in N\IZ$. Next, $\left|\frac{1}{ab}-\frac{N}{cd}\right|$ is maximized when $|ab|$ is minimized (maximized) and $|cd|$ is maximized (minimized). So, we find the maximum of $\left|\frac{1}{ab}-\frac{N}{cd}\right|$ with the constraint that $\begin{pmatrix}
    a&b\\c&d
\end{pmatrix}\in\tilde{\CS}_3$. We start with $ab=\pm 1$. Since $\frac{2}{r}\leq \frac{c}{a}<\frac{rN}{2}$, we get for $r=2$
\begin{eqsp}
    1\leq \pm bc<N~.
\end{eqsp}
But since $N|c$, this inequality has no solution. We then take $ab=\pm 2$. Then we get 
\begin{eqsp}
    1\leq \pm \frac{bc}{2}<N\implies bc=\pm N~.
\end{eqsp}
This gives $ad=1\pm N$. We get
\begin{eqsp}
    \left|\frac{1}{ab}-\frac{N}{cd}\right|=\left|\pm\frac{1}{2}-\frac{2}{1\pm N}\right|~.
\end{eqsp}
Note that for $N=2$, $ab=-2$ is not allowed since in that case 
\begin{eqsp}
    \frac{d}{b}=\frac{1}{2}<u_c\implies\begin{pmatrix}
        a&b\\c&d
    \end{pmatrix}\not\in S_3~. 
\end{eqsp}
Thus, when $|ab|$ is minimized and $|cd|$ is maximized, we obtain 
\begin{eqsp}\label{eq:abcdB1}
    \left|\frac{1}{ab}-\frac{N}{cd}\right|=\begin{cases}
        \frac{1}{6},&N=2,ab=2,\\0,&N=3,ab=2,
        \\
        \frac{1}{2},&N=3,ab=-2.
    \end{cases}
\end{eqsp}
Next, we consider the case when $|cd|$ is minimized and $|ab|$ is maximized. We need to consider $N=2,3$ cases separately. Take first $N=2$ and consider first the case $cd=\pm 2$. This is not allowed since for this case, we get 
\begin{eqsp}
    1\leq \frac{\pm 2}{ad}=\frac{\pm 2}{1+bc}<2~,
\end{eqsp}
which has no solution since $0\neq bc\in 2\IZ$. So, we consider $cd=\pm 4$. This gives 
\begin{eqsp}
    1\leq \frac{\pm 4}{1+bc}<2~.
\end{eqsp}
This has no solution for $cd=4$ since $bc$ is not allowed to be $\pm 2$ because for $bc=\pm 2$ gives $ab=\frac{3}{2},\frac{1}{2}\not\in\IZ$. So we try $cd=-4$, which gives $bc=-4$. We get 
\begin{eqsp}\label{eq:abcdN2B2}
    \left|\frac{1}{ab}-\frac{2}{cd}\right|=\frac{1}{6}~,\quad cd=-4~.
\end{eqsp}
For $N=3$, $cd= 3$ is not allowed for the same reason as $N=2,cd=N$. For $cd=-3$, we get
\begin{eqsp}
    1\leq -\frac{3}{1+bc}<3\implies bc=-3~.
\end{eqsp}
This gives 
\begin{eqsp}\label{eq:abcdB2N3}
 \left|\frac{1}{ab}-\frac{3}{cd}\right|=\frac{1}{2}~,\quad cd=-3~.   
\end{eqsp}
Thus combining \eqref{eq:abcdB1},\eqref{eq:abcdN2B2} and \eqref{eq:abcdB2N3}, we can write 
\begin{eqsp}
    \frac{1}{ab}-\frac{N}{cd}\leq \frac{1}{6}\delta_{N,2}+\frac{1}{2}\delta_{N,3}~,\quad r=2,\begin{pmatrix}
        a&b\\c&d
    \end{pmatrix}\in \tilde{\CS}_3~.
\end{eqsp}
To find a bound on $\left|\frac{1}{a(a-b)}-\frac{N}{c(c-d)}\right|$, we define $b'=b-a,d'=d-c$ so that $ad'-b'c=1$. Then from the previous calculation, we get 
\begin{eqsp}
    \frac{1}{a(a-b)}-\frac{N}{c(c-d)}\leq \frac{1}{6}\delta_{N,2}+ \frac{1}{2}\delta_{N,3}~,\quad r=2, \begin{pmatrix}
        a&b\\c&d
    \end{pmatrix}\in \tilde{\CS}_3~. 
\end{eqsp}
For $\left|\frac{N}{c(c-2d)} - \frac{1}{a(a-2b)}\right|$, we define $d''=2d-c,b''=2b-a$ so that $ad''-b''c=2$. As before $ab''=\pm 1$ and is not allowed. Moreover $ab''=-2$ is not allowed because the solution $b''c=-N$ to the inequality 
\begin{eqsp}
1\leq -\frac{b''c}{2}<N~,    
\end{eqsp}
gives 
\begin{eqsp}
    ad''=0,\quad N=2~,
    \\
    \frac{d''}{b''}=\frac{1}{2},\quad N=3~,
\end{eqsp}
both of which contradict the requirement the $\frac{d''}{b''}\in S_3$. Thus, we get 
\begin{eqsp}\label{eq:ab''bound}
   \left|\frac{1}{ab''}-\frac{N}{cd''}\right| = \left|\frac{1}{2}-\frac{2}{2+ N}\right|~.
\end{eqsp}
Thus, for the case when when $|ab''|$ is minimized and $|cd''|$ is maximized, we obtain 
\begin{eqsp}\label{eq:ab''cd''B1}
    \left|\frac{1}{ab''}-\frac{N}{cd''}\right|=\begin{cases}
        0,&N=2,ab''=2,\\\frac{1}{10},&N=3,ab''=2.
    \end{cases}
\end{eqsp}
Now, we analyze the case when $|ab''|$ is maximized and $|cd|$ is minimized. Let us start with $N=2$. For $N=2$, we notice that $4|cd''$. So we take $cd''=\pm 4$. The inequality 
\begin{eqsp}
  1\leq \frac{cd''}{ad''}=\frac{cd''}{2+b''c}<2\implies (cd'',b''c)=(4,2),(-4,-6)~. 
\end{eqsp}
This gives
\begin{eqsp}\label{eq:ab''cd''N2B2}
N=2:\quad      \left|\frac{1}{ab''}-\frac{2}{cd''}\right|=\begin{cases}
         0,&cd''=4,
         \\
         \frac{1}{3},&cd=-4.
     \end{cases}
\end{eqsp}
For $N=3$, $cd''=\pm 3$, there is no solution to the inequality 
\begin{eqsp}
    1\leq \frac{\pm 3}{ad''}=\frac{\pm 3}{2+b''c}<3~.
\end{eqsp}
So, we try $cd''=\pm 6$. We have the solution 
\begin{eqsp}
    1\leq \frac{cd''}{ad''}=\frac{cd''}{2+b''c}<3\implies (cd'',b''c)=(6,3),(-6,-6)~.
\end{eqsp}
But $cd''=6$ is not allowed since in this case $ab''=\frac{5}{2}\not\in\IZ$.
This gives 
\begin{eqsp}\label{eq:ab''cd''N3B2}
 N=3:\quad    \left|\frac{1}{ab''}-\frac{3}{cd''}\right|=\frac{1}{4}~,\quad cd''=6~.
\end{eqsp}
Combining \eqref{eq:ab''cd''B1}, \eqref{eq:ab''cd''N2B2} and \eqref{eq:ab''cd''N3B2} we can write 
\begin{eqsp}
    \frac{N}{c(c-2d)} - \frac{1}{a(a-2b)}&\leq \frac{1}{3}\delta_{N,2}+\frac{1}{4}\delta_{N,3}~,\quad r=2,\begin{pmatrix}
        a&b\\c&d
    \end{pmatrix}\in \tilde{\CS}_3~.
\end{eqsp}
For $N=3$, we need bound on $\frac{N}{c(2c-3d)} - \frac{1}{a(2a-3b)}$ and $\frac{N}{c(c-3d)} - \frac{1}{a(a-3b)}$. Let us start with $\frac{N}{c(2c-3d)} - \frac{1}{a(2a-3b)}$. Define $b'''=3b-2a,d'''=3d-2c$ so that $ad'''-b'''c=3$.  $ab'''=\pm 1$ is not allowed as before. Next, $ab'''=-2$ is also not allowed since the inequality 
\begin{eqsp}
    1\leq -\frac{b'''c}{2}<3\implies b'''c=-3~.
\end{eqsp}
This gives $ad'''=0$ which is a contradiction since $c/a,d'''/b'''\in S_3$. Then we are left with $ab'''=2$. This gives $b'''c=3$ which implies 
\begin{eqsp}
    \left|\frac{1}{ab'''}-\frac{3}{cd'''}\right|=\frac{1}{6}~,\quad ab'''=2~.
\end{eqsp}
Thus, in the case when $|ab'''|$ is minimized and $|cd'''|$ is maximized, we can write 
\begin{eqsp}\label{eq:cd'''B1N3}
 \frac{1}{ab'''}-\frac{3}{cd'''}\leq \frac{1}{6}~.    
\end{eqsp}
Next, we consider the case $cd'''=\pm 3$. This case is not allowed since the inequality 
\begin{eqsp}\label{eq:cd'''inequality}
    1\leq \frac{cd'''}{ad'''}=\frac{cd'''}{3+b'''c}<3~,
\end{eqsp}
has no solution for $cd'''=3$ and the solution $b'''c=-6$ for $cd'''=-3$ implies $d'''/b'''=1/2\not\in S_3$. The case $cd'''=6$ gives the solution $b'''c=3$ to the inequality \eqref{eq:cd'''inequality}. This gives
\begin{eqsp}\label{eq:cd'''B2N3}
    \left|\frac{1}{ab'''}-\frac{3}{cd'''}\right|=\frac{1}{6}~.
\end{eqsp}
The case $cd'''=-6$ gives the solution $b'''c=-6,-9$ to the inequality \eqref{eq:cd'''inequality}. In these cases, we get $ab'''=-3,-9$. So we get 
\begin{eqsp}\label{eq:cd'''B3N3}
    \left|\frac{1}{ab'''}-\frac{3}{cd'''}\right|=\begin{cases}
       \frac{1}{6},&cd'''=6,
       \\
       \frac{7}{18},&cd'''=-6.
   \end{cases}~.
\end{eqsp}
Combining \eqref{eq:cd'''B1N3}, \eqref{eq:cd'''B2N3} and \eqref{eq:cd'''B3N3} we can write 
\begin{eqsp}
   \frac{N}{c(2c-3d)} - \frac{1}{a(2a-3b)} \leq\frac{7}{18}\delta_{N,3}~.
\end{eqsp}
Finally, to get a bound on $\frac{N}{c(c-3d)} - \frac{1}{a(a-3b)}$, we write $b''''=3b-a,d''''=3d-c$ so that $ad''''-b''''c=3$ and we are back to the previous case. This gives us 
\begin{eqsp}
    \frac{N}{c(c-3d)} - \frac{1}{a(a-3b)}\leq\frac{7}{18}\delta_{N,3}~. 
\end{eqsp}
Thus for $r=2$, we can put the uniform upper bound of
\begin{eqsp}
    \frac{2}{r}-\frac{rN}{2}+N\left(1-\frac{2}{3}\delta_{N,2}\delta_{r,2}-\frac{1}{2}\delta_{N,3}\delta_{r,2}\right)~,
\end{eqsp}
on each of the terms in \eqref{eq:B1}--\eqref{eq:B5}. Note that for $N=2,3$, we have 
\begin{eqsp}
    \frac{2}{r}-\frac{rN}{2}+N\left(1-\frac{2}{3}\delta_{N,2}\delta_{r,2}-\frac{1}{2}\delta_{N,3}\delta_{r,2}\right)<0~,\quad\text{for}~~r\geq2~.
\end{eqsp}
Thus using the expression \eqref{eq:B_exp_N2} and \eqref{eq:B_exp_N3}, we get the following bound for $B$:
\begin{eqsp}
    B\leq \frac{L}{2N}\left(\frac{2}{r}-\frac{rN}{2}+N\left(1-\frac{2}{3}\delta_{N,2}\delta_{r,2}-\frac{1}{2}\delta_{N,3}\delta_{r,2}\right)\right)C_N(\Omega)\mu(a,b,c,d)<0~,
\end{eqsp}
for some $L>0$, where $C_N(\Omega),\mu(a,b,c,d)$ is defined in \eqref{eq:def_C2},\eqref{eq:def_C3} and \eqref{eq:def_mu}.
This proves the lemma.
\end{proof}
\begin{prop}
The series $\CF_4^{>3}$ converges absolutely and uniformly on compact subsets of the $\CR_N$-chamber.    
\end{prop}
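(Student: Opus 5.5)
By Lemma~\ref{lemma:coset_reps}, the sum over $G^N_r\backslash\CS_3$ can be replaced by a sum over the explicit representatives $\tilde\CS_3$, i.e.\ over matrices with $2/r\le c/a<rN/2$. Each summand of $\CF_4^{>3}$ has absolute value $e^{2\pi B}$. Lemma~\ref{lemma:bound_B} gives
\begin{equation*}
B\le -\kappa_N(r)\,C_N(\Omega)\,\mu(a,b,c,d),
\end{equation*}
where
\begin{equation*}
\kappa_N(r):=\frac{L}{2N}\Big(\frac{rN}{2}-\frac{2}{r}-N\big(1-\tfrac{2}{3}\delta_{N,2}\delta_{r,2}-\tfrac{11}{18}\delta_{N,3}\delta_{r,2}\big)\Big)>0
\end{equation*}
for all $r\ge 2$ (and $r>2/\sqrt N$). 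For large $r$ we moreover have $\kappa_N(r)\ge \kappa_0\, r$ for a fixed $\kappa_0>0$, since the $rN/2$ term dominates. The plan is therefore to dominate $|\CF_4^{>3}|$ by
\begin{equation*}
|f_{-1}g_{-1}|\sum_{r\ge2} r\sum_{\gamma\in\tilde\CS_3} e^{-2\pi\kappa_N(r)C_N(\Omega)\mu(\gamma)}.
\end{equation*}
We then enlarge the inner sum to all of $\mathrm{P}\Gamma_1(N)^+$, which only increases a sum of positive terms.

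The next step is counting. The number of matrices in $\mathrm{P}\Gamma_1(N)^+$ with $\mu(a,b,c,d)=M$ is at most $K' M^2$ for some constant $K'$: the first row $(a,b)$, after clearing the $\sqrt N$ factors for the $\gamma_N$ coset, has $O(M)$ choices, and the determinant condition leaves $O(M)$ choices for $(c,d)$. Hence the inner sum is bounded by
\begin{equation*}
K'\sum_{M\ge1}M^2 e^{-2\pi\kappa_N(r)C_N(\Omega)M},
\end{equation*}
which is finite and at most of order $e^{-2\pi\kappa_N(r)C_N}\big(1-e^{-2\pi\kappa_N(r)C_N}\big)^{-3}$. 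Because $\kappa_N(r)$ grows linearly in $r$, the outer sum becomes $\sum_r r\,e^{-2\pi\kappa_0 r C_N(\Omega)}$ times a uniformly bounded factor, and this converges. The finitely many small $r$ (namely $r=2$, plus $r=3$ if needed) are handled separately, since each of them gives a finite quantity.

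For uniformity on a compact $\CO\subset\CR_N$, we use that $C_N(\Omega)$ is continuous and positive on $\CR_N$, so $C_N\ge c_\CO>0$ on $\CO$. Replacing $C_N(\Omega)$ by $c_\CO$ in the dominating series gives an $\Omega$-independent majorant. The Weierstrass M-test then yields absolute and uniform convergence, and each term is holomorphic because the Heaviside factors are locally constant in the interior of $\CR_N$.

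The main obstacle is making sure the constant $L$ in Lemma~\ref{lemma:bound_B} is uniform in $r$ and in $\gamma$. In the decompositions \eqref{eq:B_exp_N2}--\eqref{eq:B_exp_N3}, each coefficient such as $ac$, $bd$, $(a-b)(c-d)$ multiplies a negative bracket, and we need at least one of these coefficients to be bounded below by a fixed multiple of $\mu$, not $\mu^2$ or $O(1)$. We get this by using the same-sign property of the pairs noted in the proof of that lemma together with $|ac|\ge |a|\cdot(2/r)|a|$ and $|c|\le (rN/2)|a|$. The resulting $r$-dependence in $L$ is then absorbed into the linear growth of the bracket. If this step turns out to lose a factor of $r$, I would instead keep the sharper bound $B\le -\kappa_N(r)C_N\,|ac|/N$ and count matrices by $|a|$ together with the ratio window $2/r\le c/a<rN/2$, which contains only $O(r|a|)$ integers $c$. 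This still leaves a convergent series in $(r,|a|)$.
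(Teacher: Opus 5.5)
Your main line matches the paper's proof: pass to the representatives $\tilde\CS_3$ via Lemma~\ref{lemma:coset_reps}, bound each summand by $e^{2\pi B}$ using Lemma~\ref{lemma:bound_B}, count matrices with $\mu=M$ polynomially, then sum over $M$ and over $r$ using the linear growth of the bracket in $r$ and $\inf_{\CO}C_N>0$. It is correct provided $L$ is independent of $r$ and of the matrix, which is how the paper uses that lemma. Your $O(M^2)$ count is right, but the justification is not: the first row alone has $O(M^2)$ choices. Argue as the paper does instead: fix which entry equals $\pm M$, choose two further entries in $[-M,M]$, and let the determinant fix the fourth.

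The step you call the main obstacle closes directly, but not by the route you sketch. Consider each decomposition \eqref{eq:B_exp_N2}--\eqref{eq:B_exp_N3}. Every product ($ac$, $bd$, $(a-b)(c-d)$, $(a-2b)(c-2d)$, \dots) is positive, because the corresponding ratio lies in $S_3\subset(0,\infty)$. Every bracket is bounded above by the common number $\frac{2}{r}-\frac{rN}{2}+N(1-\dots)=:-\delta_N(r)<0$ obtained in the proof of Lemma~\ref{lemma:bound_B}. The walls are positive on the relevant piece of $\CR_N$. The weighted sum of products times walls is exactly $z_2'$; these are the identities behind \eqref{eq:z2'_bound}. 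Hence $B\le-\frac{\delta_N(r)}{N}z_2'\le-\frac{K\delta_N(r)}{N}C_N(\Omega)\mu(a,b,c,d)$, with $K$ independent of $r$ and of the matrix.

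Your proposed fix does not give this. First, $|ac|\ge(2/r)a^2$ and $|c|\le(rN/2)|a|$ say nothing about $b,d$, which is exactly where $\mu$ can be large. Second, they introduce negative powers of $r$ that the linear growth of the bracket does not absorb. For instance, they only yield $|ac|\ge\frac{4}{r^2N}\max(|a|,|c|)$, and $\sum_r r\sum_M M^2e^{-\lambda M/r}$ diverges.

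The fallback fails outright. Fix $(a,c)$ with $c/a$ in the window. The matrices with $(b,d)=(b_0+ka,\,d_0+kc)$ and $|k|$ large are distinct elements of $\tilde\CS_3$. This is because every vertex ratio tends to the interior point $c/a$ of $S_3$: $d/b=c/a+1/(ab)$, $(c-jd)/(a-jb)=c/a-j/(a(a-jb))$, and likewise $(2c-3d)/(2a-3b)$. A majorant depending only on $(a,c)$ is therefore an infinite sum of equal terms. The decay in $k$ comes precisely from the $bd$-type terms of $B$ that the fallback discards.
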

\begin{proof}
Using Lemma \ref{lemma:coset_reps} and Lemma \ref{lemma:bound_B}, we have 
\begin{eqsp}
\left|\CF_4^{>3}\right|&\leq f_{-1}g_{-1} \sum_{r=2}^{\infty} r 
\sum_{\big{(}\begin{smallmatrix} a & b\cr c & d\end{smallmatrix}\big{)}\in G^N_r\backslash \CS_{3}}
e^{2\pi B}   \\
&\leq f_{-1}g_{-1} \sum_{r=2}^{\infty} r 
\sum_{\big{(}\begin{smallmatrix} a & b\cr c & d\end{smallmatrix}\big{)}\in \tilde{\CS}_{3}} 
\exp\left[2\pi \frac{L}{2N}\left(\frac{2}{r}-\frac{rN}{2}+N\left(1-\frac{2}{3}\delta_{N,2}\delta_{r,2}-\frac{1}{2}\delta_{N,3}\delta_{r,2}\right)\right)\right.\\&\hspace{10cm}C_N(\Omega)\mu(a,b,c,d)\bigg{]}\, .
\end{eqsp}
Following the same arguments as in \cite{Bhand:2025ghn} to bound the number of matrices with maximum of the absolute values of its entries equal to $M$, we get,
\begin{eqsp}
\left|\CF_4^{>3}\right|&\leq f_{-1}g_{-1} \sum_{r=2}^{\infty} r 
\sum_{M=1}^\infty 4(2M+1)^2 \, 
\exp\left[2\pi \frac{LM}{2N}\left(\frac{2}{r}-\frac{rN}{2}\right.\right.\\&\left.\left.\hspace{6cm}+N\left(1-\frac{2}{3}\delta_{N,2}\delta_{r,2}-\frac{1}{2}\delta_{N,3}\delta_{r,2}\right)\right)C_N(\Omega)\right]\\& <\infty\, .
\end{eqsp}
The absolute and uniform convergence on compact subsets of $\CR_N$ follows from the continuity of $C_N(\Omega)$.
\end{proof}
This proves the convergence of the sum appearing in \refb{eguessfin} for $N=2,3$.

\section{Properties of $\wt F_k(\Omega)$} \label{ssection5}

In this section we shall prove the results described in the Introduction. Since the analysis is similar to \cite{Bhand:2025ghn}, we will only highlight the differences from \cite{Bhand:2025ghn}. 
\par We start by considering the subtraction function:
\begin{eqsp} \label{eSdef}
S_k(\Omega) &:=\varepsilon_N \sum_{\big{(}\begin{smallmatrix} a & b\cr c & d\end{smallmatrix}\big{)}\in 
\mr{P}\Gamma_1(N)}
\left(e^{\pi i \{ac\tau + bd\sigma + (ad+bc)z\}} - e^{-\pi i \{ac\tau + bd\sigma + (ad+bc)z\}}\right)^{-2}  \\ & 
\hskip 1in 
\times\ f_+(a^2\tau +b^2\sigma +2abz) \ g_+(c^2\tau+d^2\sigma+2cd z) \\ 
&+\varepsilon_N 
\sum_{\big{(}\begin{smallmatrix} a & b\cr c & d\end{smallmatrix}\big{)}\in \mr{P}\Gamma_1(N)}
\sum_{r>0}  g_{-1} r  H(ac\tau_2 + bd\sigma_2 + (ad+bc)z_2) \\ &\hskip 1in \times
H\left(-ac\tau_2 - bd\sigma_2 - (ad+bc)z_2 + rNa^2\tau_2 + rNb^2\sigma_2+2rNab z_2
\right) \,
\\&\hskip 1in  \times 
e^{2\pi i \{-(c^2\tau+d^2\sigma+2cdz)/N+r( ac\tau+bd\sigma+(ad+bc)z )\}}f_+(a^2\tau+b^2\sigma+2abz) 
\\ &+\varepsilon_N 
\sum_{\big{(}\begin{smallmatrix} a & b\cr c & d\end{smallmatrix}\big{)}\in \mr{P}\Gamma_1(N)}
\sum_{r>0}  f_{-1} r  H(ac\tau_2 + bd\sigma_2 + (ad+bc)z_2) \\ &\hskip 1in \times\
H\left(-ac\tau_2 - bd\sigma_2 - (ad+bc)z_2 - rc^2\tau_2 + rd^2\sigma_2+2rcd z_2
\right) \,
\\&\hskip 1in  \times 
e^{2\pi i \{-(a^2\tau+b^2\sigma+2abz)+r( ac\tau+bd\sigma+(ad+bc)z )\}}g_+(c^2\tau+d^2\sigma+2cdz) \\ 
&+  f_{-1}g_{-1} \sum_{r>0} r\, 
\sum_{\big{(}\begin{smallmatrix} a & b\cr c & d\end{smallmatrix}\big{)}\in G^N_r\backslash \mr{P}\Gamma_1(N)^+}\hskip .1in 
 \bigg\{\prod_{n=-\infty}^\infty 
H(a_nc_n\tau_2 + b_nd_n\sigma_2 + (a_nd_n+b_nc_n)z_2)
\bigg\}
\\ & \hskip 1in \times \
e^{2\pi i \{(-a^2-c^2/N+r ac)\tau+(-b^2-d^2/N+r bd)\sigma+
(-2ab-2cd/N + r(ad+bc)) z\}}  \, .
\end{eqsp}
Let us discuss the proof of the
meromorphicity of $S_k(\Omega)$ on $\IH_2$ with double poles at 
\be\label{epoleint_thm}
-m_1\tau + n_1\sigma + m_2 + j\, z = 0, \qquad
m_1/N,n_1,m_2,j\in \mathbb{Z}, \qquad m_1n_1  +
{j^2\over 4} = {1\over 4}\,~ .
\ee
The same argument as in \cite[Lemma 5.1]{Bhand:2025ghn} shows that the hypersurface \eqref{epoleint_thm} is related to the hypersurface $z=0$ by the $\Gamma_1(N)$-action $\Omega\to\gamma\Omega\gamma^t$ for some $\gamma\in\Gamma_1(N)$ and a translation $\Omega\to\Omega+\begin{pmatrix}
    h_1&h_2\\h_2&Nh_3
\end{pmatrix}, h_i\in\IZ$. Thus in view of \eqref{esl3zoffi}, as in \cite{Bhand:2025ghn}, it suffices to prove that $S_k(\Omega)$ has a 
double pole at $z=0$ and then use the $\Gamma_1(N)$-invariance and translation 
invariance of $S_k(\Omega)$. As we approach the $z_2=0$ line from the $z_2<0$ or the $z_2>0$ side, the terms corresponding to 
$\begin{pmatrix}
    1&0\\0&1
\end{pmatrix}$ in the first sum, $\begin{pmatrix}
    1&0\\rN&1
\end{pmatrix}$,
$\begin{pmatrix}
    1&0\\0&1
\end{pmatrix}$ in the second and third sum and $\begin{pmatrix}
    0&-1/\sqrt{N}\\\sqrt{N}&0
\end{pmatrix}$, 
$\begin{pmatrix}
    1&0\\0&1
\end{pmatrix}$ in the fourth sum in \eqref{eSdef}
gives  
\begin{eqsp}\label{eq:S_pole_def}
    S_{k,\mr{pole}}(\Omega)&:=\left(e^{i\pi z}-e^{-i\pi z}\right)^{-2}f_+(\tau)g_+(\sigma) 
    \\&+g_{-1}e^{-2\pi i\sigma/N}f_+(\tau)\sum_{r>0}r 
    \left\{e^{-2\pi irz} H(-z_2)+ e^{2\pi irz} H(z_2)\right\} \\ &
    +f_{-1}e^{-2\pi i\tau}g_+(\sigma)\sum_{r>0}r 
    \left\{e^{-2\pi irz} H(-z_2)+ e^{2\pi irz} H(z_2)\right\}
    \\ &
    + f_{-1}g_{-1} \, e^{-2\pi i\tau} \, e^{-2\pi i\sigma/N} \, 
    \sum_{r>0}r \, \left\{e^{-2\pi irz} H(-z_2)+ e^{2\pi irz} H(z_2)\right\}
    \\&=\frac{(e^{\pi i z}- e^{-\pi i z})^{-2}}{ f^{(k)}(\tau)g^{(k)}(\sigma)} \, .
\end{eqsp}
Note that $S_{k,\mr{pole}}(\Omega)$ has a double pole at $z=0$ that exactly cancels
the double pole of $1/\Phi_{k}$ at $z=0$ given in \refb{epolestructure}. 
Then we can write\footnote{Despite the use of subscript ``hol'', $\CF_{i,\mr{hol}}$ continue to have poles on the subspaces \eqref{epoleint_thm} except at $z=0$.}
\be \label{espoleresults}
S_k(\Omega) = S_{k,\rm pole}(\Omega) + \CF_{1,\rm hol}(\Omega) +  \CF_{2,\rm hol}(\Omega) +  \CF_{3,\rm hol}(\Omega)+\CF_{4,\rm hol}(\Omega)\, ,
\ee
where
\begin{eqsp}\label{eq:S_hol}
\CF_{1,\rm hol}(\Omega)&:=    \varepsilon_N \sum_{\big{(}\begin{smallmatrix} 1 & 0\cr 0 & 1\end{smallmatrix}\big{)}\neq \big{(}\begin{smallmatrix} a & b\cr c & d\end{smallmatrix}\big{)}\in \mr{P}\Gamma_1(N)}\hskip .1in 
\left(e^{\pi i \{ac\tau + bd\sigma + (ad+bc)z\}} - e^{-\pi i \{ac\tau + bd\sigma + (ad+bc)z\}}\right)^{-2}  \\ & 
\hskip 1in 
\times\ f_+(a^2\tau +b^2\sigma +2abz) \ g_+(c^2\tau+d^2\sigma+2cd z)~,  \\ 
\CF_{2,\rm hol}(\Omega)&:=  \varepsilon_Ng_{-1} \sum_{r>0} r 
\sum_{\big{(}\begin{smallmatrix}
    1&0\\rN&1
\end{smallmatrix}\big{)},
\big{(}\begin{smallmatrix}
    1&0\\0&1
\end{smallmatrix}\big{)}\neq \big{(}\begin{smallmatrix} a & b\cr c & d\end{smallmatrix}\big{)}\in \mr{P}\Gamma_1(N)} %\hskip .1in 
 H(ac\tau_2 + bd\sigma_2 + (ad+bc)z_2)  \\ & \hskip 1in \times\
H\left(-ac\tau_2 - bd\sigma_2 - (ad+bc)z_2 + ra^2\tau_2 + rb^2\sigma_2+2rab z_2
\right) \,
\\ & \hskip 1in \times 
e^{2\pi i \{(-c^2+d^2\sigma+2cdz)/N+
r(ac\tau+bd\sigma+ r(ad+bc) )z\}}f_+(a^2\tau+b^2\sigma+2abz) ~,
\\
\CF_{3,\rm hol}(\Omega)&:=\varepsilon_N 
f_{-1} \sum_{r>0} r 
\sum_{\big{(}\begin{smallmatrix}
    1&0\\rN&1
\end{smallmatrix}\big{)},
\big{(}\begin{smallmatrix}
    1&0\\0&1
\end{smallmatrix}\big{)}\neq \big{(}\begin{smallmatrix} a & b\cr c & d\end{smallmatrix}\big{)}\in \mr{P}\Gamma_1(N)} 
\sum_{r>0}  f_{-1} r  H(ac\tau_2 + bd\sigma_2 + (ad+bc)z_2) \\ &\hskip 1in \times\
H\left(-ac\tau_2 - bd\sigma_2 - (ad+bc)z_2 - rc^2\tau_2 + rd^2\sigma_2+2rcd z_2
\right) \,
\\&\hskip 1in  \times 
e^{2\pi i \{-(a^2\tau+b^2\sigma+2abz)+r( ac\tau+bd\sigma+(ad+bc)z )\}}g_+(c^2\tau+d^2\sigma+2cdz) 
 \\ 
\CF_{4,\rm hol}(\Omega)&:=  
f_{-1}g_{-1} \sum_{r>0} r\, 
\sum_{\gamma_N,\mathds{1}_2\neq \big{(}\begin{smallmatrix} a & b\cr c & d\end{smallmatrix}\big{)}\in G^N_r\backslash \mr{P}\Gamma_1(N)^+}
 \bigg\{\prod_{n=-\infty}^\infty 
H(a_nc_n\tau_2 + b_nd_n\sigma_2 + (a_nd_n+b_nc_n)z_2)
\bigg\}
\\ & \hskip 1in \times \
e^{2\pi i \{(-a^2-c^2/N+r ac)\tau+(-b^2-d^2/N+r bd)\sigma+
(-2ab-2cd/N + r(ad+bc)) z\}} \, .
\end{eqsp}
Following the arguments in \cite{Bhand:2025ghn} almost verbatim, one can prove the convergence of $\CF_{1,\mr{hol}},\CF_{2,\mr{hol}}$ and $\CF_{3,\mr{hol}}$ on the hypersurface $z=0$. In the proof of the convergence of $\CF_{4,\mr{hol}}$, we need to analyze the bounds $C_{Ni}$ defined in \eqref{eq:CNi_explicit}. For $z=0$, we see that 
\begin{eqsp}
    C_{21}=C_{24}=C_{31}=C_{32}=0~.
\end{eqsp}
Thus, we need to analyze the exponents of the terms corresponding to $i=1,4$ for $N=2$ and $i=1,2$ for $N=3$ in \eqref{e497yx} more carefully. Using \eqref{eq:tsz_primes_def}, the exponent in the relevant terms  of \eqref{e487x} for $z_2=0$ can be written as
\begin{eqsp}
 (N,i)&=(2,1):\quad   2\pi \left(\tau_2+\frac{\sigma_2}{2}-n(r-n)\sigma_2\right)~,
 \\
 (N,i)&=(2,4):\quad 2\pi\left(\frac{\sigma}{2}+(2(n+1)(n+1-r)+1)\right)~,
 \\
 (N,i)&=(3,1):\quad 2\pi\left(\frac{\sigma}{2}+(3(n+1)(n+1-r)+1)\right)~,
 \\
 (N,i)&=(3,2):\quad 2\pi \left(\tau_2+\frac{\sigma_2}{3}-n(r-n)\sigma_2\right)~.
\end{eqsp}
Thus, only the $n=0,r-1$ term causes a problem in the convergence of this sum.  From \eqref{eq:CSr_def}, we see that these problematic terms correspond to the matrices 
\begin{eqsp}
 \begin{pmatrix}
     0&-\frac{1}{\sqrt{N}}\\\sqrt
     N&0
 \end{pmatrix}~,\quad \begin{pmatrix}
     1&0\\N(r-1)&1
 \end{pmatrix}\begin{pmatrix}
     1&0\\N&1
 \end{pmatrix}=\begin{pmatrix}
     1&0\\rN&1
 \end{pmatrix}=g_{r,N}\begin{pmatrix}
     0&\frac{1}{\sqrt{N}}\\-\sqrt
     N&0
 \end{pmatrix}~.   
\end{eqsp}
Since these terms have been removed in the definition of $\CF_{4,\mr{hol}}$, we conclude that $\CF_{4,\mr{hol}}$ converges in the at $z= 0$.
\par At this point, we notice that we could have proven the convergence of the generating function in the $\CL_N$-chamber by following the same arguments as in Section \ref{stildeFconverge}. Indeed, we can make the transformation 
\begin{eqsp}
    \Omega\to\gamma_N\Omega\gamma_N^{t}~,\quad \gamma_N=\begin{pmatrix}
        0&-\frac{1}{\sqrt{N}}\\\sqrt{N}&0
    \end{pmatrix}~,
\end{eqsp}
which maps $\CR_N$-chamber to $\CL_N$-chamber leaving the common boundary $z_2=0$ fixed, to all equations in Section \ref{stildeFconverge}. Then proceeding as above, we can prove the convergence of $\CF_{i,\mr{hol}},i=1,2,3,4$ on $\CL_N\cup\{z=0\}$. Moreover, the functions definitely agree at $z_2=0$. This means that $S_k-S_{k,\mr{pole}}$ is a holomorphic function on $\CR_N\cup\CL_N\cup\{z=0\}$. The $\Gamma_1(N)$-invariance then implies that $S_k(\Omega)$ is meromorphic on $\IH_2$ with double poles on the hypersurface \eqref{epoleint_thm}.
\par This analysis simplifies immensely for the case when $\Phi_k(\Omega)$ and hence also $S_k(\Omega)$ is formally invariant under $\Omega\to\gamma_N\Omega\gamma_N^t$.
By simply applying the transformation $\gamma_N$ to the $\CR_N$-chamber, we get the convergence in the $\CL_N$-chamber and the equality at $z_2=0$ follows from the fact that $\gamma_N$ leaves $z_2=0$ line fixed. Thus, $S_k(\Omega)-S_{k,\mr{pole}}$ defines a holomorphic function on $\CR_N\cup\CL_N\cup\{z=0\}$. 
\par The same arguments as in \cite{Bhand:2025ghn} proves that 
\begin{eqsp}
    \wt F_{k}(\Omega)=\frac{1}{\Phi_k}-S_k(\Omega)~,
\end{eqsp}
does not have any singularity in the domain $\mr{det}\,\mr{Im}\,\Omega>\frac{1}{4}$. In fact the generating function $\wt F_k$ admits a meromorphic continuation to all of 
$\IH_2$ with double poles at 
\begin{eqnarray}
&& n_2 (\tau\sigma - z^2) - m_1\tau + n_1\sigma + m_2 + j\, z = 0, \nonumber \\
&&
m_1/N,n_1,m_2,n_2,j\in \mathbb{Z}, \qquad n_2\ge 1, \qquad m_1n_1 + m_2 n_2 +
{j^2\over 4} = {1\over 4}\, .
\end{eqnarray}
To prove \eqref{etildeFFequality}, we first notice that since $\wt F_k(\Omega)$ is does not have any poles in the domain $\mr{det}\,\mr{Im}\,\Omega>\frac{1}{4}$, for any contour  $\wt\CC\subset\left\{\Omega\in\IH_2:\mr{det}\,\mr{Im}\,\Omega>\frac{1}{4}\right\}$ of the form 
\begin{eqsp}
    0\leq \tau_1,z_2<1~,\quad 0\leq \sigma_1<N~,
\end{eqsp}
we get from \eqref{edeftildeF}
\begin{eqsp}
    \wt d^*(m,n,\ell)=\frac{(-1)^{\ell+1}}{N}\int_{\wt\CC}d\tau d\sigma dz\,e^{-2\pi i(m\tau+n\sigma/N+\ell z)}\wt F_k(\Omega)~.
\end{eqsp}
Moreover, we are allowed to deform the contour $\wt\CC$ to the contour $\CC_{m,n,\ell}$ defined in \eqref{eq 3.3} without changing the answer. Arguing as in 
\cite{Bhand:2025ghn}, we conclude that 
\begin{eqsp}
    \int_{\CC_{m,n,\ell}}d\tau d\sigma dz\,e^{-2\pi i(m\tau+n\sigma/N+\ell z)}S_k(\Omega)=0~.
\end{eqsp}
This implies that 
\begin{eqsp}
\wt d^*(m,n,\ell)&=\frac{(-1)^{\ell+1}}{N}\int_{\wt\CC}d\tau d\sigma dz\,e^{-2\pi i(m\tau+n\sigma/N+\ell z)}\wt F_k(\Omega)\\&=\frac{(-1)^{\ell+1}}{N}\int_{\CC_{m,n,\ell}}d\tau d\sigma dz\,e^{-2\pi i(m\tau+n\sigma/N+\ell z)}\frac{1}{\Phi_k(\Omega)}\\&=d^*(m,n,\ell)~.
\end{eqsp}
\section{Conclusion and future directions}\label{sec:conc}
In this paper, we have generalized the results of \cite{Bhand:2025ghn} and constructed the generating function of single-centered black hole degeneracy for general CHL models. We also proved the convergence of the generating function for $N=2,3$. The results of this paper leave ample opportunity for future work. We list some possible future directions. 
\subsection{Generalizing the proof of convergence to $N>3$}
Our proof of convergence of the generating function is complete for $N=2,3$. We present some general steps in the proof of convergence for general $N$ but some of the steps in the proof fail for $N>3$. The main technical bottleneck is the infinity of walls of the fundamental chamber $\CR_N$ for $N\geq 4$. The first place this will cause trouble is in the bound on $\tau_2',\sigma_2',z_2'$ derived in \eqref{eq:tau2,sig2_bound} and \eqref{eq:z2'_bound}. One might be able to overcome this issue by writing infinitely many expression for $\tau_2',\sigma_2',z_2'$ as linear combinations of walls of the $\CR_N$ chamber as we did for $N=2,3$. But this will require a detailed knowledge of all the walls of the $\CR_N$-chamber. 
\par As mentioned in  Remark \ref{rem:lemma_nec}, Lemma \ref{lemma:Heaviside_matr_const} is an if and only if statement only for $N\leq 6$. For $N\geq 7$, Lemma \ref{lemma:Heaviside_matr_const} is only a necessary condition. We will need a more detailed knowledge of the walls of $\CR_N$ to get an equivalent condition for the nonvanishing of the Heaviside functions. This affects the proof of the convergence of $\CF_4$. Thus, some novel methods are needed to establish the convergence of $\wt F_k$ in full generality. 
\par Finally, we prove the convergence of the sum in the $\CR_N$-chamber and then use the formal $\Gamma_1(N)$-invariance to analytically continue the generating function to all of $\IH_2$. This requires that $\CR_N$ be a fundamental chamber, in the sense that any other chamber can be reached from $\CR_N$ by a $\Gamma_1(N)$-transformation. This is proved for $N\leq 3$ in \cite{Sen:2007vb}. For $N\geq 4$, this remains to be proven.  
\subsection{Transformation properties of the generating function}
It was noted in \cite{Bhand:2025ghn} that the generating function $\wt F_k$ might exhibit transformation properties of a mock Siegel modular form. The theory of mock Siegel modular forms is not well developed in the mathematics literature, except for a few relevant papers \cite{bringmann2012kohnen,bringmann2016almost,raum}. In analogy with mock modular and mock Jacobi forms, we expect that a weight $k$ mock Siegel modular form for $\mathrm{Sp}(2,\IZ)$ must be a holomorphic function $\Psi(\Omega)$ on $\IH_2$ satisfying 
\begin{eqsp}
    \Psi(\gamma\Omega\gamma^t)=\Psi(\Omega)~,\quad \Psi(\Omega+A)=\Psi(\Omega)~,
\end{eqsp}
for $\gamma\in\mr{SL}(2,\IZ)$ and $A$ a symmetric $2\times 2$ integer matrix, such that there exists a nonholomorphic function $\Psi^*$ such that the completion $\wh{\Psi}=\Psi+\Psi^*$ transforms as a Siegel modular form of weight $k$ under $\mathrm{Sp}(2,\IZ)$. Let us discuss an approach of proving such a transformation property for the simplest case of $\CM=K3,N=1$. The generating function $\wt F$ given in \cite{Bhand:2025ghn} is 
\begin{eqnarray}
\label{eguessfinintroN=1}
\wt F(\Omega) &=& {1\over \Phi_{10}(\Omega)}
- {1\over 2} \sum_{\big{(}\begin{smallmatrix} a & b\cr c & d\end{smallmatrix}\big{)}\in 
\mr{PSL}(2,\mathbb{Z})}
\left(e^{\pi i \{ac\tau + bd\sigma + (ad+bc)z\}} - e^{-\pi i \{ac\tau + bd\sigma + (ad+bc)z\}}\right)^{-2} \nonumber \\ && 
\hskip 1in 
\times\ f_+(a^2\tau +b^2\sigma +2abz) \ f_+(c^2\tau+d^2\sigma+2cd z) \nonumber \\ 
&-&  f_{-1} \sum_{r>0} r 
\sum_{\big{(}\begin{smallmatrix} a & b\cr c & d\end{smallmatrix}\big{)}\in  \mr{PSL}(2,\mathbb{Z})}
 H(ac\tau_2 + bd\sigma_2 + (ad+bc)z_2) \nonumber \\ && \hskip 1in \times\
H\left(-ac\tau_2 - bd\sigma_2 - (ad+bc)z_2 + ra^2\tau_2 + rb^2\sigma_2+2rab z_2
\right) \,
\nonumber \\ && \hskip 1in \times 
e^{2\pi i \{-(c^2\tau+d^2\sigma+2cdz)+r( ac\tau+bd\sigma+(ad+bc)z )\}}f_+(a^2\tau+b^2\sigma+2abz) 
\nonumber \\ 
&-&  f_{-1}^2 \sum_{r>0} r 
\sum_{\big{(}\begin{smallmatrix} a & b\cr c & d\end{smallmatrix}\big{)}\in G_r\backslash \mr{PSL}(2,\mathbb{Z})}\hskip .1in 
 \bigg\{\prod_{n=-\infty}^\infty 
H(a_nc_n\tau_2 + b_nd_n\sigma_2 + (a_nd_n+b_nc_n)z_2)
\bigg\}
\nonumber \\ && \hskip 1in \times \
e^{2\pi i \{(-a^2-c^2+r ac)\tau+(-b^2-d^2+r bd)\sigma+
(-2ab-2cd + r(ad+bc)) z\}}  \, ,
\end{eqnarray}
where 
\begin{eqsp}
    f(\tau)=\frac{1}{\eta(\tau)^{24}}~,
\end{eqsp}
where $\eta$ is the Dedekind eta function.
Let us focus on the first term:
\begin{eqsp}
\sum_{\big{(}\begin{smallmatrix} a & b\cr c & d\end{smallmatrix}\big{)}\in 
\mr{PSL}(2,\mathbb{Z})}
&\left(e^{\pi i \{ac\tau + bd\sigma + (ad+bc)z\}} - e^{-\pi i \{ac\tau + bd\sigma + (ad+bc)z\}}\right)^{-2} \\ & 
\hskip 1in 
\times\ f_+(a^2\tau +b^2\sigma +2abz) \ f_+(c^2\tau+d^2\sigma+2cd z)   \end{eqsp}
Let us write 
\begin{eqsp}
    P(\Omega):=\left(e^{i\pi z}-e^{-i\pi z}\right)^{-2}f_+(\tau)f_+(\sigma)~.
\end{eqsp}
Then we can write this term using the slash operator 
\begin{eqsp}
    \sum_{\gamma\in \mr{PSL}(2,\IZ)}(P\underset{-10}{|}g_4(\gamma))(\Omega)~,
\end{eqsp}
where the weight-$k$ slash operator on functions on $\IH_2$ is defined as
\begin{eqsp}
    (P\underset{k}{|}M)(\Omega):=\mr{det}\,(C\Omega+D)^{-k}P((A\Omega+B)(C\Omega+D)^{-1})~,\quad M=\begin{pmatrix}
        A&B\\C&D
    \end{pmatrix}\in \mr{Sp}(2,\IZ)~,
\end{eqsp}
and
\begin{eqsp}
    g_4(\gamma)=\begin{pmatrix}
        \gamma&0\\0&(\gamma^t)^{-1}
    \end{pmatrix}\in\mr{Sp}(2,\IZ)~,\quad \gamma\in\mr{SL}(2,\IZ)~.
\end{eqsp}
Note that $g_4:\mr{SL}(2,\IZ)\longrightarrow \mr{Sp}(2,\IZ)$ is a group homomorphism, i.e., $g_4(\gamma\gamma')=g_4(\gamma)g_4(\gamma')$. 
Let $\Gamma_\infty$ be the stabilizer subgroup of the cusp $i\infty$ of $\mr{SL}(2,\IZ)$:
\begin{eqsp}
    \Gamma_\infty:=\left\{\begin{pmatrix}
        1&r\\0&1
\end{pmatrix}:r\in\IZ\right\}~.
\end{eqsp}
Then we can write 
\begin{eqsp}
\sum_{\gamma\in \mr{PSL}(2,\IZ)}(P\underset{-10}{|}\gamma)(\Omega)=\sum_{\gamma\in\Gamma_\infty\backslash\mr{PSL}(2,\IZ)}\sum_{\Tilde{\gamma}\in\Gamma_\infty}((P\underset{-10}{|}g_4(\Tilde{\gamma}))\underset{-10}{|}g_4(\gamma))(\Omega)~.    
\end{eqsp}
Recall that the index $m$ averaging operator  is given by \cite{Dabholkar:2012nd}
\begin{eqsp}
    \mr{Av}^{(m)}_{\zeta,q}[h(\zeta)]:=\sum_{r \in \mathbb{Z}} q^{m r^2} \zeta^{2 m r} h\left(q^r \zeta\right)~,\quad \zeta:=e^{2\pi iz},q:=e^{2\pi i\tau}~.
\end{eqsp}
A simple calculation shows that 
\begin{eqsp}
  \mr{Av}^{(n)}_{\zeta,p}[h(\zeta)]q^n=\sum_{\tilde{\gamma}\in\Gamma_\infty}(h(\zeta)q^n)\underset{k}{|}g_4(\tilde{\gamma})~,\quad p:=e^{2\pi i\sigma}~.  
\end{eqsp}
Next, recall that for $s=(\alpha,\beta)\in\IQ^2$ the order 2 Appell-Lerch sum is given by \cite{Dabholkar:2012nd} 
\begin{eqsp}
   \mathcal{B}_m^s(\tau, z)=\mathbf{e}\left(-m \alpha z_s(\tau)\right) \mathrm{Av}_{\zeta,q}^{(m)}\left[\mathcal{R}_{-2 m \alpha}^{(2)}\left(\zeta / \zeta_s(\tau)\right)\right]~,
\end{eqsp}
where $z_s(\tau)=\alpha\tau+\beta$ and 
\begin{eqsp}
\mathcal{R}_c^{(2)}(\zeta)&=\sum_{\ell \in \mathbb{Z}} \frac{|\ell-c|+\operatorname{sgn}(\operatorname{Im}(z))(\ell-c)}{2} \zeta^{\ell}
\\&=\zeta^{\lfloor c\rfloor+1}\left(\frac{1}{(\zeta-1)^2}+\frac{c-\lfloor c\rfloor}{\zeta-1}\right)~.
\end{eqsp}
We have 
\begin{eqsp}
  \sum_{\tilde{\gamma}\in\Gamma_\infty}(P|g_4(\Tilde{\gamma})(\Omega)&=\sum_{n\geq 0}f_n\sum_{\tilde{\gamma}\in\Gamma_\infty}\left.\left[\frac{\zeta}{(1-\zeta)^2}f_+(\sigma)q^n\right]\right|g_4(\Tilde{\gamma}) 
  \\&=f_+(\sigma)\sum_{n\geq 0}f_n\CB^0_n(\sigma,z)q^n~,
\end{eqsp}
The completion $\wh{\CB}^s_n(\sigma,z)$ of $\CB^s_n(\sigma,z)$ to a nonholomorphic function which transforms as a Jacobi form of weight $-10$ and index $n$ was found in \cite{Dabholkar:2012nd}. Thus a candidate for the completion of the first term is 
\begin{eqsp}
    \sum_{n\geq 0}f_n\sum_{\gamma\in\Gamma_\infty\backslash\mr{PSL}(2,\IZ)}\left[f_+(\sigma)\wh{\CB}^0_n(\sigma,z)q^n\right]\underset{-10}{\Big|}g_4(\gamma)~.
\end{eqsp}
Note that the combination $f_+(\sigma)\wh{\CB}^0_n(\sigma,z)$ does not transform as a Jacobi form because $f_+(\sigma)$ does not transform as a modular form and hence we cannot invoke the (generalized version, which also needs to be proven, of the) converse theorem of \cite{raum2013formal}. But the hope is that the completions of the other terms in \eqref{eguessfinintroN=1} combine into a nonhomolomrphic function which transforms as a Siegel modular form. This remains an interesting open problem. 
\\\\
\noindent{\bf Acknowledgments:}
I would like to thank Ajit Bhand and Ashoke Sen for collaboration at initial stages of this work. I am   especially indebted to Ashoke Sen for several helpful discussions. I would like to acknowledge the hospitality provided by Galileo Galilei Institute, Florence during the workshop on ``Pathways to Quantum Black Holes: from Effective Theories to Exact Methods'' where part of this work was done.
This work was supported by the US Department of Energy under grant 
DE-SC0010008. 
\newpage
\appendix
\section{Bound State Metamorphosis}\label{app:BSM}
In this appendix, we summarize the results of \cite{1104.1498} and discuss the identification of bound states used in the construction of the generating function in Section \ref{section3}.

The moduli space of a CHL model is parametrized by a number of scalars and the axio-dilaton moduli. We fix the value of other moduli fields and consider the space parametrized by the axio-dilaton moduli, which we denote by $\tau_\infty$. This space is the complex upper half plane $\IH$. The walls in the moduli space correspond to matrices \cite{Sen:2007vb,Sen:2007qy}
\begin{equation}
\left(\begin{array}{ll}
a & b \\
c & d
\end{array}\right) \in \mr{SL}(2, \mathbb{Z}), \quad c d \in N \mathbb{Z} .
\end{equation}
For nonvanishing $c, d$, the wall is a circle which intersects the real $\tau_{\infty}$-axis at
\begin{equation}
\frac{a}{c} \quad \text { and }\quad  \frac{b}{d}~.
\end{equation}
When $c$ or $d$ vanishes, then the wall is a straight line connecting an integer on the real $\tau_{\infty}$-axis and $i\infty$. We choose orientation of the line connecting points $A,B$ to be the direction from the starting point $A$ to the end point $B$ and the region to the left/right of these walls is with respect to this orientation.
\par These walls in the moduli space are in 1-1 correspondence with walls in the Siegel upper half space described by linear poles of $\Phi_k^{-1}$ :
\begin{equation}
-m_1 \tau_2+n_1 \sigma_2+j z_2=0, \quad m_1 \in N \mathbb{Z}, n_1, j \in \mathbb{Z}, \quad m_1 n_1+\frac{j^2}{4}=\frac{1}{4}~ .
\end{equation}
The precise map is \cite{1104.1498}
\begin{equation}
\left(\frac{b}{d}, \frac{a}{c}\right) \longleftrightarrow a b \sigma_2+c d \tau_2+(a d+b c) z_2=0 ~.
\end{equation}
We will denote the bound state of $(Q,P),(Q',P')$ by 
\begin{eqsp}
(Q,P)\oplus (Q',P')~.    
\end{eqsp} 
Let us recall the rules for the existence of bound states \cite{1104.1498}. 
Given a charge vector ($Q, P$) and a point $\tau_{\infty}^\circ$ in the $\tau_{\infty}$-plane, a bound state correspond to matrices 
\begin{eqsp}
    \begin{pmatrix}
        a&b\\c&d
    \end{pmatrix}\in \Gamma_1(N)~.
\end{eqsp}
More precisely, given $\begin{pmatrix}
        a&b\\c&d
    \end{pmatrix}\in\Gamma_1(N)$, the bound state
\begin{equation}\label{eq:bound_gen_charge}
(a(d Q-b P), c(d Q-b P)) \oplus(b(-c Q+a P), d(-c Q+a P))~,
\end{equation}
exists if one of the conditions below is satisfied:
\begin{enumerate}[(1)]
    \item $(d Q-b P) \cdot(-c Q+a P)>0$ and $\tau_{\infty}^{\circ}$ lies to the left of the wall connecting $\frac{b}{d}$ to $\frac{a}{c}$.
\item $(d Q-b P) \cdot(-c Q+a P)<0$ and $\tau_{\infty}^{\circ}$ lies to the right of the wall connecting $\frac{b}{d}$ to $\frac{a}{c}$.
\end{enumerate}
Mapping the wall corresponding to $\left(\begin{array}{ll}a & b \\ c & d\end{array}\right)$ to the wall

\begin{equation}
a b \sigma_2+c d \tau_2+(a d+b c) z_2=0,
\end{equation}
the bound state in \eqref{eq:bound_gen_charge} exist in the chamber
\begin{equation}
a b \sigma_2+c d \tau_2+(a d+b c) z_2>0~,\quad \text{when}\quad (d Q-b P) \cdot(-c Q+a P)>0~,
\end{equation}
and in the chamber
\begin{equation}
a b \sigma_2+c d \tau_2+(a d+b c) z_2<0~,\quad\text{when}\quad (d Q-b P) \cdot(-c Q+a P)<0~.
\end{equation}
Let us now discuss the identifications from bound state metamorphosis (BSM). The proposal in \cite{1104.1498} is the following: 
\begin{enumerate}[(1)]
    \item A bound state\footnote{This bound state corresponds to the matrix $\begin{pmatrix}
        \pm1&0\\0&\pm1
    \end{pmatrix}$.} $(Q,0)\oplus (0,P)$ with $P^2=-2$ should be identified with the bound state\footnote{This bound state corresponds to the matrix $\begin{pmatrix}
        1&-r\\0&1
    \end{pmatrix}$.} $(Q+rP,0)\oplus (-rP,P)$. The general rule states that for $r:=Q\cdot P>0$, $(Q,0)\oplus (0,P)$ exists to the left of the wall connecting $0$ and $i\infty$ and $(Q+rP,0)\oplus (-rP,P)$ exists to the right of the wall connecting $-r$ to $i\infty$. When $r=Q\cdot P<0$ then $(Q,0)\oplus (0,P)$ exists to the right of the wall connecting $0$ and $i\infty$ and $(Q+rP,0)\oplus (-rP,P)$ exists to the left of the wall connecting $-r$ to $i\infty$. These bound states exist only if they exist in both these chambers. 
    \item A bound state $(Q,0)\oplus (0,P)$ with $Q^2=-2/N$ must be identified with the bound state\footnote{This bound state corresponds to the matrix $\begin{pmatrix}
        1&0\\-rN&1
    \end{pmatrix}$.} $(Q,-rNQ)\oplus (0,P+rNQ)$. The chamber in which these bound states exist can be determined by the rules above depending on sign of $r$ and these bound states exist only in the intersection of these chambers. 
    \item A bound state $(Q,0)\oplus(0,P)$ with $Q^2=-2/N,P^2=-2$ must be identified with bound states $(Q+rP,0)\oplus (-rP,P),(Q,-rNQ)\oplus (0,P+rNQ)$ and they exist in the common chamber of existence of all these bound states. There are further identification of these bound states with the bound states corresponding to matrices obtained by taking alternate products of
\begin{equation}
\left(\begin{array}{cc}
1 & -r \\
0 & 1
\end{array}\right) \text { and }\left(\begin{array}{cc}
1 & 0 \\
r N & 1
\end{array}\right) \text { or }\left(\begin{array}{cc}
1 & 0 \\
-r N & 1
\end{array}\right) \text { and }\left(\begin{array}{ll}
1 & r \\
0 & 1
\end{array}\right)~.
\end{equation}
Note that for $N=1$ and $r=1$, these matrices do not give new bound states.     
\end{enumerate}
By applying $S$-duality transformations, we can now explain the identification used to construct the generating function. Again, we discuss the identifications in three cases.
\begin{enumerate}[(1)]
    \item Consider charges $\left(\bN^{\prime}, \bM^{\prime}\right)$ with $\bM^{\prime 2}=-2$, the bound state $\left(\bN^{\prime}, 0\right) \oplus\left(0, \bM^{\prime}\right)$ must be identified with the bound state $\left(\bN^{\prime}+r \bM^{\prime}, 0\right) \oplus\left(-r \bM^{\prime}, \bM^{\prime}\right)$ where $r=\bM^{\prime} \cdot \bN^{\prime}$. Applying the S-duality transformation for the $\Gamma_1(N)$ matrix

\begin{equation}
\gamma=\left(\begin{array}{ll}
d & b \\
c & a
\end{array}\right) \in \Gamma_1(N)~,
\end{equation}
we get that we must identify the bound states 
\begin{eqsp}
&\left(d \bN^{\prime}, c \bN^{\prime}\right) \oplus\left(b \bM^{\prime}, a \bM^{\prime}\right)~~\text{and}~~
\\&\left(d\left(\bN^{\prime}+r \bM^{\prime}\right), c\left(\bN^{\prime}+r \bM^{\prime}\right)\right) \oplus\left((b-d r) \bM^{\prime},(a-c r) \bM^{\prime}\right)~.    
\end{eqsp}
Let us call
\begin{equation}
\bN=\bN^{\prime}+r \bM^{\prime}, \quad \bM=\bM^{\prime}~.
\end{equation}
Then, we have
\begin{equation}
\bM^2=\bM^{\prime 2}, \quad \bN^2=\bN^{\prime 2}, \quad \bM \cdot \bN=-\bM^{\prime} \cdot \bN^{\prime}~.
\end{equation}
Moreover, BSM dictates that we must identify the bound state 
\begin{eqsp}
\left(d \bN^{\prime}, c \bN^{\prime}\right) \oplus\left(b \bM^{\prime}, a \bM^{\prime}\right)~~\text{with}~~ (d \bN, c \bN) \oplus((b-d r) \bM,(a-c r) \bM)~.   
\end{eqsp}
Equivalently, we must identify the bound states corresponding to the matrices
\begin{eqsp}
    \begin{pmatrix}
        a&b\\c&d
    \end{pmatrix}\quad \text{and}\quad \begin{pmatrix}
        a-cr&b-dr\\c&d
    \end{pmatrix}=\begin{pmatrix}
        1&-r\\0&1
    \end{pmatrix}\begin{pmatrix}
        a&b\\c&d
    \end{pmatrix}~.
\end{eqsp}
To determine the chambers in which these bound states exist, let us define
\begin{equation}
\binom{Q}{P}:=\left(\begin{array}{ll}
d & b \\
c & a
\end{array}\right)\binom{\bN^{\prime}}{\bM^{\prime}} \Rightarrow\binom{\bN^{\prime}}{\bM^{\prime}}=\left(\begin{array}{cc}
a & -b \\
-c & d
\end{array}\right)\binom{Q}{P}
\end{equation}
so that $\bM^{\prime}=d P-c Q, \bN^{\prime}=-b P+a Q$. Thus, the bound state $\left(d \bN^{\prime}, c \bN^{\prime}\right) \oplus\left(b \bM^{\prime}, a \bM^{\prime}\right)$ exists in the chamber
\begin{equation}
a c \tau_2+b d \sigma_2+(a d+b c) z_2>0~,
\end{equation}
when $\bM^{\prime} \cdot \bN^{\prime}=r>0$ and in the chamber
\begin{equation}
a c \tau_2+b d \sigma_2+(a d+b c) z_2<0~,
\end{equation}
when $\bM^{\prime} \cdot \bN^{\prime}=-r<0$. Similarly, the bound state $(d \bN, c \bN) \oplus((b-d r) \bM,(a-c r) \bM)$ exists in the chamber
\begin{equation}
a c \tau_2+b d \sigma_2+(a d+b c) z_2-c^2 r \tau_2-d^2 r \sigma_2-2 c d r z_2<0~,
\end{equation}
when $\bM \cdot \bN=-\bM^{\prime} \cdot \bN^{\prime}=-r<0$ and in the chamber
\begin{equation}
a c \tau_2+b d \sigma_2+(a d+b c) z_2-c^2 r \tau_2-d^2 r \sigma_2-2 c d r z_2>0~,
\end{equation}
when $\bM \cdot \bN=-\bM^{\prime} \cdot \bN^{\prime}=-r>0$.\par
\item For charges ($\bM^{\prime}, \bN^{\prime}$) with $\bN^{\prime 2}=-2 / N$, we need to identify the bound states $\left(\bN^{\prime}, 0\right) \oplus\left(0, \bM^{\prime}\right)$ with the bound state $\left(\bN^{\prime},-r N \bN^{\prime}\right) \oplus\left(0, \bM^{\prime}+r N \bN^{\prime}\right)$ where $r=\bM^{\prime} \cdot \bN^{\prime}$. Applying $S$-duality transformation for the
$\Gamma_1(N)$ matrix
\begin{equation}
\gamma=\left(\begin{array}{ll}
d & b \\
c & a
\end{array}\right) \in \Gamma_1(N),
\end{equation}
we get that we must identify the bound states
\begin{eqsp}
&\left(d \bN^{\prime}, c \bN^{\prime}\right) \oplus\left(b \bM^{\prime}, a \bM^{\prime}\right)~~\text{and}~~ \\&\left((d-r N b) \bN^{\prime},(c-arN) \bN^{\prime}\right) \oplus\left(d\left(\bM^{\prime}+r N \bN^{\prime}\right), c\left(\bM^{\prime}+r N \bN'\right)\right)~.
\end{eqsp}
Let us call
\begin{equation}
\bM=\bM^{\prime}+r N \bN^{\prime}~, \quad \bN=\bN^{\prime}~ .
\end{equation}
Then, we have
\begin{equation}
\bM^2=\bM^{\prime 2}, \quad \bN^2=\bN^{\prime 2}, \quad \bM \cdot \bN=-\bM^{\prime} \cdot \bN^{\prime} .
\end{equation}
Moreover, BSM dictates that we must identify bound state 
\begin{eqsp}
\left(d \bN^{\prime}, c \bN^{\prime}\right) \oplus\left(b \bM^{\prime}, a \bM^{\prime}\right)~~\text{and}~~  (d \bM, c \bM) \oplus\left((d-r N b) \bN,(c-arN) \bN\right)~.  
\end{eqsp}
Equivalently, we must identify the bound states corresponding to the matrices
\begin{eqsp}
    \begin{pmatrix}
        a&b\\c&d
    \end{pmatrix}\quad \text{and}\quad \begin{pmatrix}
        a&b\\c-arN&d-brN
    \end{pmatrix}=\begin{pmatrix}
        1&0\\-rN&1
    \end{pmatrix}\begin{pmatrix}
        a&b\\c&d
    \end{pmatrix}~.
\end{eqsp}
The bound state
$\left(d \bN^{\prime}, c \bN^{\prime}\right) \oplus\left(b \bM^{\prime}, a \bM^{\prime}\right)$ exists in the chamber
\begin{equation}
a c \tau_2+b d \sigma_2+(a d+b c) z_2>0~,
\end{equation}
when $\bM^{\prime} \cdot \bN^{\prime}=r>0$ and in the chamber
\begin{equation}
a c \tau_2+b d \sigma_2+(a d+b c) z_2<0~,
\end{equation}
when $\bM^{\prime} \cdot \bN^{\prime}=-r<0$. Similarly, the bound state $(d \bM, c \bM) \oplus\left((d-r N b) \bN,(c-a r N) \bN\right)$ exists in the chamber
\begin{equation}
a c \tau_2+b d \sigma_2+(a d+b c) z_2-a^2 r N \tau_2-b^2 r N \sigma_2-2 a b r N z_2<0~,
\end{equation}
when $\bM \cdot \bN=\bM^{\prime} \cdot \bN^{\prime}=-r<0$ and in the chamber
\begin{equation}
a c \tau_2+b d \sigma_2+(a d+b c) z_2-a^2 r N \tau_2-b^2 r N \sigma_2-2 a b r N z_2>0~,
\end{equation}
when $\bM \cdot \bN=\bM^{\prime} \cdot \bN^{\prime}=-r>0$.
\item Finally, the proposal for $\bM^{\prime 2}=-2, \bN'^2=-2 / N$, then we identify bound states $\left(\bN^{\prime}, 0\right) \oplus\left(0, \bM^{\prime}\right),\left(\bN^{\prime}+r \bM^{\prime}, 0\right) \oplus\left(-r \bM^{\prime}, \bM'\right)$ and $\left(\bN^{\prime},-r N \bN^{\prime}\right) \oplus\left(0, \bM^{\prime}+r N \bN^{\prime}\right)$. Moreover, we also identify all bound states corresponding to $\Gamma_1(N)$ matrices obtained by taking alternate products of
\begin{equation}
\left(\begin{array}{cc}
1 & -r \\
0 & 1
\end{array}\right) \text { and }\left(\begin{array}{cc}
1 & 0 \\
r N & 1
\end{array}\right) \text { or }\left(\begin{array}{cc}
1 & 0 \\
-r N & 1
\end{array}\right) \text { and }\left(\begin{array}{ll}
1 & r \\
0 & 1
\end{array}\right)~.
\end{equation}
Applying $S$-duality transformation for the
$\Gamma_1(N)$ matrix
\begin{equation}
\gamma=\left(\begin{array}{ll}
d & b \\
c & a
\end{array}\right) \in \Gamma_1(N),
\end{equation}
we get that we must identify the bound states
\begin{eqsp}
&\left(d \bN^{\prime}, c \bN^{\prime}\right) \oplus\left(b \bM^{\prime}, a \bM^{\prime}\right)~~\text{and}~~
\\
&\left(d\left(\bN^{\prime}+r \bM^{\prime}\right), c\left(\bN^{\prime}+r \bM^{\prime}\right)\right) \oplus\left((b-d r) \bM^{\prime},(a-c r) \bM^{\prime}\right)~~\text{and}
\\
&\left((d-r N b) \bN^{\prime},(c-arN) \bN^{\prime}\right) \oplus\left(d\left(\bM^{\prime}+r N \bN^{\prime}\right), c\left(\bM^{\prime}+r N \bN'\right)\right)~.
\end{eqsp}
Let us call
\begin{eqsp}
&\bN=\bN'+r\bM'~,\quad \bM=\bM'~,\\
&\wt{\bM}=\bM^{\prime}+r N \bN^{\prime}~, \quad \wt{\bN}=\bN^{\prime}~ .
\end{eqsp}
Then, we have
\begin{equation}
\bM^2=\wt{\bM}^2=\bM^{\prime 2}, \quad \bN^2=\wt{\bN}^2=\bN^{\prime 2}, \quad \bM \cdot \bN=\wt{\bM}\cdot\wt{\bN}=-\bM^{\prime} \cdot \bN^{\prime}~ .
\end{equation}
Moreover, BSM dictates that we must identify bound state 
\begin{eqsp}
&\left(d \bN^{\prime}, c \bN^{\prime}\right) \oplus\left(b \bM^{\prime}, a \bM^{\prime}\right)~~\text{and}~~
\\
&(d \bN, c \bN) \oplus\left((b-dr) \bM,(a-cr) \bM\right)~~\text{and}
\\
&(d \wt{\bM}, c \wt{\bM}) \oplus\left((d-r N b) \wt{\bN},(c-arN) \wt{\bN}\right)~.  
\end{eqsp}
Equivalently, we must identify the bound states corresponding to the matrices
\begin{eqsp}
    &\begin{pmatrix}
        a&b\\c&d
    \end{pmatrix}\quad \text{and}
    \\&\begin{pmatrix}
        a-cr&b-dr\\c&d
    \end{pmatrix}=\begin{pmatrix}
        1&-r\\0&1
    \end{pmatrix}\begin{pmatrix}
        a&b\\c&d
    \end{pmatrix}\quad \text{and}
    \\
    &\begin{pmatrix}
        a&b\\c-arN&d-brN
    \end{pmatrix}=\begin{pmatrix}
        1&0\\-rN&1
    \end{pmatrix}\begin{pmatrix}
        a&b\\c&d
    \end{pmatrix}~.
\end{eqsp}
The bound state
$\left(d \bN^{\prime}, c \bN^{\prime}\right) \oplus\left(b \bM^{\prime}, a \bM^{\prime}\right)$ exists in the chamber
\begin{equation}
a c \tau_2+b d \sigma_2+(a d+b c) z_2>0~,
\end{equation}
when $\bM^{\prime} \cdot \bN^{\prime}=r>0$ and in the chamber
\begin{equation}
a c \tau_2+b d \sigma_2+(a d+b c) z_2<0~,
\end{equation}
when $\bM^{\prime} \cdot \bN^{\prime}=-r<0$. Similarly, the bound state $(d \wt{\bM}, c \bm{\bM}) \oplus\left((d-r N b) \bm{\bN},(c-a r N) \bm{\bN}\right)$ exists in the chamber
\begin{equation}
a c \tau_2+b d \sigma_2+(a d+b c) z_2-a^2 r N \tau_2-b^2 r N \sigma_2-2 a b r N z_2<0~,
\end{equation}
when $\wt{\bM} \cdot \wt{\bN}=\bM^{\prime} \cdot \bN^{\prime}=-r<0$ and in the chamber
\begin{equation}
a c \tau_2+b d \sigma_2+(a d+b c) z_2-a^2 r N \tau_2-b^2 r N \sigma_2-2 a b r N z_2>0~,
\end{equation}
when $\wt{\bM} \cdot \wt{\bN}=\bM^{\prime} \cdot \bN^{\prime}=-r>0$. Since both these transformations change the sign of $\bM'\cdot\bN'$, this means that the term $\begin{pmatrix}
    a&b\\c&d
\end{pmatrix}$ in \eqref{ef1analysis1}, \eqref{ef1analysis2} must be identified with the terms $\begin{pmatrix}
        1&-r\\0&1
    \end{pmatrix}\begin{pmatrix}
        a&b\\c&d
    \end{pmatrix}$ and $\begin{pmatrix}
        1&0\\-rN&1
    \end{pmatrix}\begin{pmatrix}
        a&b\\c&d
    \end{pmatrix}$ in \eqref{ef1analysis3}, \eqref{ef1analysis4}. In \eqref{eq:f-1g-1_pre}, these terms correspond to matrices
    \begin{eqsp}
        \gamma_N\begin{pmatrix}
        1&-r\\0&1
    \end{pmatrix}\begin{pmatrix}
        a&b\\c&d
    \end{pmatrix}\quad \text{and}\quad \gamma_N\begin{pmatrix}
        1&0\\-rN&1
    \end{pmatrix}\begin{pmatrix}
        a&b\\c&d
    \end{pmatrix}~,
    \end{eqsp}
where 
\begin{eqsp}
 \gamma_N=\left(\begin{array}{cc}
0 & -1 / \sqrt{N} \\
\sqrt{N} & 0
\end{array}\right)~.   
\end{eqsp}
Moreover, we need to introduce the corresponding Heaviside functions for these two terms. 
Next, we must also identify the contributions 
\begin{eqsp}
    \begin{pmatrix}
        a&b\\c&d
    \end{pmatrix}\quad \text{and}\quad \left(\begin{array}{cc}
1 & -r \\
0 & 1
\end{array}\right)\left(\begin{array}{cc}
1 & 0 \\
r N & 1
\end{array}\right)\begin{pmatrix}
        a&b\\c&d
    \end{pmatrix}\quad  \text{and}\quad \left(\begin{array}{cc}
1 & 0 \\
-r N & 1
\end{array}\right)\left(\begin{array}{ll}
1 & r \\
0 & 1
\end{array}\right)\begin{pmatrix}
        a&b\\c&d
    \end{pmatrix}~, 
\end{eqsp}
where we do not need any factors of $\gamma_N$ because these transformations do not change the sign of $\bM'\cdot\bN'$. 
Note that
\begin{eqsp}
\left(\begin{array}{cc}
1 & 0 \\
r N & 1
\end{array}\right)&=-\gamma_N\left(\begin{array}{cc}
1 & -r \\
0 & 1
\end{array}\right) \gamma_N~.  
\end{eqsp}
Thus we have 
\begin{eqsp}
\left(\begin{array}{cc}
1 & -r \\
0 & 1
\end{array}\right)\left(\begin{array}{cc}
1 & 0 \\
r N & 1
\end{array}\right)&=-\left(\gamma_N\left(\begin{array}{cc}
1 & 0 \\
r N & 1
\end{array}\right)\right)^2~,
\\
\left(\begin{array}{cc}
1 & 0 \\
-r N & 1
\end{array}\right)\left(\begin{array}{ll}
1 & r \\
0 & 1
\end{array}\right)&=-\left(\gamma_N\left(\begin{array}{cc}
1 & -r \\
0 & 1
\end{array}\right)\right)^2~.
\end{eqsp}
Now noting that 
\begin{eqsp}
  \gamma_N\begin{pmatrix}
        1&0\\-rN&1
    \end{pmatrix}&=-\left(\gamma_N\begin{pmatrix}
        1&-r\\0&1
    \end{pmatrix}\right)^{-1}~,  
\end{eqsp}
we see that BSM amounts to identifications
\begin{eqsp}
    \begin{pmatrix}
        a&b\\c&d
    \end{pmatrix}\quad \text{and}\quad g_{r,N}^n\begin{pmatrix}
        a&b\\c&d
    \end{pmatrix}~,\quad g_{r,N}:=\gamma_N\begin{pmatrix}
        1&-r\\0&1
    \end{pmatrix}=\begin{pmatrix} 0 & -1/\sqrt{N}\\\sqrt{N}& -r\sqrt{N}\end{pmatrix},\quad n\in\IZ~.
\end{eqsp}
\end{enumerate}
\bibliography{main.bib}

\end{document}